\definecolor{black}{rgb}{0,0,0}
\definecolor{grey}{rgb}{0.8,0.8,0.8}
\definecolor{red}{rgb}{1,0,0}
\definecolor{green}{rgb}{0,1,0}
\definecolor{darkgreen}{rgb}{0,0.5,0}
\definecolor{darkpurple}{rgb}{0.5,0,0.5}
\definecolor{darkdarkpurple}{rgb}{0.3,0,0.3}
\definecolor{blue}{rgb}{0,0,1}
\definecolor{shadegreen}{rgb}{0.95,1,0.95}
\definecolor{shadeblue}{rgb}{0.95,0.95,1}
\definecolor{shadered}{rgb}{1,0.85,0.85}
\definecolor{shadegrey}{rgb}{0.85,0.85,0.85}
\definecolor{oddRowGrey}{rgb}{0.80,0.80,0.80}
\definecolor{evenRowGrey}{rgb}{0.85,0.85,0.85}
\newcommand{\BG}[1]{\todo[inline]{\textbf{Boris says:$\,$} #1}}
\newcommand{\detailedproof}[2]{\ifbool{ShowDetailedProofs}{
\begin{proof}
#2
\end{proof}
}{\noindent\textit{Proof Sketch:}#1\qed}\smallskip}
\newcommand{\ifnotdetailedproof}[1]{\ifbool{ShowDetailedProofs}{}{#1}}
\newcommand{\ifnottechreport}[1]{\ifbool{ShowDetailedProofs}{}{#1}}
\newrobustcmd{\iftechreport}[1]{\ifbool{ShowDetailedProofs}{#1}{}}
\newcommand{\ProofPointerToAppendix}{
  \vspace*{-3.5mm}
  \begin{proof} See \Cref{sec:proofs} \end{proof}
}
\newcommand{\ProofPointerToAppendixWithSketch}[1]{
  \vspace*{-1mm}
  \noindent \textsc{Proof Sketch.} #1 \hfill \textit{(See \Cref{sec:proofs} for full proof)}
}
\newcommand{\mypar}[1]{\smallskip\noindent\textbf{{#1}.}}
\newcommand{\bfcaption}[1]{\caption{#1}}
\newcommand{\trimfigurespacing}{\vspace*{-5mm}}
\DeclareMathAlphabet{\mathbbold}{U}{bbold}{m}{n}
\newtheorem{Theorem}{Theorem}
\newtheorem{Definition}{Definition}
\newtheorem{Lemma}{Lemma}
\newtheorem{Proposition}{Proposition}
\newtheorem{Corollary}{Corollary}
\newtheorem{Example}{Example}
\newcommand{\proofpara}[1]{\medskip\noindent\underline{{#1}:}}
\newcommand{\mathtext}[1]{\ensuremath{\,\text{#1}\,}}
\newcommand{\card}[1]{\vert{#1}\vert}
\newcommand{\defas}{\stackrel{def}{=}}
\DeclareMathAlphabet{\mathbbold}{U}{bbold}{m}{n}
\newcommand{\schema}{\mathbf{D}}
\newcommand{\relschema}{\mathbf{R}}
\newcommand{\schemaOf}{\textsc{Sch}}
\newcommand{\arity}[1]{arity({#1})}
\newcommand{\db}{D}
\newcommand{\rel}{R}
\newcommand{\query}{Q}
\newcommand{\qClass}{\mathcal{C}}
\newcommand{\tup}{t}
\newcommand{\aDom}{\mathbb{D}}
\newcommand{\tupDom}{\textsc{TupDom}}
\newcommand{\tuple}[1]{\left<\;\textsf{\upshape #1}\;\right>}
\newcommand{\simpletuple}[1]{\left<\;#1\;\right>}
\newcommand{\comprehension}[2]{\left\{\;#1\;|\;#2\;\right\}}
\newcommand{\selection}{\sigma}
\newcommand{\projection}{\pi}
\newcommand{\join}{\Join}
\newcommand{\union}{\cup}
\newcommand{\raPlus}{\mathcal{RA}^{+}}
\newcommand{\abbrUADB}{UA-DB\xspace}
\newcommand{\abbrUADBs}{UA-DBs\xspace}
\newcommand{\abbrTI}{TI-DB\xspace}
\newcommand{\abbrTIs}{TI-DBs\xspace}
\newcommand{\termTIs}{tuple-independent databases\xspace}
\newcommand{\abbrCoddTable}{Codd-table\xspace}
\newcommand{\abbrCoddTables}{Codd-tables\xspace}
\newcommand{\abbrVtables}{V-tables\xspace}
\newcommand{\abbrCtable}{C-table\xspace}
\newcommand{\abbrCtables}{C-tables\xspace}
\newcommand{\abbrPCtable}{PC-table\xspace}
\newcommand{\abbrPCtables}{PC-tables\xspace}
\newcommand{\abbrVCtables}{Virtual C-tables\xspace}
\newcommand{\abbrXDB}{x-DB\xspace}
\newcommand{\abbrXDBs}{x-DBs\xspace}
\newcommand{\termBG}{best-guess\xspace}
\newcommand{\termBGQP}{best-guess query processing\xspace}
\newcommand{\termBGW}{best-guess world\xspace}
\newcommand{\termBGWs}{best-guess worlds\xspace}
\newcommand{\termBGA}{best-guess answers\xspace}
\newcommand{\abbrBGW}{BGW\xspace}
\newcommand{\abbrBGQP}{BGQP\xspace}
\newcommand{\ptime}{\texttt{PTIME}\xspace}
\newcommand{\conpcomplete}{coNP-complete\xspace}
\newcommand{\conphard}{coNP-hard\xspace}
\newcommand{\sharpP}{\texttt{\#P}\xspace}
\newcommand{\cSet}{certain}
\newcommand{\pSet}{possible}
\newcommand{\pdb}{\mathcal{D}}
\newcommand{\uadb}{\db_{UA}}
\newcommand{\bgdb}{D_{bg}}
\newcommand{\prob}{P}
\newcommand{\TUL}{\mathcal{L}}
\newcommand{\doTUL}{\textsf{label}}
\newcommand{\doTULTI}{\doTUL_{\textsc{\abbrTI}}}
\newcommand{\doTULC}{\doTUL_{\textsc{\abbrCtable}}}
\newcommand{\doTULX}{\doTUL_{\textsc{\abbrXDB}}}
\newcommand{\xTup}{\tau}
\newcommand{\xAttr}{X_{id}}
\newcommand{\altAttr}{Alt_{id}}
\newcommand{\pAttr}{P}
\newcommand{\lcAttr}{LC}
\newcommand{\semN}{\mathbb{N}}
\newcommand{\semB}{\mathbb{B}}
\newcommand{\semAC}{\mathbb{A}}
\newcommand{\semK}{\mathcal{K}}
\newcommand{\onesymbol}{\mathbbold{1}}
\newcommand{\zerosymbol}{\mathbbold{0}}
\newcommand{\multsymb}{\otimes}
\newcommand{\addsymbol}{\oplus}
\newcommand{\monsymbol}{\ominus}
\newcommand{\kDom}{K}
\newcommand{\addK}{\addsymbol_{\semK}}
\newcommand{\multK}{\multsymb_{\semK}}
\newcommand{\monK}{\monsymbol_{\semK}}
\newcommand{\oneK}{\onesymbol_{\semK}}
\newcommand{\zeroK}{\zerosymbol_{\semK}}
\newcommand{\addOf}[1]{\addsymbol_{#1}}
\newcommand{\multOf}[1]{\multsymb_{#1}}
\newcommand{\oneOf}[1]{\onesymbol_{#1}}
\newcommand{\zeroOf}[1]{\zerosymbol_{#1}}
\newcommand{\dbDomK}[1]{\mathcal{DB}_{#1}}
\newcommand{\ordersymbol}{\preceq}
\newcommand{\geqsymbol}{\succeq}
\newcommand{\ordK}{\ordersymbol_{\semK}}
\newcommand{\geqK}{\geqsymbol_{\semK}}
\newcommand{\ordOf}[1]{\ordersymbol_{#1}}
\newcommand{\lub}{\sqcup}
\newcommand{\glb}{\sqcap}
\newcommand{\glbKof}[1]{\sqcap_{#1}}
\newcommand{\glbK}{\glbKof{\semK}}
\newcommand{\lubKof}[1]{\sqcup_{#1}}
\newcommand{\lubK}{\lubKof{\semK}}
\newcommand{\LubK}{\sqcup_{\semK}}
\newcommand{\GlbK}{\sqcap_{\semK}}
\newcommand{\setK}{\vec k}
\newcommand{\dpK}[2]{{#1}^{#2}}
\newcommand{\dpDom}[2]{{#1}^{#2}}
\newcommand{\doubleK}[1]{\dpK{#1}{2}}
\newcommand{\aDoubleK}{\doubleK{\semK}}
\newcommand{\doubleDom}[1]{\dpDom{#1}{2}}
\newcommand{\uaName}{UA}
\newcommand{\uaK}[1]{{#1}_{\uaName}}
\newcommand{\uaAdd}[1]{\addOf{\uaK{#1}}}
\newcommand{\uaMult}[1]{\multOf{\uaK{#1}}}
\newcommand{\uae}[2]{[#2,#1]}
\newcommand{\pwIndicator}{W}
\newcommand{\pwDom}{W}
\newcommand{\pwkDom}{\kDom^{\pwDom}}
\newcommand{\pwK}{\semK_{\pwIndicator}}
\newcommand{\pwKof}[1]{{#1}_{\pwIndicator}}
\newcommand{\pwAdd}{{\addOf{\pwK}}}
\newcommand{\pwMult}{{\multOf{\pwK}}}
\newcommand{\pwZero}{{\zeroOf{\pwK}}}
\newcommand{\pwOne}{{\oneOf{\pwK}}}
\newcommand{\pwWorld}[1]{\textsc{pw}_{{#1}}}
\newcommand{\pwe}[1]{\vec{{#1}}}
\newcommand{\certainName}{\textsc{cert}}
\newcommand{\possibleName}{\textsc{poss}}
\newcommand{\pwCertain}{{\certainName}_{\semK}}
\newcommand{\pwPossible}{{\possibleName}_{\semK}}
\newcommand{\pwCertOf}[1]{{\certainName}_{#1}}
\newcommand{\pwCard}{n}
\newcommand{\cOfName}{cert}
\newcommand{\cOf}[1]{h_{\cOfName}({#1})}
\newcommand{\dOfName}{det}
\newcommand{\dOf}[1]{h_{\dOfName}({#1})}
\newcommand{\bagEnc}{\textsc{Enc}}
\newcommand{\rewrN}[1]{\llbracket\;#1\; \rrbracket_{UA}}
\newcommand{\cAttr}{C}
\newcommand{\uAttr}{U}
\DeclareMathSymbol{\mlq}{\mathord}{operators}{``}
\DeclareMathSymbol{\mrq}{\mathord}{operators}{`'}
\DeclareMathOperator*{\argmax}{argmax}
\def\BibTeX{{\rm B\kern-.05em{\sc i\kern-.025em b}\kern-.08em
    T\kern-.1667em\lower.7ex\hbox{E}\kern-.125emX}}
\title{Uncertainty Annotated Databases - A Lightweight Approach for Approximating Certain Answers}
\definecolor{revgreen}{rgb}{0,0.5,0}
\newrobustcmd{\reva}[1]{#1}
\newrobustcmd{\revb}[1]{#1}
\newrobustcmd{\revc}[1]{#1}
\newrobustcmd{\revm}[1]{#1}
\author{}
\begin{abstract}
Certain answers are a principled method for coping with uncertainty that arises in many practical data management tasks.
Unfortunately, this method is expensive and may exclude useful (if uncertain) answers. %computing certain answers is expensive and .
Thus, users frequently resort to less principled approaches % like making educated guesses about how
to resolve the uncertainty. % (often without documentation).
In this paper, we propose \emph{Uncertainty Annotated Databases} (\abbrUADBs), which combine an under- and over-approximation of certain answers to achieve the reliability of certain answers, with the performance of a classical database system.
Furthermore, in contrast to prior work on certain answers, \abbrUADBs achieve a higher utility by including some (explicitly marked) answers that are not certain.
\abbrUADBs are based on incomplete K-relations, which we introduce to  generalize the classical set-based notions of incomplete databases and certain answers to a much larger class of data models. % and query semantics.
Using an implementation of our approach, we
% We implement our approach on top of a DBMS and
demonstrate experimentally  that it efficiently produces
tight approximations of certain answers that are of high utility.
%results that are tight approximations of certain answers and are often of higher utility than certain answers.
\end{abstract}
\definecolor{lstpurple}{rgb}{0.5,0,0.5}
\definecolor{lstred}{rgb}{1,0,0}
\definecolor{lstreddark}{rgb}{0.7,0,0}
\definecolor{lstredl}{rgb}{0.64,0.08,0.08}
\definecolor{lstmildblue}{rgb}{0.66,0.72,0.78}
\definecolor{lstblue}{rgb}{0,0,1}
\definecolor{lstmildgreen}{rgb}{0.42,0.53,0.39}
\definecolor{lstgreen}{rgb}{0,0.5,0}
\definecolor{lstorangedark}{rgb}{0.6,0.3,0}
\definecolor{lstorange}{rgb}{0.75,0.52,0.005}
\definecolor{lstorangelight}{rgb}{0.89,0.81,0.67}
\definecolor{lstbeige}{rgb}{0.90,0.86,0.45}
\DeclareFontShape{OT1}{cmtt}{bx}{n}{<5><6><7><8><9><10><10.95><12><14.4><17.28><20.74><24.88>cmttb10}{}
\lstdefinestyle{psql}
{
tabsize=2,
basicstyle=\small\upshape\ttfamily,
language=SQL,
morekeywords={PROVENANCE,BASERELATION,INFLUENCE,COPY,ON,TRANSPROV,TRANSSQL,TRANSXML,CONTRIBUTION,COMPLETE,TRANSITIVE,NONTRANSITIVE,EXPLAIN,SQLTEXT,GRAPH,IS,ANNOT,THIS,XSLT,MAPPROV,cxpath,OF,TRANSACTION,SERIALIZABLE,COMMITTED,INSERT,INTO,WITH,SCN,UPDATED,LENS,SCHEMA_MATCHING,string,WINDOW,max,OVER,PARTITION,FIRST_VALUE,WITH},
extendedchars=false,
keywordstyle=\bfseries,
mathescape=true,
escapechar=@,
sensitive=true
}
\lstdefinestyle{psqlcolor}
{
tabsize=2,
basicstyle=\small\upshape\ttfamily,
language=SQL,
morekeywords={PROVENANCE,BASERELATION,INFLUENCE,COPY,ON,TRANSPROV,TRANSSQL,TRANSXML,CONTRIBUTION,COMPLETE,TRANSITIVE,NONTRANSITIVE,EXPLAIN,SQLTEXT,GRAPH,IS,ANNOT,THIS,XSLT,MAPPROV,cxpath,OF,TRANSACTION,SERIALIZABLE,COMMITTED,INSERT,INTO,WITH,SCN,UPDATED},
extendedchars=false,
keywordstyle=\bfseries\color{lstpurple},
deletekeywords={count,min,max,avg,sum},
keywords=[2]{count,min,max,avg,sum},
keywordstyle=[2]\color{lstblue},
stringstyle=\color{lstreddark},
commentstyle=\color{lstgreen},
mathescape=true,
escapechar=@,
sensitive=true
}
\lstdefinestyle{datalog}
{
basicstyle=\footnotesize\upshape\ttfamily,
language=prolog
}
\lstdefinestyle{pseudocode}
{
  tabsize=3,
  basicstyle=\small,
  language=c,
  morekeywords={if,else,foreach,case,return,in,or},
  extendedchars=true,
  mathescape=true,
  literate={:=}{{$\gets$}}1 {<=}{{$\leq$}}1 {!=}{{$\neq$}}1 {append}{{$\listconcat$}}1 {calP}{{$\cal P$}}{2},
  keywordstyle=\color{lstpurple},
  escapechar=&,
  numbers=left,
  numberstyle=\color{lstgreen}\small\bfseries,
  stepnumber=1,
  numbersep=5pt,
}
\lstdefinestyle{xmlstyle}
{
  tabsize=3,
  basicstyle=\small\upshape\ttfamily,
  language=xml,
  extendedchars=true,
  mathescape=true,
  escapechar=£,
  tagstyle=\bfseries,
  usekeywordsintag=true,
  morekeywords={alias,name,id},
  keywordstyle=\color{lstred}
}
\lstdefinestyle{xmlstyle-color}
{
  tabsize=3,
  basicstyle=\small\upshape\ttfamily,
  language=xml,
  extendedchars=true,
  mathescape=true,
  escapechar=£,
  tagstyle=\color{keywordpurple},
  usekeywordsintag=true,
  morekeywords={alias,name,id},
  keywordstyle=\color{lstred}
}
\keywords{uncertain data, incomplete data, annotations}
\begin{document}

\title{Uncertainty Annotated Databases - A Lightweight Approach for Approximating Certain Answers\\
(extended version)}

\author{Su Feng}
\affiliation{%
  \institution{Illinois Institute of Technology}
}
\email{sfeng14@hawk.iit.edu}

\author{Aaron Huber}
\affiliation{%
  \institution{University at Buffalo}
}
\email{ahuber@buffalo.edu}

\author{Boris Glavic}
\affiliation{%
  \institution{Illinois Institute of Technology}
}
\email{bglavic@iit.edu}

\author{Oliver Kennedy}
\affiliation{%
  \institution{University at Buffalo}
}
\email{okennedy@buffalo.edu}

\lstset{style=psql}

%%%%%%%%%%%%%%%%%%%%%%%%%%%%%%%%%%%%%%%%%%%%%%%%%%%%%%%%%%%%
% TECH REPORT TITLE PAGE
%\input{tr/techreport_page.tex}

\maketitle

%%%%%%%%%%%%%%%%%%%%%%%%%%%%%%%%%%%%%%%%
%%%%%%%%%%%%%%%%%%%%%%%%%%%%%%%%%%%%%%%%%%%%%%%%%%%%%%%%%%%%%%%%%%%%%%%%%%%%%%%%
\section{Introduction}
\label{sec:intro}
%\BG{My feeling is that eventually we will have to include a running example into the intro again, otherwise this will be hard to follow, e.g., we could have an example data and query version of our overview figure}

Data uncertainty arises naturally in applications like sensing~\cite{DBLP:conf/icde/LetchnerRBP09},
%crowd-sourcing~\cite{marcus2011crowdsourced},
data exchange~\cite{Fagin:2011:PDE:1989727.1989729}, distributed computing~\cite{DBLP:conf/sigmod/LangNRN14}, % constraint-based
data cleaning~\cite{Chu:2015:KDC:2723372.2749431},
%entity resolution~\cite{DBLP:conf/sigmod/JefferyFH08},
and many others.
%\iftechreport{At present, users spend significant effort to  curate their data upfront % ~\cite{DBLP:conf/chi/KandelPHH11}
%to avoid having to deal with uncertainty during analysis.}
%For example, noise and calibration errors make sensor readings imprecise.
 % that they can analyze their data without worrying about uncertainty.
Incomplete~\cite{DBLP:journals/jacm/ImielinskiL84} and probabilistic databases~\cite{suciu2011probabilistic} have emerged as a principled way to deal with uncertainty.
Both types of databases consist of a set of deterministic instances called \textit{possible worlds} that represent possible interpretations of  data available about the real world.
An often cited, conservative approach to uncertainty is to consider only \emph{certain answers}~\cite{AK91,DBLP:journals/jacm/ImielinskiL84} (answers in all possible worlds). % (i.e., regardless of how the available data is interpreted).
However, this approach has two problems.
First, computing certain answers is expensive\footnote{\conpcomplete~\cite{AK91,DBLP:journals/jacm/ImielinskiL84} (data complexity) for first-order queries over V-tables~\cite{DBLP:journals/jacm/ImielinskiL84}, as well as for conjunctive queries for, % more complex uncertain data models
e.g., OR-databases~\cite{DBLP:journals/jcss/ImielinskiMV95}.}.
Furthermore, requiring answers to be certain
may unnecessarily exclude useful, possible answers.
% can unnecessarily exclude potentially useful answers.
Thus, users instead resort to what we term \textit{\termBGQP} (\textit{\abbrBGQP}): making an educated guess about which possible world to use (i.e., how to interpret available data) and then working exclusively with this world.
\abbrBGQP is more efficient than certain answers, and generally includes more useful results.
However, information about uncertainty in the data is lost, and all query results produced by \abbrBGQP are consequently suspect.
%derived from this data are suspect.

Previous work has also explored approximations of certain answers~\cite{L16a,GP17,R86}.
Under the premise that missing a certain answer is better than incorrectly reporting an answer as certain, such work focuses % exclusively
on under\hyp{}approximating certain answers.
\revm{This addresses the performance problem, but under\hyp{}approximations only exacerbate the problem of excluded results.
Worse, these techniques are limited to specific uncertain data models such as V-tables, % or Codd-tables
and with the % notable
exception of a brief discussion in~\cite{GL17}, only support set semantics.}

%%%%%%%%%%%%%%%%%%%%%%%%%%%%%%%%%%%%%%%%
% \begin{figure}[t]
%   \centering
% {\tiny
%   \begin{tabular}{p{2cm}|p{1cm}|p{1cm}|p{1cm}|p{1cm}|p{1cm}}
%     &\textbf{Best Guess Query Processing} & \textbf{Incomplete \& Probabilistic Query Processing} & \textbf{Certain Answers} & \textbf{Approximating Certain Answers} & \textbf{\abbrUADBs} \\ \hline
%     \textbf{Certain answers?} & Upper Bound & All & All & Lower Bound & Lower \& Upper Bound\\
%     \textbf{Computational Complexity} & Low & Very High  & High & Low & Low \\
%     \textbf{Supported Type of Relations} & Sets, Bags & Sets\footnote{Almost all approaches support only set semantics} & Sets (~\cite{GL17} supports bags) & Sets (~\cite{GL17} supports bags) & K-relations\footnote{To be precise, we support a particular class of semirings called L-semirings~\cite{DBLP:conf/icdt/KostylevB12} for which we can define a well-behaved greatest lower bound operation.} \\
%     \textbf{Supports multiple uncertain models} & \checkmark & X & X\footnote{Most approaches only support one model} & X & \checkmark \\
%     \textbf{Closed} &     \checkmark & \checkmark\footnote{Only for some models, e.g., TIPs~\cite{suciu2011probabilistic} are not closed} & X & \checkmark & \checkmark\\
%     \end {tabular}
%   }
%   \bfcaption{Comparing methods for dealing with uncertainty}
%   \label{fig:feature-comparison}
%   \trimfigurespacing
% \end{figure}
%%%%%%%%%%%%%%%%%%%%%%%%%%%%%%%%%%%%%%%%

%%%%%%%%%%%%%%%%%%%%%%%%%%%%%%%%%%%%%%%%
\begin{figure}[t]
  \centering
  \includegraphics[width=0.7\linewidth]{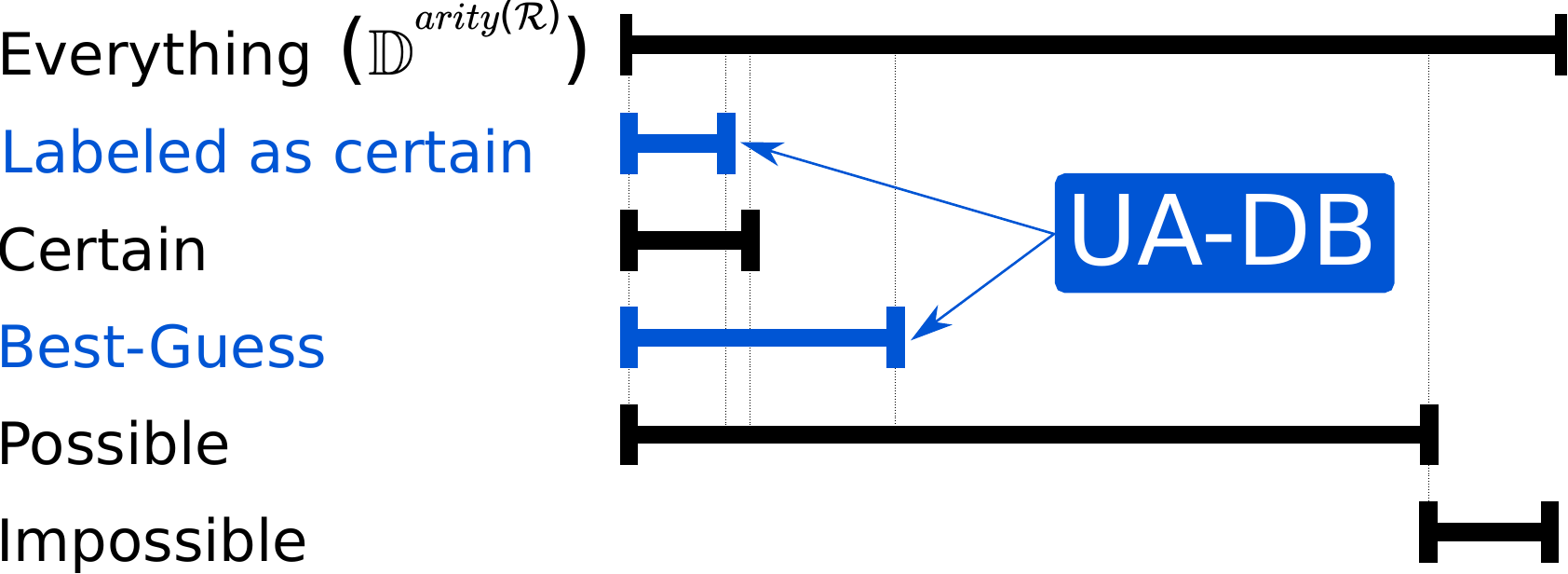}
  \caption{\abbrUADBs provide both an under- and over-approximation of certain answers.}
  \label{fig:relationship-of-UADB}
  \trimfigurespacing
\end{figure}
%%%%%%%%%%%%%%%%%%%%%%%%%%%%%%%%%%%%%%%%

%%%%%%%%%%%%%%%%%%%%%%%%%%%%%%%%%%%%%%%%
\begin{figure*}[t]
    \centering\small
    \begin{subfigure}{0.49\textwidth}
      \centering
      \textbf{\underline{ADDR}}\\
      \begin{tabular}{ccl}
        \textbf{id} & \textbf{address} & \textbf{geocoded} \\ \hline \\[-2mm]
        1 & 51 Comstock & (42.93, -78.81)\\
        2 & Grant at Ferguson & (42.91, -78.89) \textbf{or} (32.25, -110.87)\\
        3 & 499 Woodlawn & (42.91, -78.84) \textbf{or} (42.90, -78.85)\\
        4 & 192 Davidson & (42.93, -78.80) \\
      \end{tabular}
    \end{subfigure}
    \hspace{10mm}
    \begin{subfigure}{0.4\textwidth}
      \centering
      \textbf{\underline{LOC}}\\[-2mm]
      \begin{tabular}{ccc}
        \textbf{locale} & \textbf{state} & \textbf{rect} \\ \hline
        Lasalle & NY     & {\footnotesize ((42.93, -78.83), (42.95, -78.81))   }\\
        Tucson & AZ      & {\footnotesize ((31.99, -111.045), (32.32, -110.71))}\\
        Grant Ferry & NY & {\footnotesize ((42.91, -78.91), (42.92, -78.88))   }\\
        Kingsley & NY    & {\footnotesize ((42.90, -78.85), (42.91, -78.84))   }\\
        Kensington & NY  & {\footnotesize ((42.93, -78.81), (42.96, -78.78))   }\\
      \end{tabular}
    \end{subfigure}
    \vspace{-4mm}
    \caption{\revc{Input data for \Cref{ex:geocoding}. Tuples 2 and 3 of Table ADDR have uncertain geocoded values.}}
    \label{fig:geocoding}
  \trimfigurespacing
\end{figure*}
%%%%%%%%%%%%%%%%%%%%%%%%%%%%%%%%%%%%%%%%

%%%%%%%%%%%%%%%%%%%%%%%%%%%%%%%%%%%%%%%%
\begin{figure*}
  %\centering
  % \begin{subfigure}{0.34\textwidth}
  %   \centering\small
  %   \begin{tabular}{ccc}
  %     \textbf{locale} & \textbf{state} & \textbf{rect} \\ \hline
  %     Lasalle & NY     & {\footnotesize ((42.93, -78.83), (42.95, -78.81))   }\\
  %     Tucson & AZ      & {\footnotesize ((31.99, -111.045), (32.32, -110.71))}\\
  %     Grant Ferry & NY & {\footnotesize ((42.91, -78.91), (42.92, -78.88))   }\\
  %     Kingsley & NY    & {\footnotesize ((42.90, -78.85), (42.91, -78.84))   }\\
  %     Kensington & NY  & {\footnotesize ((42.93, -78.81), (42.96, -78.78))   }\\
  %   \end{tabular}
  %   \bfcaption{The example \textbf{LOC} relation.}
  %   \label{fig:example:locale}
  % \end{subfigure}
  \begin{subfigure}{0.22\textwidth}
    \centering\small
    \begin{tabular}{ccc}
      \textbf{id} & \textbf{locale} & \textbf{state} \\ \hline
      1 & Lasalle & NY \\
      2 & Tucson & AZ \\
      3 & Kingsley & NY \\
      4 & Kensington & NY \\
    \end{tabular}\\[-2mm]
    \bfcaption{One Possible World}
    \label{fig:example:possworld}
  \end{subfigure}
  \begin{subfigure}{0.2\textwidth}
    \centering\small
    \begin{tabular}{ccc}
    \textbf{id} & \textbf{locale} & \textbf{state} \\ \hline
    1 & Lasalle & NY \\
    3 & Kingsley & NY \\
    4 & Kensington & NY \\
    \end{tabular}\\[-2mm]
    \bfcaption{Certain Answers}
    \label{fig:example:certain}
  \end{subfigure}
      \begin{subfigure}{0.2\textwidth}
    \centering\small
    \begin{tabular}{ccc}
    \textbf{id} & \textbf{locale} & \textbf{state} \\ \hline
    1 & Lasalle & NY \\
  2 & Tucson & AZ \\
  2 & Grant Ferry & NY \\
      3 & Kingsley & NY \\
    4 & Kensington & NY \\
    \end{tabular}\\[-2mm]
    \bfcaption{Possible Answers}
    \label{fig:example:possible}
  \end{subfigure}
  \begin{subfigure}{0.32\textwidth}
    \centering\small
    \begin{tabular}{ccc|c}
      \textbf{id} & \textbf{locale} & \textbf{state} & \textbf{Certain?} \\ \hline % \textbf{certain?} \\ \hline
\cellcolor{shadegreen}      1 & \cellcolor{shadegreen}Lasalle & \cellcolor{shadegreen}NY & $true$\\
\cellcolor{shadegreen}      2 & \cellcolor{shadegreen}Tucson & \cellcolor{shadegreen}AZ & $false$\\
\cellcolor{shadered}      3 & \cellcolor{shadered}Kingsley & \cellcolor{shadered}NY & $false$\\
\cellcolor{shadegreen}      4 & \cellcolor{shadegreen}Kensington & \cellcolor{shadegreen}NY & $true$ \\
%       \textbf{id} & \textbf{locale} & \textbf{state} & \textbf{Annotation} \\ \hline % \textbf{certain?} \\ \hline
% \cellcolor{shadegreen}      1 & \cellcolor{shadegreen}Lasalle & \cellcolor{shadegreen}NY & $[true, true]$\\
% \cellcolor{shadegreen}      2 & \cellcolor{shadegreen}Tucson & \cellcolor{shadegreen}AZ & $[false, true]$\\
% \cellcolor{shadered}      3 & \cellcolor{shadered}Kingsley & \cellcolor{shadered}NY & $[false, true]$\\
% \cellcolor{shadegreen}      4 & \cellcolor{shadegreen}Kensington & \cellcolor{shadegreen}NY & $[true, true]$ \\
    \end{tabular}\\[-2mm]
    \bfcaption{Uncertainty Annotated Database}
    \label{fig:example:uadb}
  \end{subfigure}
  \vspace*{-4mm}
  \bfcaption{\revc{Examples of query results under different evaluation semantics over uncertain data.}}
  \label{fig:example}
  \trimfigurespacing\vspace{1mm}
\end{figure*}
%%%%%%%%%%%%%%%%%%%%%%%%%%%%%%%%%%%%%%%%

%%%%%%%%%%%%%%%%%%%%%%%%%%%%%%%%%%%%%%%%
\begin{Example} \label{ex:geocoding}
  Geocoders translate natural language descriptions of locations into % geospatial
  coordinates (i.e., latitude and longitude).
  \revc{
    Consider the \textbf{ADDR} and \textbf{LOC} relations in \Cref{fig:geocoding}.
    Tuples 2 and 3 of \textbf{ADDR} each have an ambiguous geocoding.
  }
  This is an x-table~\cite{DBLP:conf/vldb/AgrawalBSHNSW06}, a type of incomplete data model where each tuple may have multiple alternatives. Each possible world is defined by some combination of alternatives (e.g., \textbf{ADDR} encodes 4 possible worlds).
  An analyst might use a spatial join with a lookup table (\textbf{LOC}) to map coordinates to geographic regions.
  \revc{Figure~\ref{fig:example:possworld} shows the result of the following query in one world.}
\begin{lstlisting}
SELECT a.id, l.locale, l.state
FROM ADDR a, LOC l
WHERE contains(l.rect, a.geocoded)
\end{lstlisting}
  \revc{
  The certain answers to this query (Figure~\ref{fig:example:certain}) are tuples that appear in the result, regardless of which world is queried.
  Figure~\ref{fig:example:possible} shows all possible answers that could be returned for some choice of geocodings.
  Note also that ambiguous answers (e.g., address 2) may not be certain, but may still be useful.}
\end{Example}
%%%%%%%%%%%%%%%%%%%%%%%%%%%%%%%%%%%%%%%%

Ideally, we would like an approach that (1) generalizes to a wide range of data models, (2) is easy to use like \abbrBGQP, \revm{(3) is compatible with a wide of probabilistic and incomplete data representations (e.g., \termTIs{}~\cite{suciu2011probabilistic}, \abbrCtables{}~\cite{DBLP:journals/jacm/ImielinskiL84}, and \abbrXDBs{}~\cite{DBLP:conf/vldb/AgrawalBSHNSW06}) and sources of uncertainty (e.g., inconsistent databases~\cite{DBLP:conf/pods/KoutrisW18,DBLP:journals/pvldb/KolaitisPT13,DBLP:journals/ipl/KolaitisP12,DBLP:series/synthesis/2011Bertossi,FM05,DBLP:conf/pods/ArenasBC99}, imputation of missing values, and more)}, and \revm{(4)} is principled like certain answers.
We address the generality requirement (1) by rethinking incomplete data management in terms of Green et. al.'s $\semK$-database framework~\cite{Green:2007:PS:1265530.1265535}.
In this framework, each tuple is annotated with an value from a semiring $\semK$.
Choosing an appropriate  semiring, $\semK$-databases can encode a wide range of query processing semantics including classical set- and bag-semantics, as well as query processing with access control, provenance, and more.
Our primary contribution here is to identify a natural, backwards-compatible generalization of certain answers to a broad class of $\semK$-databases.

% First off, we introduce incomplete $\semK$-databases as a generalization of incomplete databases. In an incomplete $\semK$-database, each possible world is a $\semK$-database~\cite{Green:2007:PS:1265530.1265535}, i.e., a database where each tuple is annotated with an annotation from a domain $\semK$. Though an appropriate choice of annotation domain, incomplete $\semK$-databases support set semantics, bag semantics, fine-grained access control, various types of provenance, and any other type of annotation from a large class of semirings. In an incomplete $\semK$-database, the annotations of a tuple may differ across the  possible worlds. Thus, % the classical concept of the
% defining the certain answers as the set of tuples which appear in every possible worlds is not sensible in this setting. As our first contribution, we define certain annotations as a natural, backwards compatible, generalization of certain answers to incomplete $\semK$-databases. Using an appropriate order, the certain annotation of a tuple is the greatest lower bound of all annotations of the tuple across all possible worlds. For instance, in a bag semantics database the annotation of a tuple encodes its multiplicity and the certain annotation of a tuple is the minimal multiplicity of the tuple across all possible worlds.

Our second major contribution is to combine an under\hyp{}approximation of certain answers with \termBGQP to create an \textit{Uncertainty-Annotated Database} (\textit{\abbrUADB}).
%Figure~\ref{fig:relationship-of-UADB} compares \abbrUADBs with other methods for dealing with uncertainty.
A \abbrUADB is built around one distinguished possible world of an incomplete $\semK$-database, \revb{for instance the ``\termBG{}'' world that would normally be used in practice.}
This world serves as an \textit{over\hyp{}approximation} of  certain answers.
Tuples from this world are labeled as either certain or uncertain to encode an \textit{under\hyp{}approximation} of  certain answers.
As illustrated in Figure~\ref{fig:relationship-of-UADB}, a UA-DB sandwiches the % set of
certain answers between under- and over-approximations.
A lightweight (extensional~\cite{suciu2011probabilistic}) query evaluation semantics then propagates labels while preserving these approximations.

%%%%%%%%%%%%%%%%%%%%%%%%%%%%%%%%%%%%%%%%
\begin{figure}[t]
  \centering
  \includegraphics[width=1\linewidth]{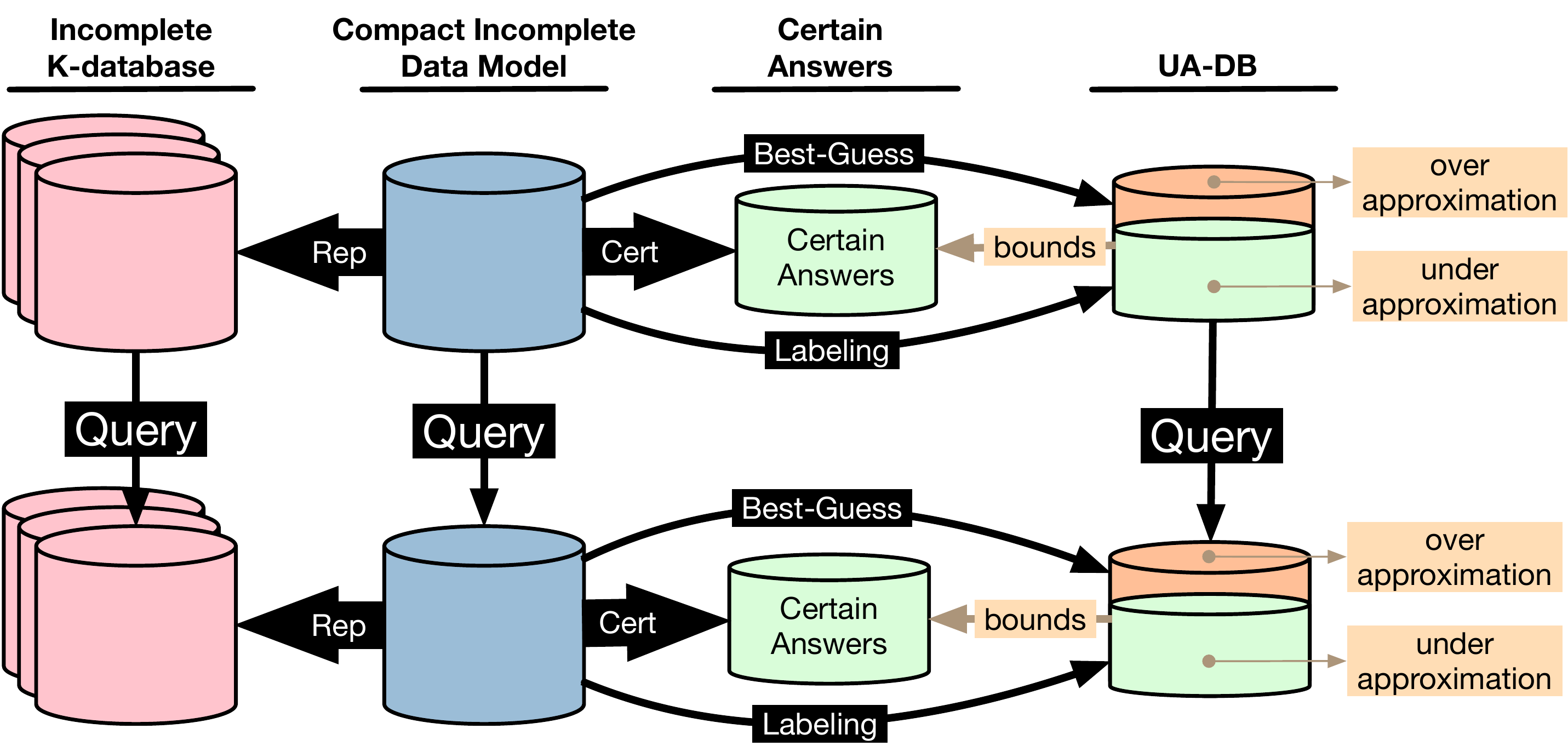}
  \vspace*{-6mm}
  \caption{The relationship between \abbrUADBs, certain answers, and other incomplete data models}
  %  Any existing compact incomplete data model (e.g., c-tables) represents an incomplete $\semK$-database.
  % A \abbrUADB{} is derived from such a model by extracting a possible world (\termBG{}) which acts as an over-approximation of certain answers and a labeling which encodes an under-approximation. Queries preserve these bounds on the certain answers.
  \label{fig:overview-of-the-UADB}
  \trimfigurespacing
  \vspace{-1mm}
\end{figure}
%%%%%%%%%%%%%%%%%%%%%%%%%%%%%%%%%%%%%%%%

%%%%%%%%%%%%%%%%%%%%%%%%%%%%%%%%%%%%%%%%
\begin{Example}\label{ex:geocoding-uadb}
 Continuing with Example~\ref{ex:geocoding},
 Figure~\ref{fig:example:uadb} shows the result of the same query as a set \abbrUADB.
 When the \abbrUADB is built, one designated possible world of \textbf{ADDR} is selected, \revb{for example the highest ranked option provided by the geocoder.}
 For this example, we select the first option for each ambiguous tuple.
 The result is based on this one designated possible world, which serves as an over-approximation of the certain answers.
 A subset of these tuples (addresses 1 and 4) are explicitly labeled as certain.
 This is the under-approximation: A tuple might still be certain even if it is not labeled as such.
% Oliver:
%   I'm trimming the following, because we don't want to have to define K-relations
%   twice.
% In a set $\semK$-relation (semiring $\semB$) each tuple in either annotated with  \texttt{true} (the tuple exists) or \texttt{false} (the tuple does not exist). Thus, in a set \abbrUADB each tuple is annotated with two boolean values recording whether the tuple belongs to the under-approximation and over-approximation (\termBGW) of certain answers encoded by the \abbrUADB. % and whether the tuple exists in the best guess world.
% % Thus, each tuple carries a single annotation recording the under-approximation of certain answers.
% That is, tuples labeled as $[true,true]$ (e.g., addresses 1 and 4) belong to the under-approximation and, thus, are guaranteed to be certain answers. %, regardless of the ambiguity in \textbf{ADDR}.
% Since the first annotation of a tuple under-approximates certain answers, a tuple labeled as $[false, true]$ (uncertain)  may or may not be a certain answer.
We consider the remaining tuples to be ``uncertain''.
In Figure~\ref{fig:example:uadb}, tuples 1 and 4 (resp., 2) are correctly marked as certain (resp., uncertain), while tuple 3 is mis-classified as uncertain even though it appears in all worlds.
% Since the annotation of a tuple bounds the certain answers, it is possible that some tuples labeled as uncertain ($[false,true]$) may be certain.
% However, tuples labeled as certain ($[true, true]$) are guaranteed to be certain.
We stress that even a mislabeled certain answer is still present: a \abbrUADB sandwiches the certain answers.
% be absent from the result if a different version of \textbf{ADDR} were to be queried (e.g., address 2).
% Since  \abbrUADBs approximation of certain answers and it is possible that some tuples labeled as uncertain might still be unambiguously in the result.
  % However, discounting labels, the query result is identical to a query over a curated dataset.
\end{Example}
%%%%%%%%%%%%%%%%%%%%%%%%%%%%%%%%%%%%%%%%
% In Figure~\ref{fig:example:uadb} we use the convention to only show tuples that exist in the best guess world.
% %%%%%%%%%%%%%%%%%%%%%%%%%%%%%%%%%%%%%%%%
% \begin{figure}[t]
%   \centering
%   \includegraphics[width=0.8\linewidth]{}
%   \bfcaption{\abbrUADBs relative to Deterministic, Incomplete/Probabilistic, and Certain Answer query processing schemes.}
%   \label{fig:overview}
%   \trimfigurespacing
% \end{figure}
% %%%%%%%%%%%%%%%%%%%%%%%%%%%%%%%%%%%%%%%%

Figure~\ref{fig:overview-of-the-UADB} overviews our approach.
\revm{We provide \emph{labeling schemes} that derive a \abbrUADB from common incomplete data models.  The resulting \abbrUADB bounds the certain tuples from above and below, a property preserved through queries.
}
% We show that
\abbrUADBs are both efficient and precise.
We demonstrate efficiency by implementing a bag \abbrUADB as a query-rewriting front-end on top of a classical relational DBMS:
\abbrUADB queries have minimal performance overhead compared to the same queries % run without the rewriter
on deterministic data.
We demonstrate precision both analytically and experimentally.
First, under specific conditions, some of which we identify in \Cref{sec:pres-c-compl}, exactly the certain answers will be marked as certain.
Second, we show experimentally that even when these conditions do not hold, the fraction of misclassified certain answers is % real-world data is
low. % (a few percent on real data).  %lower bound is close (an error rate of a few percent on real data).
Importantly, a wide range of uncertain data models can be translated into \abbrUADBs through simple and efficient transformations \revb{that (i) determine a \termBGW{} (\textit{\abbrBGW{}}) and (ii) obtain an under-approximation of the certain answers.
We define such transformations for three popular models of incomplete data in \Cref{sec:create-labelings}: \termTIs{}~\cite{suciu2011probabilistic},
% (we treat TI-DBs as an incomplete databases with optional tuples),
\abbrXDBs~\cite{DBLP:conf/vldb/AgrawalBSHNSW06} % \footnote{Or their probabilistic counterparts, block-independent databases}
and \abbrCtables~\cite{DBLP:journals/jacm/ImielinskiL84}.
In classical incomplete databases, where probabilities are not available, any possible world can serve as a \abbrBGW{}.
In probabilistic databases (or any incomplete data model that ranks possible worlds), we preferentially use the possible world with the highest probability (if computationally feasible), or an approximation thereof.
We emphasize that our approach does not require enumerating (or even knowing) the full set of possible worlds.
%This is in contrast to most other incomplete and probabilistic data models that have been used in by related work, e.g., a \abbrTI{} enumerates all possible tuples from all possible world.
As long as some possible world can be obtained, our approach is applicable.
In worst case, if no certainty information is available, our approach labels all tuples as uncertain and degrades to classical \termBGQP{}.
}
\revc{Furthermore, our approach is also applicable in use cases like inconsistent query answering~\cite{DBLP:conf/pods/ArenasBC99} where possible worlds are defined declaratively (e.g., all repairs of an inconsistent database).} % that satisfy specific properties.

\revm{We significantly extend
 the state-of-the-art on under\--ap\-prox\-i\-ma\-ting certain answers~\cite{L16a,GP17,R86}:
 % the state-of-the-art:
(1) we combine an under-approximation with best-guess query processing bounding certain answers from above and below;
(2) we support sets, bags, and any other data model expressible as semiring annotations from a large class of semirings;  % a large class of semiring annotated relations;
(3) we support translation of a wide range of incomplete and probabilistic data models into our \abbrUADB model;
(4) in contrast to certain answers, \abbrUADBs are closed under queries.
}
The remainder of the paper is organized as follows.

%%%%%%%%%%%%%%%%%%%%
\mypar{Incomplete $\semK$-Relations} (Section~\ref{sec:incompl-prob-k})
% Before discussing \abbrUADBs,
We introduce incomplete $\semK$-databases, generalizing incomplete databases to $\semK$-relations~\cite{Green:2007:PS:1265530.1265535}.
We then define certain annotations
% We present a formal model for incomplete databases based on $\semK$-relations~
as a natural extension of certain answers, based on the observation that certain answers are a lower bound on the content of a % possible
world.
\iftechreport{It is thus natural to define certainty based on a greatest-lower-bound operation (GLB) for semiring annotations based on so-called l-semirings~\cite{DBLP:conf/icdt/KostylevB12} where the GLB is well behaved.}
We show that certain annotations % based on the GLB operation
correspond to % existing definitions of
the classical notion of certain answers for % both
set~\cite{L79a} and bag~\cite{GL17} semantics.
%\BG{As a technical tool,  we define a novel ``pivoted'' encoding of incomplete $\semK$-relations called $\pwK$-relations and prove that queries over such relations realize possible worlds semantics.}

%%%%%%%%%%%%%%%%%%%%
\mypar{\abbrUADBs} (Section~\ref{sec:UA-model})
We define \textit{\abbrUADBs} as databases that annotate tuples with pairs of annotations from a semiring $\semK$.
The annotation of a tuple in a \abbrUADB bounds the certain annotation of the tuple from above and below.
This is achieved by combining the annotations from one % possible
world (the over-approximation) with an under\hyp{}approximation that we call an \textit{uncertainty labeling}.
% be a possible world from an incomplete/probabilistic $\semK$-database annotated with an uncertainty labeling and prove that queries over \abbrUADBs preserve both the possible world and  labeling.
Relying on results for under-approximations that we develop in the following sections,
we prove that queries over \abbrUADBs preserve these bounds.
%This proof is based on our results for under-approximations developed in the following sections.
%While this proof is quite trivial for the over-approximation, it is more involved for the under\hyp{}approximation.
% We combine uncertainty labelings with curation-based evaluation % in the context of $\pwK$-relations
% by
% defining annotation structures that encode both an uncertainty labeling as well as one possible world of a $\pwK$-database.
% % as a direct product of semirings.
% The result is a \textit{\abbrUADB}.
% We then derive a provably equivalent ``dual'' labeling scheme for $\semK$-semirings with a well-defined monus operation.
% In particular, this dual scheme admits a more efficient implementation for the special case of bag-relational data.

%%%%%%%%%%%%%%%%%%%%
\mypar{Under\hyp{}approximating Certain Answers} (Section~\ref{sec:uncert-labell-1})
To better understand under\hyp{}approximations,
we % next
% establish a framework for reasoning about
define \textit{uncertainty labelings}, which are $\semK$-relations that
under\hyp{}approximate the set of certain tuples for an incomplete $\semK$-database.
\iftechreport{An uncertainty labeling is certain- or \textit{c-sound} (resp., \textit{c-complete}) if it is a lower (resp., upper) bound on the certain annotations of tuples in a $\pwK$-relation; and \textit{c-correct} if it is both.
We also extend these definitions to query % evaluation
semantics. A query semantics preserves c-soundness if the result of the query is a c-sound labeling for the result of evaluating the query over the input $\pwK$-database from which the labeling was derived. }

%%%%%%%%%%%%%%%%%%%%
% \mypar{Creating \abbrUADBs} (Appendix~\ref{sec:create-labelings})
% \ifnottechreport{We show % that  labelings are backwards compatible by showing
% how to derive labelings and \termBGW{}s for common uncertain data models including tuple-independent databases and \abbrXDBs.}
% \iftechreport{
% We show
% how to derive labelings for common uncertain data encodings including tuple-independent databases, x-relations, and C-tables.
%   For the former two we prove that the labeling is guaranteed to be c-complete, while for the latter we prove that the labeling is guaranteed to be at least c-sound.}

%%%%%%%%%%%%%%%%%%%%
\mypar{Queries over Uncertainty Labelings} (Section~\ref{sec:c-soundess})
% One such labeling scheme is the set of certain answers themselves.
Since labelings are $\semK$-relations, we can evaluate queries over such labelings.
% However, the question is whether this produces meaningful results. We answer this question affirmatively by
We demonstrate that evaluating queries in this fashion preserves under\hyp{}approximations of certain answers, generalizing a previous result for V-tables due to Reiter~\cite{R86}.  % c-soundness.
\iftechreport{That is queries preserve c-soundness. }
\iftechreport{
Furthermore, % we demonstrate that
under certain conditions % if the input labeling is derived from a \abbrTI, then
this query semantics % $\raPlus$
 % queries also preserve c-correctness (i.e.,
returns precisely the certain answers. %). % of the query.
That is, since all queries preserve c-soundness, under these conditions queries preserve c-correctness. }

% We show that $\semK$-relational query semantics  over an uncertainty labeling in any l-semiring % with monotone addition and multiplication
% is guaranteed to preserve c-soundness.
% \BG{TECHNICALLY, WHAT WE PROVE IS RESTRICTED TO PARTICULAR INPUTS (LABELINGS DERIVED FROM TIS), BUT WORKS FOR ALL QUERIES (ALL POSITIVE QUERIES WORK),  We also identify a class of queries that are guaranteed to preserve c-correctness;
% Queries in this class evaluated on the certain answers are guaranteed to produce exactly the certain answers to the query.
% Notably, this class includes all ``safe'' queries~cite{suciu2011probabilistic}, for which probabilistic query evaluation is known to be efficient.}

%%%%%%%%%%%%%%%%%%%%
\mypar{Implementation for Bag Semantics} (Section~\ref{sec:implementation})
We implement \abbrUADBs on top of a % standard
relational DBMS. We extend the schema of relations % with an additional attribute
to label % individual
tuples as % either
certain or uncertain (e.g., Figure~\ref{fig:example:uadb}). Queries with UA-relational semantics are compiled into standard relational queries over this encoding. \iftechreport{We prove this compilation process to be correct.}

%%%%%%%%%%%%%%%%%%%%
\mypar{Performance} (Section~\ref{sec:experiments}) We demonstrate experimentally that \abbrUADBs outperform state-of-the-art incomplete \revm{and probabilistic} query processing schemes, and are competitive with  deterministic query evaluation and other certain answer under\hyp{}approximations.
Furthermore, for a wide range of real world datasets, comparatively few answers are misclassified by our approach. We also demonstrate that \termBGA{} and, hence, also \abbrUADBs{}, can have higher utility than certain answers. \revm{Finally, we demonstrate the use of \abbrUADBs{} for uncertain access control annotations and bag semantics.}
% conditions required for \abbrUADBs to produce c-incomplete answers rarely arise.
%\BG{make statement about c-incompleteness evaluation}

%%% Local Variables:
%%% mode: latex
%%% TeX-master: "uadb"
%%% End:

%!TEX root = uadb.tex
%%%%%%%%%%%%%%%%%%%%%%%%%%%%%%%%%%%%%%%%%%%%%%%%%%%%%%%%%%%%%%%%%%%%%%%%%%%%%%%%
\section{Notation and Background}\label{sec:background}

A database schema $\schema = \{\relschema_1, \ldots, \relschema_n\}$ is a set of relation schemas.
A relational schema $\relschema(A_1, \ldots, A_n)$ consists of a relation name and a set of attribute names $A_1, \dots, A_n$.
The arity $\arity{\relschema}$ of a relation schema $\relschema$ is the number of attributes in $\relschema$.
An  instance $\db$ for database schema $\schema$ is a set of relation instances with one relation for each relation schema in $\schema$: $\db=\{\rel_1, \dots, \rel_n\}$.
Assume a universal domain of attribute values $\aDom$. A tuple with schema $\relschema$ is an element from $\aDom^{\arity{\relschema}}$.
In this paper, we consider both bag and set semantics. A set relation $\rel$ with schema $\relschema$ is a set of tuples with schema $\relschema$, i.e.,  $\rel \subseteq \aDom^{\arity{\relschema}}$. A bag relation $\rel$ with schema $\relschema$ is  a bag (multiset) of tuples with schema $\relschema$.
% In either case, a tuple is an element from $\aDom^{\arity{\relschema}}$.
We use $\tupDom$ to denote the set of all tuples over domain $\aDom$.
%\BG{ADD BACK IF WE ACTUALLY USE THIS SOMEWHERE: Given a relation $\rel$ with schema $\relschema$ we define $\tupDom(\rel)$, the set of tuples with values from $\aDom$ and schema $\relschema$. For a database $\db$, we define $\tupDom(\db)$ as the union of $\tupDom(\rel)$ for all relations $\rel$ in $\db$.   }

\subsection{Possible Worlds Semantics}
\label{sec:prob-query-proc}

%Regardless of its source,
%whether uncertainty is present in a data-set from the start~\cite{DBLP:conf/icde/LetchnerRBP09}, or is the result of uncertain operations such as data repair~\cite{Beskales:2014:SRC:2581628.2581635},
% data
Incomplete and probabilistic databases model
uncertainty and its impact on query results. % can be modelled using .
An \emph{incomplete database} % (\textit{\abbrIDB})
$\pdb$  is a set of deterministic database instances $\db_1, \ldots, \db_n$ of schema $\schema$, called \emph{possible worlds}.
We write $\tup \in \db$ to denote that a tuple $\tup$ appears in a specific possible world $\db$.
% A \emph{probabilistic database} is an incomplete database paired with a probability distribution $\prob: \{\db_i\} \rightarrow [0,1]$ over the possible worlds.
% In the following will mostly focus on incomplete databases.
% with the requirement that $\sum_{\db \in \pdb}(\prob(\db)) = 1$.
%\BG{Let see whether we still need to define BG worlds}
%and $\probOf{\tup \in \pdb}$ to denote the marginal probability, or \emph{confidence} of a given tuple appearing in any possible world:
%$$\probOf{\tup \in \pdb} = \sum_{\db \in \pdb : \tup \in \db} \probOf{\db}$$

% A \emph{probabilistic database} (PDB) $\pdb$ is a set of deterministic database instances $\db_1, \ldots, \db_n$ called \emph{possible worlds}, paired with a probability distribution $\prob: \pdb \rightarrow [0,1]$ that assigns each possible world a probability from $[0,1]$ with the standard normalization property of $\sum_{\db \in \pdb}(\prob(\db)) = 1$.

%%%%%%%%%%%%%%%%%%%%%%%%%%%%%%%%%%%%%%%%
\begin{Example} \label{example:pdb}
  Continuing Example~\ref{ex:geocoding}, Figure~\ref{table:pdb} shows the two possible worlds in the result of the spatial join.
  Observe that some tuples (e.g., $\tuple{1, Lasalle, NY}$) appear in all worlds.  % --- these are part of the instance no matter which possible world is chosen.
  Such tuples are called \textbf{certain}.
  Tuples that appear in at least one possible world (e.g., $\tuple{2, Tuscon, AZ}$) are called \textbf{possible}.
  % Each possible world is associated with a probability,  and the probabilities must sum to $1$.
\end{Example}
%%%%%%%%%%%%%%%%%%%%%%%%%%%%%%%%%%%%%%%%

\begin{figure}[t]
\centering
  \begin{subtable}{.49\linewidth}
    \centering
    \resizebox{!}{9mm}{
      \begin{tabular}{ccc}
        \textbf{id} & \textbf{locale} & \textbf{state} \\ \hline
        1 & Lasalle & NY\\
        \textbf{2} & \textbf{Tucson} & \textbf{AZ}\\
        3 & Kingsley & NY\\
        4 & Kensington & NY\\
      \end{tabular}
    }
    \vspace*{-1mm}
    \caption{$D_1$}% with $P=0.8$
    \label{table:example-PDB-world-1}
    % \vspace*{2mm}
  \end{subtable}
  \begin{subtable}{.49\linewidth}
    \label{table:example-PDB-world-2}
    \resizebox{!}{9mm}{
      \begin{tabular}{ccc}
        \textbf{id} & \textbf{locale} & \textbf{state} \\ \hline
        1 & Lasalle & NY\\
        \textbf{2} & \textbf{Grant Ferry} & \textbf{NY}\\
        3 & Kingsley & NY\\
        4 & Kensington & NY\\
      \end{tabular}
    }
    \vspace*{-1mm}
    \centering
    \caption{$D_2$}% with $P=0.2$
  \end{subtable}\\[-4mm]
\bfcaption{Example incomplete database $\pdb=\{\db_1,\db_2\}$.}
\label{table:pdb}
\trimfigurespacing\vspace{2mm}
\end{figure}

% \subsection{Probabilistic Query Processing}

Decades of research~\cite{suciu2011probabilistic,DBLP:journals/jacm/ImielinskiL84,DBLP:journals/debu/GreenT06,4221832,DBLP:conf/sigmod/BoulosDMMRS05,DBLP:conf/vldb/AgrawalBSHNSW06} has focused on % algorithms for efficient
query processing over incomplete databases. % We will use or \textit{uncertain query processing} (\textit{\abbrUQP}) as an umbrella term for these approaches.
These techniques commonly adopt the  \textit{``possible worlds''} semantics:
The result of evaluating a query $\query$ over an incomplete database is the set of relations resulting from evaluating $\query$ over each possible world individually using deterministic % query
semantics.
\begin{align}\label{eq:pqp}
\query(\pdb) = \comprehension{\query(\db)}{\db \in \pdb}
\end{align}
% For \abbrPDBs,
%The probability of a world in the result of evaluating a query  $\query$ over a probabilistic database $\pdb$ is computed as $\prob(\db)=\sum_{\db' \in \pdb \wedge Q(\db')=\db} \prob(\db')$.
% \BG{I DO NOT THING WE USE THIS ANYWHERE, SO REMOVED IT FOR NOW:
% As before we define the confidence of a tuple $\prob(\tup \in \query(\pdb))$ to be the marginal probability of the tuple appearing in the result:
% $$\prob(\tup \in \query(\pdb)) = \sum_{\db \in \pdb : \tup \in \query(\db)} \prob(\db)$$
% }
%%%%%%%%%%%%%%%%%%%%
\begin{Example} \label{example:pqp}
Query $\query_{NY}=\selection_{state = 'NY'}(\pdb)$ returns locations in NY State from the database $\pdb$ shown in Figure~\ref{table:pdb}. % probabilistic query processing (\abbrPQP)  by querying for locations in NY State.
  \noindent The result of $\query_{NY}(\pdb)$ is the set of worlds computed by evaluating $\query_{NY}$ over each world of $\pdb$ as shown in Figure \ref{table:pqp}.
  %The possible result in (\subref{table:example-PQP-result-world-1}) is only produced by $\query_{NY}(\db_1)$.% and consequently has probability 0.8.
  % Similarly, the result in (\subref{table:example-PQP-result-world-2}) has probability 0.2.
  Observe that the location with id 2 appears in $\query_{NY}(\db_2)$, but not $\query_{NY}(\db_1)$.% and so has probability 0.2.
\end{Example}
%%%%%%%%%%%%%%%%%%%%

%%%%%%%%%%%%%%%%%%%%
\begin{figure}[t]
\centering
  \begin{subtable}{.49\linewidth}
    \centering
    \resizebox{!}{9mm}{
  \begin{tabular}{ ccc }
    % \hline
    \textbf{id}  & \textbf{locale}  & \textbf{state}  \\
    \hline
    1 & Lasalle & NY\\
    3 & Kingsley & NY\\
    4 & Kensington & NY\\
    & & \\
    % \hline
  \end{tabular}
  }
  \vspace*{-1mm}
    \caption{$Q_{NY}(D_1)$}% (P=0.8)
%  \caption{Result of $Q_{fru}$ over $D_1$ (P=0.6)}
  \label{table:example-PQP-result-world-1}
  %\vspace*{2mm}
  \end{subtable}
  \begin{subtable}{.49\linewidth}
    \centering
    \resizebox{!}{9mm}{
  \begin{tabular}{ ccc }
    % \hline
    \textbf{id}  & \textbf{locale}  & \textbf{state}  \\
    \hline
    1 & Lasalle & NY\\
    2 & Grant Ferry & NY\\
    3 & Kingsley & NY\\
    4 & Kensington & NY\\
    % \hline
  \end{tabular}
  }
  \vspace*{-1mm}
    \caption{$Q_{NY}(D_2)$}% (P=0.2)
%  \caption{Result of $Q_{fru}$ over $D_2$ (P=0.4)}
  \label{table:example-PQP-result-world-2}
  \end{subtable}\\[-2mm]
\bfcaption{The result of a query $\query$ over an incomplete database $\pdb$ is the set of results in all worlds $\db \in \pdb$.}
\label{table:pqp}
\trimfigurespacing
\end{figure}
%%%%%%%%%%%%%%%%%%%%

%%%%%%%%%%%%%%%%%%%%%%%%%%%%%%%%%%%%%%%%
\subsection{Certain and Best-Guess Answers} %, Possible, and Best-Guess Answers}
\label{sec:best-guess-query}
\label{sec:cert-poss-tupl}

% Even with approximation techniques~\cite{DBLP:journals/vldb/FinkHO13,jampani2008mcdb}, probabilistic query evaluation is slow.
% Fortunately, users do not always require the precise probability distribution of possible query outputs to extract insights from uncertain data~\cite{kumari:2016:qdb:communicating}.
An important goal of query processing over incomplete databases is to differentiate query results that are certain from ones that are merely possible.
Formally, a tuple is certain if it appears in every possible world. % \footnote{In probabilistic databases, certain tuples appear in worlds summing to a probability of 1.  Without loss of generality we assume that this only happens if the tuple appears in all possible wolds.}
% and possible if it appears in at least one possible world
~\cite{DBLP:journals/jacm/ImielinskiL84,L79a}:
\ifnottechreport{
\begin{align}
  \cSet(\pdb) &= \{ \tup \mid \forall \db \in \pdb: \tup \in \db \} \label{eq:certain-tuple-set}
\end{align}
}
\iftechreport{
\begin{align}
  \cSet(\pdb) &= \{ \tup \mid \forall \db \in \pdb: \tup \in \db \} \label{eq:certain-tuple-set} \\
  \pSet(\pdb)&= \{ \tup \mid \exists \db \in \pdb: \tup \in \db \} \label{eq:possible-tuple-set}
\end{align}
}
% For probabilistic databases, we can alternatively express tuple certainty over the marginal probability of a tuple: a tuple $\tup$ is certain iff $\probOf{\tup} = 1$ and possible iff $\probOf{\tup} > 0$.
In contrast to~\cite{DBLP:journals/jacm/ImielinskiL84}, which studies certain answers to queries, we define certainty at the instance level.
These approaches are equivalent since we can compute the certain answers of query $\query$ over incomplete instance $\pdb$ as $\cSet(\query(\pdb))$.
% \BG{DO NOT THINK WE NEED THIS AS A SEPARATE DEFINITION: Certain and possible answers to queries are defined analogously over query result relations:
% $\cSet(\query, \pdb) = \cSet(\query(\pdb)) \label{eq:certain-answers}$
% $\pSet(\query, \pdb) = \pSet(\query(\pdb))\label{eq:possible-answers}$
% }
% \OK{is computing possible answers coNP-hard?  It should be at least this hard, since it's the dual of certain answers.}
% \BG{Possible is NP-complete for full relational algebra}
Although computing certain answers is \conphard~\cite{AK91} in general, there exist \ptime under\hyp{}approximations~\cite{L16a,GL16,R86}. % that allow for % false positives (tuples are returned that are not certain answers) or
% false negatives (some certain answers may be omitted). % from the query result).

\mypar{Best Guess Query Processing} As mentioned in the introduction, another approach commonly used in practice is to % ignore ambiguity in the data and
select one possible world.
Queries are evaluated solely in this world, and ambiguity is ignored or documented outside of the database.
We refer to this approach as \textit{best-guess query processing} (\abbrBGQP)~\cite{Yang:2015:LOA:2824032.2824055} since typically one would like to select the possible world that is deemed most likely.% (e.g., the world with the highest probability).
% NEVER USED
% We refer to the selected world as the \emph{best guess world} $\bestG{\pdb}$.

% \begin{Example}
% For probabilistic DBs, we define the best guess world as $\bestG{\pdb} = argmax_{\db \in \pdb}(\prob(\db))$, i.e., the world with the highest probability.
% Hence, Figure~\ref{fig:example:uadb} (disregarding the \texttt{certain?} column) shows a best guess world.
% \end{Example}

%!TEX root = uadb.tex
\subsection{K-relations} \label{sec:data_provenance}
Our generalization of incomplete databases % , uncertainty labeling schemes, and UA-relations  are all
is based on the  \textbf{$\semK$-relation}~\cite{Green:2007:PS:1265530.1265535} framework.
In this framework, relations are annotated with elements from the domain $\kDom$ of a (commutative) semiring $\semK$.
   A commutative semiring is a structure $\semK=\simpletuple{\kDom,\addK,\multK,\oneK,\zeroK}$ with commutative and associative addition ($\addK$) and product ($\multK$) operations where $\addK$ distributes over $\multK$.
% \footnote{
%   A commutative semiring is a structure $\semK=\simpletuple{\kDom,\addK,\multK,\oneK,\zeroK}$ with commutative and associative addition ($\addK$) and product ($\multK$) operations where $\addK$ distributes over $\multK$.
% } $\semK$.
% which is required to form a mathematical structure which fulfills the algebraic laws of a commutative semiring.
% A structure  is a commutative semiring iff
% (1) addition and multiplication are associative and commutative,
% (2) $\zeroK$ is the neutral element of addition,
% (3) $\oneK$ is the neutral element of multiplication,
% (4) multiplication with $\zeroK$ yields $\zeroK$, and
% (5) multiplication distributes over addition.
% For simplicity, we will refer to commutative semirings as semirings.
As before, $\aDom$ denotes a universal domain.
An $n$-nary $\semK$-relation is a function that maps tuples (elements from $\aDom^{n}$) to elements from $\kDom$.
Tuples that are not in the relation are annotated with $\zeroK$. %, then this tuple is not in the relation.
%If $\aDom$ is infinite, then it is additionally required that
Only finitely many tuples may be mapped to an element other than $\zeroK$ (i.e., relations must be finite).
Since $\semK$-relations are functions from tuples to annotations, it is customary to denote the annotation of a tuple $\tup$ in relation $\rel$ as $\rel(\tup)$. % (applying function $\rel$ to input $\tup$).
The specific information encoded by an annotation % other than $\zeroK$
 depends on the choice of semiring.

%%%%%%%%%%%%%%%%%%%%%%%%%%%%%%%%%%%%%%%%
\mypar{Encoding Sets and Bags}
Green et al.~\cite{Green:2007:PS:1265530.1265535} demonstrated that bag and set relations can be encoded as
commutative semirings: the natural numbers ($\semN$) with addition and multiplication, $\simpletuple{\semN, +, \times, 0, 1}$, annotates each tuple with its multiplicity; and boolean constants $\semB = \{T,F\}$ with disjunction and conjunction, $\simpletuple{\semB, \vee, \wedge, F, T}$, annotates each tuple with its set membership.
Abusing notation, we denote by $\semN$ and $\semB$ both the domain and the corresponding semiring.
% $\semN$ annotations encode the multiplicity of a tuple, $\semB$ relations encode tuple existence.

%  for an appropriate choice of semiring, $\semK$-relations can be used to encode standard set and bag semantics, as well as many extensions of the relational model including various types of provenance.
% Specifically, $\semB$ ($\semN$) encodes set (bag) semantics by annotating all tuples in the relation with true (their multiplicity) and all other tuples with false (0).
%$\semN$ encodes bag semantics by annotating each tuple with its multiplicity where tuples that are not in the relation have multiplicity $0$.

\begin{figure}[t]
  \begin{subtable}{.49\linewidth}
    \centering
    \resizebox{\textwidth}{!}{
    \begin{tabular}{ ccc r r}
      % \hline
      \textbf{id} & \textbf{address} & \textbf{$\ell$} & \underline{$\semN$}  & \underline{$\semB$} \\
      \cline{1-3} \\[-3mm]
      1 & 51 Co\ldots & $L_1$ & $1$ & T\\
      2 & Grant \ldots & $L_2$ & $1$ &T \\
      3 & 499 W\ldots & $L_4$ & $1$ & T\\
    \end{tabular}
    }
    \bfcaption{Address}
  \end{subtable}
  \hspace{\fill}
  \begin{subtable}{.49\linewidth}
    \centering
    \resizebox{\textwidth}{!}{
    \begin{tabular}{ ccc r r}
      \textbf{$\ell$}  & \textbf{locale} & \textbf{state} & \underline{$\semN$}  & \underline{$\semB$} \\
      \cline{1-3} \\[-3mm]
          $L_1$ & L\ldots & NY & $1$ & T\\
          $L_2$ & T\ldots & AZ & $1$ & T\\
          $L_3$ & G\ldots & NY & $1$ & T\\
          $L_4$ & K\ldots & NY & $1$ & T\\
          $L_5$ & W\ldots & IL & $1$ & T\\
    \end{tabular}
    }
    \bfcaption{Neighborhood}
  \end{subtable}\\[-3mm]

  \begin{subtable}{\linewidth}
    \centering
    \begin{tabular}{ cll }
      \textbf{state} & \underline{$\semN$}  & \underline{$\semB$} \\
      \cline{1-1} \\[-3mm]
      NY & $2$ {\scriptsize $= (1 \cdot 1) + (1 \cdot 1)$} & $T$ {\scriptsize $= (T \wedge T) \vee (T \wedge T)$}\\
      AZ & $1$ {\scriptsize $= (1 \cdot 1)$} & $T$ {\scriptsize $= (T \wedge T)$}\\
      IL & $0$ {\scriptsize $= (0 \cdot 1)$} & $F$ {\scriptsize $= (F \wedge T)$}\\
    \end{tabular}
    \vspace*{-2mm}
    \bfcaption{Result of $Q_a$}
  \end{subtable}
  \vspace*{-3mm}
\bfcaption{$\semN$- and $\semB$-relation examples}
\label{table:2.1}
\trimfigurespacing
\end{figure}

% $\semK$-relation $\rel$, $\rel(\tup)$ denotes the result of the function $\rel$ on input $\tup$, i.e., the annotation of tuple $\tup$.

\mypar{Query Semantics}
Operators of the positive relational algebra ($\raPlus$) over $\semK$-relations are defined by combining input annotations using operations $\addK$ and $\multK$. % operations of the semiring as follows:
\begin{align*}
	\textbf{Union: } & [R_1 \union R_2](t) = R_1(t) \addK R_2(t) \\
	\textbf{Join: } & [R_1 \join R_2](t) = R_1(\pi_{R_1}[t]) \multK R_2(\pi_{R_2}[t]) \\
  \textbf{Projection: } & [\projection_U (R)](t) = \sum_{t=t'[U]}R(t')\\
  \textbf{Selection: } & [\selection_\theta(R)](t) = R(t) \multK \theta(t)
\end{align*}
\iftechreport{
For simplicity we assume in the definition above that tuples are of a compatible schema (e.g., $\relschema_1$ for a union $R_1 \union R_2$).
We use $\theta(t)$ to denote a function that returns $\oneK$ iff $\theta$ evaluates to true over tuple $t$ and $\zeroK$ otherwise.
}
%, and $t[U]$ and $t[R]$ to denote projecting tuple $t$ on a set of attributes $U$ and restricting it to the schema of relation $R$, respectively.

% \BG{Do we need to talk about provenance now? Data provenance documents historical record of data and its origins. One standard way of modeling  provenance in databases is to annotate provenance on data items and compute the output provenance by propagate annotations. \textbf{Provenance semiring} is defined as $\semNX=(\semNX,+,\times,0,1)$ where $X$ is set of tuple ids and $\semNX$ is all \textbf{provenance polynomials} with variables from $X$ and coefficients from $\semN$}.

%%%%%%%%%%%%%%%%%%%%%%%%%%%%%%%%%%%%%%%%
\begin{Example}\label{ex:basic-k-rel}
  Figure~\ref{table:2.1} shows a bag semantics database encoded as an $\semN$-database, with each tuple $\tup$ annotated with its multiplicity (the copies of $\tup$ in the relation).
  Annotations appear beside each tuple. % (ignore the $\semB$ annotations for now).
  Query $Q_a$, below, computes states. % in our list of Addresses.
\vspace{-3mm}
  \begin{align*}
    Q_a = \projection_{state} \left(\textsf{\upshape Address} \join % _{$\ell$}}
    \textsf{\upshape Neighborhood} \right)
  \end{align*}
  Every input tuple appears once (is annotated with $1$).
  The output tuple annotation is computed by multiplying annotations of joined tuples, and summing annotations projected onto the same result tuple.
  For instance, 2 NY addresses are returned. % from New York
  % appear in the list.
  % one can buy apples in Chicago, because both food item $f_1$ and $f_2$ (red and green apples) are sold in Chicago corresponding to two tuples in the join result where derived by joining food items $f_1$ and $f_2$ with the corresponding \texttt{Sells} tuples. Thus, the multiplicity of the result tuple is $2 = (1 \cdot 1) + (1 \cdot 1)$.
  % Consider the annotated result of running query $$Q_a=\pi_{a}(\sigma_{a=1}(R \Join_{a=d}S))$$ over the database instance is shown in Figure~\ref{table:2.1}. Each output tuple is annotated by provenance polynomial over the annotation of input tuples showing which tuples contribute on producing the result tuple and how the tuples are combined. From the result we can tell that the result tuple is contributed by three joined tuples namely $r_1 \times s_2$, $r_2 \times s_2$ and $r_3 \times s_2$.
  %
\end{Example}
%%%%%%%%%%%%%%%%%%%%%%%%%%%%%%%%%%%%%%%%

In the following,  % When proving the core technical results of this paper,
we will make use of homomorphisms. %: structure-preserving mappings.
A mapping $h: \semK \to \semK'$ from a semiring $\semK$ to a semiring $\semK'$ is a called a \textbf{homomorphism} if it maps $\zeroOf{\semK}$ and $\oneOf{\semK}$ to their counterparts in $\semK'$ and distributes over sum and product (e.g., $h(k \addK k') = h(k) \addOf{\semK'} h(k')$).
% for any $k, k' \in \kDom$:
% \begin{align*}
%   h(\zeroOf{\semK}) &= \zeroOf{\semK'} &  h(\oneOf{\semK}) &= \oneOf{\semK'}\\
%   h(k \addK k') &= h(k) \addOf{\semK'} h(k') &   h(k \multK k') &= h(k) \multOf{\semK'} h(k')
% \end{align*}
As observed by Green et al.~\cite{Green:2007:PS:1265530.1265535}, any semiring homomorphism $h$ can be lifted to a homomorphism from $\semK$-relations to $\semK'$-relations by applying $h$ to the annotation of every tuple $\tup$: $h(R)(t) = h(R(t))$.
Importantly, queries commute with semiring homomorphisms. That is, given a homomorphism $h$, query $\query$, and $\semK$-database $\db$ we have $h(\query(\db)) = \query(h(\db))$.
We will abuse syntax and use the same function symbols (e.g., $h(\cdot)$) to denote mappings between semirings, $\semK$-relations, as well as $\semK$-databases.

% That is, let $\mathcal{K}$ and $\mathcal{K'}$ be two semirings and let $h:\mathcal{K} \rightarrow \mathcal{K'}$ be the mapping. Then the mapping function satisfies the property that $Q(h(R))=h(Q(R))$ for $Q \in \mathcal{RA^+}$. For Example, consider relational model in bag semantics that is defined by the natural number semiring $\semN=(\mathbb{N},+,*,0,1)$, each tuple is annotated by a natural number representing the multiplicity of the tuple.

%%%%%%%%%%%%%%%%%%%%%%%%%%%%%%%%%%%%%%%%
\begin{Example}\label{ex:homo-k}
Continuing Example~\ref{ex:basic-k-rel}, %, assume we want set semantics.
we can derive a set instance  % from the tables shown in Figure~\ref{table:2.1}
through a mapping $h:\semN \rightarrow \semB$ defined as $h(k) = T$ if $k> 0$ and $h(k) = F$ otherwise.
  % \begin{align*}
  % h(k) =
  %   \begin{cases}
  %     T & \mathtext{if} k > 0\\
  %     F & \mathtext{otherwise}
  %   \end{cases}
  % \end{align*}
$h$ is a semiring homomorphism, % (we leave the proof as an exercise for the reader).
so evaluating $Q_a$ in $\semN$ first and then applying $h$ (i.e., $h(Q(\db))$) is equivalent to applying $h$ first, and then evaluating $Q_a$.
%  produces the same result as applying $h$ to the input database and then evaluating $Q_a$ in $\semB$ (i.e., $Q(h(\db))$).
  % of annotated database query evaluation $Q_a(R)$ over natural number semiring with exact same tuples as Figure~\ref{table:2.1}.
  % Assume a mapping $h:\semN \rightarrow \semB$ that also map each tuples ids from the input tables in Figure~\ref{table:2.1} to the input tables in Figure~\ref{table:2.2}, that is, $r_{1} \rightarrow 2$, $r_{2} \rightarrow 1$, etc. By comparing the results from two Figures, we can see that homomorphism holds, so that we get same annotated result when run query on $\semN$ annotated input or take the $\semNX$ annotated result from Figure~\ref{table:2.1} and apply mapping function $h$ to it.
\end{Example}
%%%%%%%%%%%%%%%%%%%%%%%%%%%%%%%%%%%%%%%%

%\OK{Do we need to introduce l-semirings here?  If not, the intro should be rewritten to discuss min/max rather than GLB/LUB.}
When defining bounds for annotations in Section~\ref{sec:incompl-prob-k}, we make use of the so called \textit{natural order} $\ordK$ for a semiring $\semK$, defined as an element $k$ preceding $k'$ if it is possible to obtain $k'$ by adding to $k$. % some value $k''$ to $k$.
 Semirings for which the natural order is a partial order are called \textit{naturally ordered}~\cite{Geerts:2010bz}.
% \iftechreport{
\begin{align}
\forall k, k' \in \kDom: k \ordK k' \Leftrightarrow \exists k'' \in \kDom: k \addK k'' = k'
\end{align}\\[-4mm]
\section{Incomplete K-relations}
\label{sec:incompl-prob-k}

Many incomplete data models either do not support bag semantics, or distinguish it from set semantics.
% When reasoning about incomplete % (and thus also probabilistic)
% databases, it is common to consider set and bag semantics independently.
Our first contribution % in this paper
unifies both under a joint framework.
Recall that an incomplete database is a set of deterministic databases (possible worlds).
We now generalize this idea to $\semK$-databases.

%%%%%%%%%%%%%%%%%%%%%%%%%%%%%%%%%%%%%%%%
\begin{Definition}[Incomplete $\semK$-database]
  Let $\semK$ be a semiring. An \emph{incomplete $\semK$-database} $\pdb$ is a set of possible worlds $\pdb = \{\db_1, \ldots, \db_n\}$ where each  $D_i$ is a $\semK$-database.
\end{Definition}
%%%%%%%%%%%%%%%%%%%%%%%%%%%%%%%%%%%%%%%%

Like classical incomplete databases, queries over an incomplete $\semK$-database use
possible world semantics, i.e., the result of evaluating a query $\query$ over
an incomplete $\semK$-database $\pdb$ is the set of all possible worlds derived
by evaluating $\query$ over every possible world $\db \in \pdb$ (i.e., $\query(\pdb) = \{ \query(\db_1), \ldots, \query(\db_n) \}$).

%%%%%%%%%%%%%%%%%%%%%%%%%%%%%%%%%%%%%%%%%%%%%%%%%%
\subsection{Certain Annotations} %Answers}
\label{sec:poss-worlds-cert}

While possible worlds semantics are directly compatible with incomplete
$\semK$-databases, the same does not hold for the concepts of certain\iftechreport{ and possible }
tuples, as we will show in the following. First off, we have to define
what precisely do we mean by certain answers over possible worlds that are
$\semK$-databases.

%%%%%%%%%%%%%%%%%%%%%%%%%%%%%%%%%%%%%%%%
\begin{Example}\label{ex:incomplete-k-databases}
  Consider a $\semN$-database  $\pdb$ (bag semantics) containing a relation \textbf{LOC} with two attributes \texttt{locale} and \texttt{state}.
Assume that $\pdb$ consists of the two possible
  worlds below:
\vspace*{-1mm}
\begin{center}
  \begin{minipage}{0.45\linewidth}
    \centering
    \textbf{LOC} in $\db_1$\\
  \begin{tabular}{ c|c c }
    % \cline{1-3}
    \textbf{locale} & \textbf{state} & \underline{$\semN$} \\
    \cline{1-2}
        Lasalle & NY & \textbf{3}\\
        Tucson & AZ & \textbf{2} \\
  \end{tabular}
  \end{minipage}
   \begin{minipage}{0.45\linewidth}
     \centering
    \textbf{LOC} in $\db_2$\\
       \begin{tabular}{ c|c c }
    % \cline{1-3}
    \textbf{locale} & \textbf{state} & \underline{$\semN$} \\
         \cline{1-2}

        Lasalle & NY & \textbf{2}\\
         Tucson & AZ & \textbf{1} \\
         Greenville & IN & \textbf{5} \\
  \end{tabular}

  \end{minipage}
\end{center}
\vspace*{-1mm}

  Using semiring $\semN$ each tuple in
  a possible world is annotated with its multiplicity (the number of copies
  of the tuple that exist in the possible world). Arguably, tuples \texttt{(Lasalle, NY)} and \texttt{(Tucson, AZ)}
  are certain since they appear (multiplicity higher than $0$) in both possible
  worlds while \texttt{(Greenville, IN)} is not since it is not present (its
  multiplicity is zero) in possible world $\db_1$\footnote{
All tuples not shown in the tables are assumed to be annotated with zero.
    % Note that we apply the % previously introduced
    % convention that all tuples not shown in the tables are assumed to be annotated with zero, i.e., are not part of the relation.
  }.
  However, the boolean
  interpretation of certainty of incomplete databases is not suited to $\semN$-relations (or $\semK$-relations in general) because it
  ignores the annotations of tuples. In this particular example, tuple
  \texttt{(Lasalle, NY)} appears with multiplicity $3$ in possible world $\db_1$
  and multiplicity $2$ in possible world $\db_2$. We can state with certainty
  that in every possible world this tuple appears \emph{at least} twice. Thus,
  $2$ is a lower bound (the greatest lower bound) for the annotation of
  \texttt{(Lasalle, NY)}.
  Following this logic, we will define certainty through greatest lower bounds (GLBs) on tuple annotations.

\end{Example}
%%%%%%%%%%%%%%%%%%%%%%%%%%%%%%%%%%%%%%%%

%%%%%%%%%%%%%%%%%%%%%%%%%%%%%%%%%%%%%%%%
% \iftechreport{\mypar{Certain and Possible Tuples}}
% \ifnottechreport{\mypar{Certain Tuples}}
To further justify defining certain answers as lower bounds on annotations, consider classical incomplete databases which apply set semantics. Under set semantics, a tuple % from an incomplete
% or probabilistic
% database
is \emph{certain} if it appears in all possible worlds\ifnottechreport{.}\iftechreport{ and \emph{possible} if it appears in at least one possible world. } % Alternatively, this has been described as the intersection of the instances that are the possible.
Like the bag semantics example above,
certainty\iftechreport{ (possible) }
is a lower\iftechreport{ (upper) }
bound on a tuple's annotation across all % possible
worlds.
Consider the the order $false < true$.
If a tuple exists in every possible world (is always annotated \textit{true}), then intuitively, the GLB of its annotation across all worlds is \textit{true}.
Otherwise, the tuple is not certain (is annotated \textit{false} in at least one world), and the GLB is $false$. % The intersection and union, respectively, of the relation in all possible worlds.
%We adopt this intuition to generalize these notions to % a broader range of
%incomplete $\semK$-databases.

To define a sensible lower bound for annotations, we need an order relation for semiring elements.
We use the natural order $\ordK$ as introduced in Section~\ref{sec:data_provenance} to define the GLB\iftechreport{ and LUB }
of a set of $\semK$-elements. % for this purpose.
For a well-defined GLB, we require that $\ordK$ forms a lattice over $\kDom$, a property that makes $\semK$ an \textit{l-semiring}~\cite{DBLP:conf/icdt/KostylevB12}. %We refer to % naturally ordered
\ifnottechreport{A lattice over a set $S$ according to a partial order $\leq_S$ over $S$ is a structure $(S, \lub, \glb)$ where $\glb$ (the greatest lower bound) is an operation over $S$ defined for all $a,b \in S$:}
\iftechreport{A lattice over a set $S$ and with a partial order $\leq_S$ is a structure $(S, \lub, \glb)$ where $\glb$ (the greatest lower bound) and $\lub$ (the lowest upper bound) are operations over $S$ defined for all $a,b \in S$ as:}

\ifnottechreport{
\vspace{-5mm}
  \begin{align*}
%  a \lub b &= \min_{\leq_S}(\{ c \mid c \in S \wedge a \leq_{S} c \wedge b \leq_{S} c\})\\
  a \glb b &= \max_{\leq_S}(\{ c \mid c \in S \wedge c \leq_{S} a \wedge c \leq_{S} b\})
\end{align*}
}
\iftechreport{
The least upper bound $\lub$ is defined symmetrically.
\begin{align*}
  a \lub b &= \min_{\leq_S}(\{ c \mid c \in S \wedge a \leq_{S} c \wedge b \leq_{S} c\})\\
  a \glb b &= \max_{\leq_S}(\{ c \mid c \in S \wedge c \leq_{S} a \wedge c \leq_{S} b\})
\end{align*}
In a lattice, $\lub$ and $\glb$ are  associative, commutative, and fulfill %the laws:% the following absorption laws:
\begin{align*}
  a \lub (a \glb b) &= a &
  a \glb (a \lub b) &= a
\end{align*}
}

% In the following
We will use $\glbK$ \iftechreport{and $\lubK$ }
to denote the $\glb$ \iftechreport{and $\lub$ }
operation of the lattice over % $\kDom$ and
$\ordK$ for a semiring $\semK$. % and also use $\LubK$ and $\GlbK$ to denote their extension to sets of $\kDom$ elements.
Abusing notation, we will apply the $\GlbK$ \iftechreport{and $\LubK$ }
\ifnottechreport{operation}\iftechreport{operations } to sets of elements from $\semK$ with the understanding that \ifnottechreport{it}\iftechreport{they } will be applied iteratively to the elements in the set in some % arbitrary
order, e.g., $\GlbK \{k_1, k_2, k_2\} = (k_1 \glbK k_2) \glbK k_3$.
\iftechreport{
This is well-defined for l-semirings, since in a lattice any set of elements has a unique greatest lower bound \iftechreport{and lowest upper bound } based on the associativity and commutativity laws of lattices. That is, no matter in which order  we apply $\glbK$ to the elements of a set, the result will be the same.
}
From here on, we will limit our discussion to l-semirings.
%\BG{We need this for min/max to be well-defined, I think}
%\OK{Yes... although as Val pointed out, we might also be able to get away with using the MLB/LUB}
Many semirings, including the set semiring $\semB$ and the bag semiring $\semN$ are l-semirings. The natural order of $\semB$ is $F \ordOf{\semB} T$\iftechreport{, $k_1 \lubKof{\semB} k_2 = k_1 \vee k_2$, } % and $k_1
and $k_1 \glbKof{\semB} k_2 = k_1 \wedge k_2$.
The natural order of $\semN$ is the standard order of natural numbers\iftechreport{, $k_1 \lubKof{\semN} k_2 = \max(k_1,k_2)$, }
and $k_1 \glbKof{\semN} k_2 = \min(k_1, k_2)$.

% Given a naturally ordered semiring $\semK$, let $\setK$ be a finite set of $\semK$ elements. We define:
% \begin{align*}
%   (\minK(\setK) =  k) \;\Leftrightarrow\; k \in \setK \wedge \not\exists k' \in \setK: k \neq k' \wedge  k' \ordK k\\
%   (\maxK(\setK) =  k) \;\Leftrightarrow\; k \in \setK \wedge \not\exists k' \in \setK: k \neq k' \wedge k \ordK k'
% \end{align*}

Based on $\glbK$\iftechreport{ and $\lubK$},
we define the certain \iftechreport{and possible }
annotation $\pwCertain(\pdb, \tup)$ of a tuple $\tup$ in an incomplete $\semK$-database $\pdb$ by gathering the annotations of tuple $\tup$ from all possible worlds of $\pdb$ and then applying  $\glbK$ to compute the greatest lower bound. % of this set.
\\[-5mm] % respectively.
\ifnottechreport{
\vspace{-5mm}
\begin{center}
  \begin{align*}
  %  \pwCertain(\setK) &= \GlbK(\setK)
     \pwCertain(\pdb,\tup) &= \GlbK(\{\db(\tup) \mid \db \in \pdb \})
  \end{align*}\\[-5mm]

% \parbox{0.4\columnwidth}{
% \begin{align*}
% \pwCertain(\setK) = \GlbK(\setK)%\\
% %\pwPossible(\setK) = \LubK(\setK)
% \end{align*}
% } %
% \parbox{0.4\columnwidth}{
% \begin{align*}
%   \pwCertain(\pdb,\tup) = \pwCertain(\pdb(\tup))%\\
%   %\pwPossible(\pdb,\tup) = \pwPossible(\pdb(\tup))
% \end{align*}
% }
\end{center}
%\vspace*{-2mm}
}
\iftechreport{
\begin{align*}
\pwCertain(\setK) &= \GlbK(\setK)   &\pwCertain(\pdb,\tup) &= \pwCertain(\pdb(\tup))\\
\pwPossible(\setK) &= \LubK(\setK) &\pwPossible(\pdb,\tup) &= \pwPossible(\pdb(\tup))
\end{align*}
}
%
% As a sanity check,
Importantly, GLB coincides with the standard definition of certain answers for set semantics ($\semB$): $\pwCertOf{\semB}$ returns true only when the tuple is present in all worlds.
%, because $\pwCertOf{\semB}$ corresponds to $\wedge$ (returns true if the tuple is present in all worlds). %and $\pwPossOf{\semB}$ to $\vee$ (the tuple appears in at least one world).
We also note that $\pwCertOf{\semN} = \min$, %and $\pwCertOf{\semN}$ is $\max$.
is analogous to % Guagliardo and Libkin's
the definition of certain answers for bag semantics from~\cite{GL16}. % which defines the certain %(possible)
%multiplicity of a tuple as the minimum %(maximum)
%multiplicity of the tuple across all worlds.
%
For instance, consider the certain annotation of the first tuple from Example~\ref{ex:incomplete-k-databases}.
The tuple's certain multiplicity is $\pwCertOf{\semN}(\{2,3\}) = min(2,3) = 2$. Similarly, for the third tuple, $\pwCertOf{\semN}(\{0,5\}) =0$. %  min(2,3) = 2$.
Reinterpreted under set semantics, all tuples that exist (multiplicity $>0$) are annotated $true$ ($T$) and all others $false$ ($F$). For the first tuple we get,
 $\glbKof{\semB}(\{T,T\}) = T \wedge T = T$ (certain).
For the third tuple we get $\glbKof{\semB}(\{F,T\}) = F \wedge T = F$ (not certain).

%%%%%%%%%%%%%%%%%%%%%%%%%%%%%%%%%%%%%%%%%%%%%%%%%%%%%%%%%%%%%%%%%%%%%%%%%%%%%%%%
\subsection{$\pwK$-relations}
\label{sec:pwk-relations}

For the formal exposition in the remainder of this work it will be useful to define an alternative, but equivalent, encoding of an incomplete $\semK$-database as a single $\semK$-database using a special class of semirings whose elements encode the annotation of a tuple across a set of possible worlds.
This encoding is a technical device % serves as a purely mental model
that allows us to adopt results from the theory of $\semK$-relations directly to our problem.
% Concretely, we define incomplete $\semK$-databases which generalize standard incomplete databases
% we show how to define incomplete % (and probabilistic)
% variants for any $\semK$-relations, including $\semB$- and $\semN$-relations.
% Furthermore, for a large class of semirings we define certain answers over such databases.
We assume a fixed set $\pwDom = \{ m \mid m \in \mathbb N \wedge 0 < m \leq n \}$ of possible world identifiers for some number of possible worlds $n \in \mathbb N$.
Given the domain $\kDom$ of a semiring $\semK$, we write $\pwkDom$ to denote the set of elements from the $\pwCard$-way cross-product of $\kDom$.
We annotate tuples $\tup$ with elements of $\pwkDom$ to store annotations of $\tup$ in each possible world.
We use  $\pwe{k}$, $\pwe{k_1}$, \ldots{} to denote elements from $\pwkDom$ to make explicit that they are vectors.

%%%%%%%%%%%%%%%%%%%%%%%%%%%%%%%%%%%%%%%%
\begin{Definition}[Possible World Semiring]
  Let $\semK = (K,$ $\addK, \multK, \zeroK, \oneK)$ be an l-semiring. We define the possible world semiring $\pwK = (\pwkDom, \pwAdd, \pwMult, \pwZero, \pwOne)$. The operations of this semiring are defined as follows
\iftechreport{
  \begin{align*}
&\forall i \in \pwDom:    &\pwZero[i] &= \zeroK \\
&\forall i \in \pwDom:    &\pwOne[i] &= \oneK \\
    % (k \pwAdd k') &= [k[1] \addK k'[1], \ldots, k[n] \addK k'[n]]\\
    % (k \pwMult k') &= [k[1] \multK k'[1], \ldots, k[n] \multK k'[n]]\\
&\forall i \in \pwDom:    &(\pwe{k_1} \pwAdd \pwe{k_2})[i] &= \pwe{k_1}[i] \addK \pwe{k_2}[i]\\
&\forall i \in \pwDom:    &(\pwe{k_1} \pwMult \pwe{k_2})[i] &= \pwe{k_1}[i] \multK \pwe{k_2}[i]
  \end{align*}
}
\ifnottechreport{
$\forall i \in \pwDom$:\\[-7mm]

\begin{align*}
\pwZero[i] = \zeroK
  && (\pwe{k_1} \pwAdd \pwe{k_2})[i] = \pwe{k_1}[i] \addK \pwe{k_2}[i]\\
\pwOne[i] = \oneK
  && (\pwe{k_1} \pwMult \pwe{k_2})[i] = \pwe{k_1}[i] \multK \pwe{k_2}[i]
\end{align*}
% \begin{equation*}\begin{split}
% \end{split}\end{equation*}
}
\end{Definition}
%%%%%%%%%%%%%%%%%%%%%%%%%%%%%%%%%%%%%%%%

Thus, a $\pwK$-database is simply a pivoted representation of an incomplete $\semK$-database.
% The example below illustrates the use of possible worlds semirings to encode an incomplete $\semB$ database (sets) and  % semantics
% incomplete $\semN$ database (bags).

%%%%%%%%%%%%%%%%%%%%%%%%%%%%%%%%%%%%%%%%
\begin{Example}\label{example:kwlabels}
  Reconsider the incomplete $\semN$-relation from Example~\ref{ex:incomplete-k-databases}. The encoding of this database as a $\semN^2$-relation is: % is shown below.
  \begin{center}
  {\upshape \small
  \begin{tabular}{ c|c c }
    % \cline{1-3}
        \textbf{locale} & \textbf{state} & \underline{$\semN^2$}  \\
    \cline{1-2}&&\\[-3.1mm]
          Lasalle & NY & \textbf{[3,2]}\\
         Tucson & AZ & \textbf{[2,1]} \\
         Greenville & IN & \textbf{[0,5]} \\
  \end{tabular}
  }
\end{center}
%
% Continuing Example~\ref{example:pqp}, we can encode the two possible worlds of Figure~\ref{table:pdb} under bag semantics using a $\semN^2$-relation or under set semantics using a $\semB^2$-relation:
% %
% \begin{center}
%   {\upshape \small
%   \begin{tabular}{ c|c|c cc }
%     % \cline{1-3}
%         \textbf{id} & \textbf{locale} & \textbf{state} & \underline{$\semN^2$} & \underline{$\semB^2$} \\
%     \cline{1-3}&&&\\[-3.1mm]
%         1 & Lasalle & NY & \textbf{[1,1]} & \textbf{[T,T]}\\
%         2 & Tucson & AZ & \textbf{[1,0]} & \textbf{[T,F]}\\
%         2 & Grant Ferry & NY & \textbf{[0,1]} & \textbf{[F,T]}\\
%         3 & Kingsley & NY & \textbf{[1,1]} & \textbf{[T,T]}\\
%         4 & Kensington & NY & \textbf{[1,1]} & \textbf{[T,T]}\\
%   \end{tabular}
%   }
% \end{center}
% %
% Note that the two interpretations of location 2 are recorded as two tuples which do not occur together in any possible world.
% present in non-overlapping possible worlds.
\end{Example}

\newcommand{\includeIsomorphismProp}{
\begin{Proposition}
  Incomplete $\semK$-databases and $\pwK$-databases are isomorphic wrt. possible worlds semantics for $\raPlus$ queries.
\end{Proposition}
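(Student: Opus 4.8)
The plan is to establish the isomorphism via an explicit back‑and‑forth encoding and then reduce the ``commutes with queries'' part to the fact, recalled above from Green et al., that $\raPlus$ queries commute with semiring homomorphisms. Fix $n$ and the world identifiers $\pwDom = \{1,\dots,n\}$. Given an incomplete $\semK$-database $\pdb = \{\db_1,\dots,\db_n\}$ (with its worlds enumerated consistently with $\pwDom$), define the $\pwK$-database $\mathit{enc}(\pdb)$ by $\mathit{enc}(\pdb)(\tup) = \pwe{k}$ where $\pwe{k}[i] = \db_i(\tup)$; this is finitely supported because every $\db_i$ is. Conversely, for a $\pwK$-database $\widehat{\db}$ let $\pwWorld{i}(\widehat{\db})$ be the $\semK$-database with $\pwWorld{i}(\widehat{\db})(\tup) = \widehat{\db}(\tup)[i]$, and put $\mathit{dec}(\widehat{\db}) = \{\pwWorld{1}(\widehat{\db}),\dots,\pwWorld{n}(\widehat{\db})\}$. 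First I would check that these maps are mutually inverse: $\pwWorld{i}(\mathit{enc}(\pdb)) = \db_i$ and $\mathit{enc}(\mathit{dec}(\widehat{\db})) = \widehat{\db}$ both hold tuple‑by‑tuple and coordinatewise straight from the definitions, so $\mathit{enc}$ is a bijection. The only bookkeeping point is that an incomplete database is formally a \emph{set} of worlds, so this bijection is understood modulo the fixed enumeration of $\pwDom$ — precisely the convention already built into the definition of $\pwK$.

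Next I would observe that $\pwK$ is nothing but the $n$-fold direct product semiring $\semK^{n}$: its operations $\pwAdd$, $\pwMult$ and constants $\pwZero$, $\pwOne$ are defined pointwise in each coordinate, so the semiring axioms for $\semK$ transfer to $\pwK$ coordinatewise. Consequently each coordinate projection $\pwWorld{i} : \pwK \to \semK$, $\pwe{k} \mapsto \pwe{k}[i]$, is a semiring homomorphism: it sends $\pwZero$ to $\zeroK$, $\pwOne$ to $\oneK$, and commutes with $\pwAdd$ and $\pwMult$ by construction. Lifting $\pwWorld{i}$ to $\semK$-databases in the usual way gives, as already noted, $\pwWorld{i}(\mathit{enc}(\pdb)) = \db_i$.

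Finally I would invoke the commutation of $\raPlus$ queries with semiring homomorphisms (recalled in Section~\ref{sec:data_provenance}): for any $\raPlus$ query $\query$ and any $i \in \pwDom$,
\[
\pwWorld{i}\big(\query(\mathit{enc}(\pdb))\big) \;=\; \query\big(\pwWorld{i}(\mathit{enc}(\pdb))\big) \;=\; \query(\db_i).
\]
Hence $\mathit{dec}\big(\query(\mathit{enc}(\pdb))\big) = \{\query(\db_1),\dots,\query(\db_n)\} = \query(\pdb)$, so the encoding transports possible‑worlds semantics on incomplete $\semK$-databases to ordinary $\semK$-relational semantics over $\pwK$ and back; together with the bijection this is exactly the claimed isomorphism. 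I do not expect a genuine obstacle here: everything reduces to the product‑semiring structure of $\pwK$ together with the already‑available homomorphism‑commutation lemma. The only things needing a little care are (i) confirming the semiring axioms pass to the direct product so that each $\pwWorld{i}$ is really a homomorphism (the lattice/l‑semiring structure is not needed for this statement), and (ii) the set‑versus‑enumeration bookkeeping mentioned above.
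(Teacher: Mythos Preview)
Your proposal is correct and follows essentially the same route as the paper: define the obvious coordinatewise encoding/decoding, observe that each projection $\pwWorld{i}$ is a semiring homomorphism because $\pwK$ is the direct product $\semK^n$, and then invoke the commutation of $\raPlus$ with homomorphisms to conclude that $\pwWorld{i}(\query(\mathit{enc}(\pdb))) = \query(\db_i)$. The paper states the translation informally and defers the homomorphism fact to Lemma~\ref{lem:poss-world-is-homomorphism}; your write-up is simply a more explicit version of the same argument, including the set-versus-enumeration caveat the paper leaves implicit.
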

}

\iftechreport{
Translating between incomplete $\semK$-databases and $\pwK$-databases is trivial. Given an incomplete $\semK$-database with $n$ possible worlds $\{ \db_i \}$,  we create the corresponding $\pwK$-database by annotating each tuple $t$ with the vector $[ \db_1(t), \ldots,$ $\db_n(t)]$. In the other direction, given a $\pwK$-database $\pdb$ with vectors of length $n$, we construct the corresponding incomplete $\semK$-database by annotating each tuple $t$ with $\pdb(t)[i]$ in possible world $\db_i$.
In addition, we will show below that queries over $\pwK$-databases encode possible world semantics.
Thus, the following result holds and we can use incomplete $\semK$- and $\pwK$-databases interchangeably.

\includeIsomorphismProp
}

Observe that $\pwK$ is a semiring, since we define $\pwK$ using
% The construction we apply is
the $\card{W}$-way version of the \textit{product} operation of universal algebra, and % it is well known  that
 products of semirings are also semirings~\cite{burris2012algebra}.
% In the following we will make use of other direct products. Thus, it will be convenient to introduce the following notation: $\dpK{\semK}{n}$ denotes the $n$-way direct product of a semiring $\semK$.

 %%%%%%%%%%%%%%%%%%%%%%%%%%%%%%%%%%%%%%%%
\mypar{Possible Worlds}
We can extract the $\semK$-database for a possible world (e.g., the \textit{\termBGW}) from a $\pwK$-database by projecting on one dimension of its annotations. This can be modeled as a mapping $\pwWorld{i}: \pwkDom \to \kDom$ where $i \in \pwDom$:
\begin{align}
\pwWorld{i}(\pwe{k}) = \pwe{k}[i]
\end{align}
Recall that under possible world semantics, the result of a query $\query$ is the set of worlds computed by evaluating $\query$ over each world of the input.
As a sanity check, we would like to ensure that query processing over $\pwK$-relations matches this definition.
We can state possible world semantics equivalently as follows:
the content of a possible world in the query result ($\pwWorld{i}(\query(\pdb))$) is the result of evaluating query $\query$ over this possible world in the input ($\query(\pwWorld{i}(\pdb))$):
\iftechreport{That is, % queries over
$\pwK$-relations have possible worlds semantics iff $\pwWorld{i}$ commutes with queries:
}
  \begin{align*}
&\forall i \in \pwDom: \pwWorld{i}(\query(\pdb))= \query(\pwWorld{i}(\pdb))
  \end{align*}

%$\pwWorld{i}(\query(\pdb)) = \query(\pwWorld{i}(\pdb))$). % because this ensures that queries have possible world semantics.
% shown in the following lemma.

Recall from Section~\ref{sec:data_provenance} that a mapping between semirings commutes with queries iff it is a semiring homomorphism.
%Thus,  Lemma~\ref{lem:poss-world-is-homomorphism} shown below implies that queries over $\pwK$-relations have possible worlds semantics.
\iftechreport{Note that $\pwK$-relations admit a trivial extension to probabilistic data by defining a distribution $\prob: W \mapsto [0,1]$. See~\cite{FH18} for details.}

%%%%%%%%%%%%%%%%%%%%%%%%%%%%%%%%%%%%%%%%
\begin{Lemma}\label{lem:poss-world-is-homomorphism}
  For any semiring $\semK$ and possible world $i \in \pwDom$, mapping $\pwWorld{i}$ is a semiring homomorphism.
\end{Lemma}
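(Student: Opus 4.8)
The plan is to verify directly that $\pwWorld{i}$ satisfies the four defining conditions of a semiring homomorphism: it maps $\pwZero$ to $\zeroK$, maps $\pwOne$ to $\oneK$, and commutes with both $\pwAdd/\addK$ and $\pwMult/\multK$. All four facts are immediate from the componentwise definition of the possible-world semiring $\pwK$ together with the fact that $\pwWorld{i}(\pwe{k}) = \pwe{k}[i]$ simply reads off the $i$-th coordinate.

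First I would handle the two constants. By definition $\pwZero[i] = \zeroK$ and $\pwOne[i] = \oneK$ for every $i \in \pwDom$, so $\pwWorld{i}(\pwZero) = \pwZero[i] = \zeroK$ and $\pwWorld{i}(\pwOne) = \pwOne[i] = \oneK$. Next I would check compatibility with addition: for any $\pwe{k_1}, \pwe{k_2} \in \pwkDom$,
\begin{align*}
\pwWorld{i}(\pwe{k_1} \pwAdd \pwe{k_2}) = (\pwe{k_1} \pwAdd \pwe{k_2})[i] = \pwe{k_1}[i] \addK \pwe{k_2}[i] = \pwWorld{i}(\pwe{k_1}) \addK \pwWorld{i}(\pwe{k_2}),
\end{align*}
using only the definition of $\pwAdd$. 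The multiplicative case is identical with $\pwMult$ and $\multK$ in place of $\pwAdd$ and $\addK$. This establishes that $\pwWorld{i}$ is a semiring homomorphism.

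There is essentially no obstacle here: the statement is a sanity-check lemma and the content is entirely in the definition of $\pwK$ as the $\card{\pwDom}$-way product semiring, for which coordinate projections are the canonical homomorphisms (as already noted in the remark that products of semirings are semirings~\cite{burris2012algebra}). The only thing worth being careful about is that the claim is made for an arbitrary l-semiring $\semK$ and that nothing about the lattice structure ($\glbK$, natural order) is needed for this lemma — homomorphy concerns only $\addK$, $\multK$, $\zeroK$, $\oneK$. Consequently, by the observation in Section~\ref{sec:data_provenance} that semiring homomorphisms lift to $\semK$-relations and commute with $\raPlus$ queries, we immediately obtain $\pwWorld{i}(\query(\pdb)) = \query(\pwWorld{i}(\pdb))$, i.e.\ $\pwK$-relations carry possible-worlds semantics.
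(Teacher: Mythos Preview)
Your proof is correct and essentially identical to the paper's own proof: both verify the four homomorphism conditions directly by unfolding the componentwise definitions of $\pwZero$, $\pwOne$, $\pwAdd$, and $\pwMult$ and the definition $\pwWorld{i}(\pwe{k}) = \pwe{k}[i]$. Your additional remarks about the lattice structure being irrelevant and the consequence for possible-worlds semantics are accurate and do not depart from the paper's approach.
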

\ProofPointerToAppendix

\ifnottechreport{
\revm{A useful consequence of \Cref{lem:poss-world-is-homomorphism} is that $\semK$-databases and $\pwK$-databases are equivalent and interchangeable.}

% The following macro is defined a little higher in the doc to avoid
% having two copies of the proposition for the paper + tech report.
\includeIsomorphismProp
}

\iftechreport{\mypar{Probabilistic Data}
% We observe that
$\pwK$-relations admit a trivial extension to probabilistic data by defining a distribution $\prob: W \mapsto [0,1]$ such that $\sum_{i \in W} \prob(i) = 1$.
In contrast to classical frameworks for possible worlds, where the collection of worlds is a \emph{set},  $\pwK$ queries preserve the same $\card{W}$ possible worlds\footnote{Although it has no impact on our results, it is worth noting that the worlds in a $\pwK$  query result may not be distinct.}.
  % the number of \emph{distinct} possible worlds represented by a $\pwK$-relation can be much lower than $\card{W}$.}.
Hence, the input distribution $\prob$ applies, unchanged, to the $\card{W}$ possible query outputs.
}

%%%%%%%%%%%%%%%%%%%%%%%%%%%%%%%%%%%%%%%%
\iftechreport{\mypar{Certain and Possible Annotations} }
\ifnottechreport{\mypar{Certain Annotations} }
Since the annotation of a tuple $\tup$ in a $\pwK$-database is a vector recording $\tup$'s annotations in all worlds, certain annotations for incomplete $\semK$-databases are computed by applying $\GlbK$ to the set of annotations contained in the vector.
%Abusing notation, we write $\GlbK \setK$ for an $\pwK$-element $\setK$ to denote the application of $\GlbK$ to the set of elements contained in $\setK$.
Thus, the certain annotation of a tuple $\tup$ from a $\pwK$-DB $\pdb$ is computed as:
\ifnottechreport{
%\vspace{-4mm}
% \begin{center}
% \begin{align*}
  % \pwCertain(\setK) &= \GlbK(\setK)
  % &
$\pwCertain(\pdb,\tup) = \GlbK \left(\pdb(\tup)\right)$
% \end{align*}
% \end{center}
% \vspace*{-2mm}
}
\iftechreport{
\begin{align*}
\pwCertain(\setK) &= \GlbK(\setK) &   \pwCertain(\pdb,\tup) &= \pwCertain(\pdb(\tup))\\
\pwPossible(\setK) &= \LubK(\setK) &   \pwPossible(\pdb,\tup) &= \pwPossible(\pdb(\tup))
\end{align*}
}
\section{% Computing
  Labeling Schemes % and Best Guess Worlds
}\label{sec:create-labelings}

We define efficient (\ptime) labeling schemes for \ifnottechreport{three  incomplete data models \revm{and their probabilistic extensions}: Tuple\hyp{}Independent \revm{(probabilistic)} databases~\cite{suciu2011probabilistic}, \revm{(P)}C-Tables~\cite{DBLP:journals/jacm/ImielinskiL84,DBLP:journals/debu/GreenT06}, and \abbrXDBs~\cite{DBLP:conf/vldb/AgrawalBSHNSW06}. \reva{For further details and examples see~\cite{FH18}.} %also discuss how to extract best guess worlds.
}
\iftechreport{three existing incomplete data models: Tuple-Independent databases~\cite{suciu2011probabilistic}, the disjoint-independent x-relation model from~\cite{DBLP:conf/vldb/AgrawalBSHNSW06}, and C-Tables~\cite{DBLP:journals/jacm/ImielinskiL84}.
We also show how to extract a \termBGW from an $\pwK$-database derived from these models.
Since computing certain answers is hard in general, our \ptime labeling schemes
cannot be c-correct for all models.
}

\iftechreport{
%%%%%%%%%%%%%%%%%%%%%%%%%%%%%%%%%%%%%%%%%%%%%%%%%%%%%%%%%%%%
\subsection{Labeling Schemes}
\label{sec:labeling-schemes}
}

%%%%%%%%%%%%%%%%%%%%%%%%%%%%%%%%%%%%%%%%
\mypar{Tuple-Independent Databases}
\reva{A tuple\hyp{}independent  database (\abbrTI) $\pdb$ is a database where each tuple $\tup$ is marked as optional or not. % with its marginal probability $0<\probOf{\tup \in \pdb}\leq 1$.
The incomplete database represented by a \abbrTI $\pdb$ is the set of instances that include all non-optional tuples and some subset of the optional tuples. That is, the existence of a tuple $t$ is independent of the existence of any other tuple $t'$. % of all of its tuples. % For each tuple we decide whether the tuple should be part of the possible world or not.
In the probabilistic version of \abbrTIs each tuple is associated with its marginal probability. %of being marked as optional or not.
The probability of a possible world is then the product of the probability of all tuples included in the world multiplied by the product of $1 - \prob(t)$ for all tuples from $\pdb$ that are not part of the possible world.
}
% As mentioned in Section~\ref{sec:pres-c-compl}, a tuple\hyp{}independent database (\abbrTI) $\pdb$ is a database instance where each tuple $\tup$ is labeled either as optional or certain and the existence of a tuple $t$ is independent of the existence of any other tuple $t'$.
% annotated with its marginal probability $0<\probOf{\tup \in \pdb}\leq 1$.
% The probabilistic database represented by a \abbrTI $\pdb$ is the power-set of all of its tuples: For each tuple we decide whether the tuple should be part of the possible world or not.
% The existence of each tuple in a \abbrTI is assumed to be independent of the others.
% Thus, the probability of a possible world $\db$ is the product of probabilities of tuples that do appear in $\db$, and one minus the probability of tuples that do not.
% An incomplete version of \abbrTIs replaces the probability associated with a tuple with a boolean marker (the tuple is certain or not).
% $$\probOf{\db} = \left(\prod_{\tup\in \db} \probOf{\tup}\right) \cdot \left(\prod_{\tup \not \in \db} 1 - \probOf{\tup}\right)$$
% For a specific \abbrTI $\pdb$,
We define a labeling function $\doTULTI$ for \abbrTIs that returns a $\semB$-labeling $\TUL$ that annotates a tuple with $T$ (certain) iff it is not optional.
\reva{For probabilistic \abbrTIs{} we label tuples as certain if their marginal probability is $1$.}
%marginal probability is $1$.
%
% a best guess world $\db_{bg} \in \pdb$ can be derived by using the set of all tuples with probability larger than 0.5. $\forall_{t \in \pdb} :t \in \db_{bg} \leftrightarrow P(t)>0.5$. An uncertainty labeling $\TULTI$ can be derived from $\db_{bg}$ by labeling all tuples with probability 1.0 certain and all remaining tuples as uncertain:
\begin{align*}\label{eq:tip-labeling}
\TUL(\tup) = T &\Leftrightarrow  t\,\text{is not marked as optional} % \prob(\tup \in \pdb) = 1
    % F & \mathtext{if} \prob(\tup) = 0\\
% \TUL(\tup) &= \begin{cases}
%     T & \mathtext{if} \prob(\tup \in \pdb) = 1\\
%     % F & \mathtext{if} \prob(\tup) = 0\\
%     F & \mathtext{otherwise}\\
%   \end{cases}
\end{align*}
%For a given labeling $\TUL$ created by $\doTUL_{\abbrTI}$, $\lCert_{\semB}$ labels a tuple $\tup$ with its annotation in $\TUL$:
%$$\lCert_{\semB}(\TUL, \tup) = \TUL(\tup)$$
% $\doTUL_{\abbrTI}$ labelings are c-correct, since a tuple is certain in a probabilistic database iff its marginal probability is 1.
\ifnottechreport{\reva{To create a \abbrBGW from an incomplete \abbrTI{} $\pdb$, we can choose any subset of $\pdb$ that is a superset of the tuples we labeled as certain. For probabilistic \abbrTIs, we choose the world with the highest probability, which can be computed efficiently as follows:\hfill $\bgdb = \{ t \mid t \in \pdb \wedge \prob(t) \geq 0.5 \}$
}}
 % we include all tuples $t$ whose probability is greater than or equal to $0.5$.}}
% \BG{Should we use generalized version instead? If yes, then we need to define TI-K-databases, problem may be we do not have evidence yet that TIP bag makes sense really}
% \begin{align}
% \TULTI(\tup) &=  \begin{cases}
%     \db(\tup) & \mathtext{if} P(\tup) = 1\\
%     % F & \mathtext{if} P(\tup) = 0\\
%     \zeroK & \mathtext{otherwise}\\
%   \end{cases}
% \end{align}

% This labeling works for both set and bag semantics, in bag semantics by considering each multiplicity of a tuple as a single tuple with corresponding probability. \BG{We have to expand on this a bit more since bag TI database are not a common concept} The result labeling is both c-sound and c-complete, or in another word, c-correct. \BG{Even though this is trivially true, it needs a short explanation (tuples consist of regular values and are independent, so marginal probability determines certainty}

%%%%%%%%%%%%%%%%%%%%%%%%%%%%%%%%%%%%%%%%
\begin{Theorem}[$\doTULTI$ is c-correct]\label{theo:h-homo-UA}
Given a \abbrTI  $\pdb$, $\doTULTI(\pdb)$ is a c-correct labeling.
\end{Theorem}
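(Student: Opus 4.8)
The plan is to unfold the definition of c-correctness and verify it tuple by tuple. Recall that a $\semB$-labeling is c-correct for the incomplete $\semB$-database denoted by the \abbrTI{} $\pdb$ exactly when, for every tuple $\tup$, it assigns $\tup$ the annotation $\pwCertOf{\semB}(\pdb,\tup) = \GlbKof{\semB}\{\db(\tup) \mid \db \in \pdb\} = \bigwedge_{\db \in \pdb}\db(\tup)$: c-soundness gives one inequality in $\ordOf{\semB}$, c-completeness the other, and since $\ordOf{\semB}$ is a partial order the two together force equality. So it suffices to show that $\doTULTI(\pdb)$ maps $\tup$ to $T$ precisely when $\tup$ occurs in every possible world of $\pdb$, and to $F$ otherwise.

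I would then split on the role of $\tup$ in $\pdb$. First, if $\tup$ is a non-optional tuple of $\pdb$, then by the possible-worlds semantics of \abbrTIs{} every world contains $\tup$, so $\db(\tup) = T$ for all $\db$ and hence $\bigwedge_{\db}\db(\tup) = T$, matching $\doTULTI(\pdb)(\tup) = T$. Second, if $\tup$ is an optional tuple of $\pdb$, then the world obtained by taking all non-optional tuples together with none of the optional ones is a legal possible world that omits $\tup$; this witnesses $\bigwedge_{\db}\db(\tup) = F$, matching $\doTULTI(\pdb)(\tup) = F$. Third, if $\tup$ does not occur in $\pdb$ at all, then it occurs in no world, so $\bigwedge_{\db}\db(\tup) = \zeroOf{\semB} = F$, which is also the annotation $\doTULTI(\pdb)$ assigns to $\tup$ under the standard convention that a $\semK$-relation (here the labeling) annotates absent tuples with $\zeroOf{\semK}$. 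In every case the labeling agrees with $\pwCertOf{\semB}(\pdb,\cdot)$, establishing c-correctness.

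For the probabilistic \abbrTI{} variant the argument is identical once ``possible world'' is read as ``world of nonzero probability'': a tuple $\tup$ has marginal probability $1$ iff every world that omits $\tup$ has probability $0$, i.e., iff $\tup$ lies in every possible world — which is exactly the condition that $\doTULTI$ tests in this case — so the same case analysis goes through.

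I do not expect a genuine obstacle; the proof is a direct definition chase. The points that require care are (i) making the optional case constructive by actually exhibiting the witnessing world, which is immediate since the \abbrTI{} semantics always permits selecting none of the optional tuples; (ii) the boundary case of tuples absent from $\pdb$, where one appeals to the convention that absent tuples carry $\zeroOf{\semB}$ and reads the labeling formula as scoped to the tuples of $\pdb$; and (iii) keeping straight that c-correctness is stated relative to the incomplete $\semB$-database (equivalently, the $\pwK$-relation) \emph{denoted} by the \abbrTI{}, so that $\pwCertain$ is well defined before the comparison is made.
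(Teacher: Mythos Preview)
Your proposal is correct and follows the same approach as the paper, which simply asserts that a tuple in a \abbrTI{} is certain iff it is not optional (respectively, has probability $1$). You spell out the case analysis and the witnessing world that the paper leaves implicit, but the underlying argument is identical.
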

\begin{proof}
 Trivially holds. An incomplete (probabilistic) database tuple is certain iff it is not optional (if $\prob(\tup) = 1$).
%   % Let $\TUL = \doTUL_{TI}(\pdb)$. Recall that $\lCert_{\semB}(\tup) =  \TUL(\tup) = T$ iff $\prob(\tup)=1$. Since, in any probabilistic database a tuple is certain iff $\prob(\tup) = 1$,
%   % A tuple is part of every \abbrTI databases, $\prob(\tup)=1$ iff $\sum_{\db \in \pdb}\prob(\db)=\sum_{\db \in \pdb}P(\db)=1$. Which means $\tup$ exist in all possible worlds, % so $\TULTI$ would neither underestimate nor overestimate certain answers.
%   % Thus,
% %it follows that  $\TUL$ is c-correct.
\end{proof}
%%%%%%%%%%%%%%%%%%%%%%%%%%%%%%%%%%%%%%%%

\ifnottechreport{
 \mypar{\revm{(P)}\abbrCtables}
Tuples in a \textit{\abbrCtable}~\cite{DBLP:journals/jacm/ImielinskiL84} consist of constants and variables from a set $\Sigma$.
Tuples are annotated by a boolean expression $\phi_\pdb(\tup)$ over comparisons of values from $\Sigma \cup \aDom$, called the local condition.
Each variable assignment $v : \Sigma \rightarrow \aDom$  satisfying a boolean expression called the global condition
defines a possible world derived by retaining only tuples with local conditions satisfied under $v$.
% When the global condition is replaced by a joint probability distribution over allowable valuations, then the resulting structure is called a Probabilistic-, or PC-Table~\cite{DBLP:journals/debu/GreenT06}.
% Under the closed world assumption, C-Tables %and PC-Tables
% are closed under relational algebra (first order queries), and can be extended for closure under generalized projection~\cite{5447879} and aggregation~\cite{Green:2007:OFC:1247480.1247631}.
%C-tables generalize several other incomplete database models. For instance, a \textit{V-table} is a C-table without local and global conditions (all conditions are constant true) and a \textit{Codd-table} is a V-table where no variable appears more than once.
%This makes determining a c-correct labeling hard.
% While determining a c-correct labeling for a \abbrTI is straight-forward and efficient, % (essentially evaluating a selection query over each \abbrTI table in the database),
 % the same is not true for \abbrCtables.
%Under the closed world assumption, \abbrCtables are closed under full relational algebra.
Determining certainty of tuples based on a \abbrCtable is expensive.
Thus, we cannot hope for an efficient c-correct labeling scheme for \abbrCtables.
% It was shown early on that computing certain answers for first order queries is \conpcomplete~\cite{V86,AK91} even for \abbrCoddTables.
% Since we can represent the result of any first order query as a  \abbrCtable and evaluating a query in this fashion is efficient, it follows that determining whether a tuple is certain in a \abbrCtable cannot be in \ptime.
% Thus, we cannot hope for an efficient c-correct labeling scheme for \abbrCtables.
% For \abbrCtables, studies showed that it takes in general exponential complexity to compute the possible world with highest probability for \abbrCtables. We provide a heuristic solution that computes a uncertainty annotated good guess possible world $\db_{gg}$. For each \abbrCtable tuple $\tup \in \pdb_{c}$, If the tuples contains variables, replace all variables in the tuple by the best guess value of the variable and label the tuple as uncertain.
Instead, consider the following sufficient  condition for certainty.
If (1) a tuple $\tup$ in a \abbrCtable contains only constants and (2) its local condition $\phi_\pdb(\tup)$ is a tautology, then the tuple is certain.
% To see why this is the case, recall that under the closed-world assumption, a \abbrCtable represents a set of possible worlds, one for each valuation of the variables appearing in the \abbrCtable (to constants from $\aDom$).
% A tuple is part of a possible world corresponding to such a valuation if the tuple's local condition is satisfied under the valuation.
% Thus, a tuple consisting of constants only, with a local condition that is a tautology is part of every possible world represented by the \abbrCtable.
% If the local condition of a tuple is in conjunctive normal form (CNF) then checking whether it is a tautology is efficient (\texttt{PTIME}).
Our labeling scheme for \abbrCtables applies this sufficient condition and, thus, is c-sound.
Formally, $\TUL = \doTULC(\pdb)$, where for a C-table $\pdb$ and any tuple $\tup \in \tupDom$:
\begin{align*}
 \TUL(\tup) = T \Leftrightarrow \phi_\pdb(\tup) \mathtext{is in CNF} \wedge (\models \phi_\pdb(\tup))
  % \begin{cases}
  %   T & \mathtext{if} \phi_\pdb(\tup) \mathtext{is in CNF and} (\models \phi_\pdb(\tup)) \\
  %   F & \mathtext{otherwise}
  % \end{cases}
  % \TUL(\tup) =
 %  \begin{cases}
 %    T & \mathtext{if} \phi_\pdb(\tup) \mathtext{is in CNF and} (\models \phi_\pdb(\tup)) \\
 %    F & \mathtext{otherwise}
 %  \end{cases}
\end{align*}
% For tuples without variables, abel the tuple as certain iff $\models \phi(t)$ ($\phi(t)$ is a tautology), otherwise uncertain.
\reva{Green et. al.~\cite{DBLP:journals/debu/GreenT06} extended \abbrCtables by assigning each variable to a probability distribution over its possible values.
In the resulting ``\abbrPCtables,'' variables are treated as independent, i.e., the probability of a possible world is the product of probabilities of its defining variable assignments.
Our labeling scheme works for both \abbrCtables{} and \abbrPCtables{}.
}
\reva{To compute a \abbrBGW for a \abbrCtable, we can randomly choose an assignment for each variable in $\Sigma$. %For \abbrPCtables, while the variables are considered independent, determining the possible world with the highest probability is computationally hard.
For a \abbrPCtable, % with only a single tuple,
  computing the most likely possible world reduces to answering a % n existential
  query over a TI-DB (\texttt{\#P} in general~\cite{suciu2011probabilistic}).
  Thus, we heuristically select the highest probability value for each variable independently.
  Any such assignment must contain all certain tuples, since their local conditions are tautologies.
  Alternatively, a wide range of algorithms~\cite{DBLP:journals/vldb/GatterbauerS17,DBLP:conf/sigmod/FinkHOR11,DBLP:journals/vldb/FinkHO13,DBLP:journals/vldb/LiSD11} can compute an arbitrarily close approximation of the most likely  world.
}

%%%%%%%%%%%%%%%%%%%%%%%%%%%%%%%%%%%%%%%%
\begin{Theorem}[$\doTULC$ is c-sound]\label{theo:h-homo-UA-c-table}
Given an incomplete/probabilistic database  $\pdb$ encoded as a \abbrCtable or \abbrPCtable, $\doTULC(\pdb)$ is c-sound.
\end{Theorem}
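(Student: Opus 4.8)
The plan is to unfold the definition of c-soundness together with the possible-worlds semantics of (P)C-tables, and then show that the sufficient condition tested by $\doTULC$ — a ground tuple whose local condition is a CNF tautology — genuinely forces membership in every possible world.

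First I would reduce the statement to a membership claim. A labeling $\TUL$ is c-sound for $\pdb$ precisely when, for every $\tup \in \tupDom$, $\TUL(\tup)$ is a lower bound on the certain annotation, i.e. $\TUL(\tup) \ordK \pwCertain(\pdb,\tup)$; for the $\semB$-labeling produced by $\doTULC$ this just says that whenever $\doTULC(\pdb)(\tup) = T$ the tuple $\tup$ is present (annotation $\geqK \oneK$) in every possible world of $\pdb$. So it suffices to fix an arbitrary $\tup$ with $\doTULC(\pdb)(\tup) = T$ and an arbitrary possible world $\db$ of $\pdb$, and argue $\tup \in \db$.

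Next I would spell out the (P)C-table semantics in the form needed here: a possible world $\db$ is obtained from a valuation $v : \Sigma \rightarrow \aDom$ satisfying the global condition, and a ground tuple $\tup$ belongs to $\db$ iff it is the $v$-image of some C-table row whose local condition evaluates to true under $v$ (for \abbrPCtables{} valuations additionally carry probabilities, but the collection of worlds of positive probability is unchanged, and a tuple present in all of them has marginal probability $1$, which is the probabilistic reading of ``certain''). By definition of $\doTULC$, $\doTULC(\pdb)(\tup) = T$ means $\tup$ consists only of constants and $\phi_\pdb(\tup)$ is a CNF tautology. Since $\phi_\pdb(\tup)$ is a tautology it evaluates to true under \emph{every} valuation, in particular under the $v$ defining $\db$; and since $\tup$ is ground it is its own $v$-image. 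Hence $\tup \in \db$. As $\db$ was arbitrary, $\tup$ is certain, which gives $\doTULC(\pdb)(\tup) \ordK \pwCertain(\pdb,\tup)$ and therefore c-soundness. The argument is uniform over \abbrCtables{} and \abbrPCtables{}, since the probability distribution plays no role beyond the remark above.

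The main point to handle carefully — rather than a genuine obstacle — is the bookkeeping in the correspondence between ground tuples in $\tupDom$ and C-table rows, which may contain variables: one must be careful that $\phi_\pdb(\tup)$ denotes the local condition associated with the ground tuple $\tup$, and observe that the CNF restriction (present only to make the tautology test tractable) can only \emph{shrink} the set of tuples labeled $T$, so it never endangers the lower-bound property. No induction on query structure is required, since this theorem concerns the base labeling itself; preservation under queries is treated separately in \Cref{sec:c-soundess}.
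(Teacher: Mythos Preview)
Your proposal is correct and follows essentially the same argument as the paper's proof: both observe that a tuple labeled $T$ by $\doTULC$ has a tautological local condition, hence by the possible-worlds semantics of \abbrCtables{} it appears in every world and is therefore certain. Your version is considerably more detailed than the paper's terse three-sentence proof---in particular you make explicit the reduction of c-soundness to a per-world membership claim, the role of groundness under valuations, and the observation that the CNF restriction can only shrink the labeled set---but the underlying idea is identical.
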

%%%%%%%%%%%%%%%%%%%%%%%%%%%%%%%%%%%%%%%%
% \begin{proof}
% Let $\TUL = \doTULC(\pdb)$. 	By definition of $\TUL$, a tuple $\tup$ is labeled as certain iff $\phi_{\pdb}(\tup)$ is in CNF and $\models \phi_{\pdb}(t)$, which means the expression $\phi_{\pdb}$ is a tautology. By definition of \abbrCtables, a tuple $\tup$ exist in a possible world if $\phi_{\pdb}(\tup)$ evaluates to true in that possible world. Thus, % $\models \phi_{\pdb}(t)$ means $\phi_{\pdb}(\tup)$ evaluates to true in every possible world, so
% $\tup$ must exist in all possible worlds if $\phi_{\pdb}(\tup)$ is a tautology and $\TUL$ is c-sound.
% % all tuples marked as certain in $\TULC(\pdb, \db)$ must exist in all possible worlds. So $\TULC(\pdb, \db)$ is c-sound.
% \end{proof}
%%%%%%%%%%%%%%%%%%%%%%%%%%%%%%%%%%%%%%%%
}

\iftechreport{
%%%%%%%%%%%%%%%%%%%%%%%%%%%%%%%%%%%%%%%%
% \BG{Should be pc-tables? Or are we going to show how it works for incomplete databases?}
 \mypar{\abbrCtables}
\textit{C-Tables}~\cite{DBLP:journals/jacm/ImielinskiL84} use a set $\Sigma$ of variable symbols to define possible worlds.
Tuples are annotated by a boolean expression over comparisons of values from $\Sigma \cup \aDom$, called the local condition.
Each variable assignment $v : \Sigma \rightarrow \aDom$ satisfying a boolean expression called the global condition defines a possible world, derived by retaining only tuples with local conditions satisfied under $v$.
% When the global condition is replaced by a joint probability distribution over allowable valuations, then the resulting structure is called a Probabilistic-, or PC-Table~\cite{DBLP:journals/debu/GreenT06}.
% Under the closed world assumption, C-Tables %and PC-Tables
% are closed under relational algebra (first order queries), and can be extended for closure under generalized projection~\cite{5447879} and aggregation~\cite{Green:2007:OFC:1247480.1247631}.
%C-tables generalize several other incomplete database models. For instance, a \textit{V-table} is a C-table without local and global conditions (all conditions are constant true) and a \textit{Codd-table} is a V-table where no variable appears more than once.
% This makes determining a c-correct labeling hard.
% While determining a c-correct labeling for a \abbrTI is straight-forward and efficient, % (essentially evaluating a selection query over each \abbrTI table in the database),
 % the same is not true for \abbrCtables.
%Under the closed world assumption, \abbrCtables are closed under full relational algebra.
Computing certain answers for first order queries is \conpcomplete~\cite{V86,AK91} even for \abbrCoddTables.
Since the result of any first order query over a \abbrCoddTable can be represented as a \abbrCtable and evaluating a query in this fashion is efficient, it follows that determining whether a tuple is certain in a \abbrCtable cannot be in \ptime.
% For \abbrCtables, studies showed that it takes in general exponential complexity to compute the possible world with highest probability for \abbrCtables. We provide a heuristic solution that computes a uncertainty annotated good guess possible world $\db_{gg}$. For each \abbrCtable tuple $\tup \in \pdb_{c}$, If the tuples contains variables, replace all variables in the tuple by the best guess value of the variable and label the tuple as uncertain.
Instead, consider the following sufficient, but not necessary  condition for a tuple $\tup$ to be certain.
If (1) a tuple $\tup$ in a \abbrCtable contains only constants and (2) its local condition $\phi_\pdb(\tup)$ is a tautology, then the tuple is certain.
To see why this is the case, recall that under the closed-world assumption, a \abbrCtable represents a set of possible worlds, one for each valuation of the variables appearing in the \abbrCtable (to constants from $\aDom$).
A tuple is part of a possible world corresponding to such a valuation if the tuple's local condition is satisfied under the valuation.
Thus, a tuple consisting of constants only, with a local condition that is a tautology is part of every possible world represented by the \abbrCtable.
If the local condition of a tuple is in conjunctive normal form (CNF) then checking whether it is a tautology is efficient (\texttt{PTIME}).
Our labeling scheme for \abbrCtables applies this sufficient condition and, thus, is c-sound.
Formally, $\TUL = \doTULC(\pdb)$, where for a C-table $\pdb$ and any tuple $\tup \in \tupDom$:
\begin{align*}
 \TUL(\tup) = T \Leftrightarrow \phi_\pdb(\tup) \mathtext{is in CNF} \wedge (\models \phi_\pdb(\tup))
  % \begin{cases}
  %   T & \mathtext{if} \phi_\pdb(\tup) \mathtext{is in CNF and} (\models \phi_\pdb(\tup)) \\
  %   F & \mathtext{otherwise}
  % \end{cases}
  % \TUL(\tup) =
 %  \begin{cases}
 %    T & \mathtext{if} \phi_\pdb(\tup) \mathtext{is in CNF and} (\models \phi_\pdb(\tup)) \\
 %    F & \mathtext{otherwise}
 %  \end{cases}
\end{align*}
% For tuples without variables, abel the tuple as certain iff $\models \phi(t)$ ($\phi(t)$ is a tautology), otherwise uncertain.

\reva{Green et. al.~\cite{DBLP:journals/debu/GreenT06} introduced \abbrPCtables a probabilistic version of \abbrCtables where each variable is associated with a probability distribution over its possible values. Variables are considered independent of each other, i.e., the probability of a possible world is computed as the product of the probabilities of the individual variable assignments based on which the world was created. Our labeling scheme works for both the incomplete and probabilistic version of \abbrCtables{}.
}

%%%%%%%%%%%%%%%%%%%%%%%%%%%%%%%%%%%%%%%%
\begin{Theorem}[$\doTULC$ is c-sound]\label{theo:h-homo-UA-c-table}
Given an incomplete database  $\pdb$ encoded as \abbrCtables, $\doTULC(\pdb)$ is c-sound.
\end{Theorem}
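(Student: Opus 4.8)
The plan is to unfold the definition of c-soundness and reduce the claim to the elementary observation that a \abbrCtable row built entirely from constants and carrying a tautological local condition survives into \emph{every} possible world. First recall that c-soundness of the $\semB$-labeling $\TUL = \doTULC(\pdb)$ means that for every $\tup \in \tupDom$ we have $\TUL(\tup) \ordOf{\semB} \pwCertainB(\pdb,\tup)$. Since the natural order of $\semB$ is $F \ordOf{\semB} T$, this is equivalent to the implication: whenever $\TUL(\tup) = T$, the tuple $\tup$ must be certain in $\pdb$, i.e.\ it must appear in every possible world. So the whole argument reduces to showing that the sufficient condition encoded by $\doTULC$ --- namely that $\tup$ is a \abbrCtable row consisting only of constants whose local condition $\phi_\pdb(\tup)$ is (in CNF and is) a tautology --- indeed forces $\tup$ to be certain.

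Next I would invoke the (closed-world) semantics of \abbrCtables recalled above: the possible worlds of $\pdb$ are indexed by the valuations $v : \Sigma \to \aDom$ satisfying the global condition, and the world $\db_v$ induced by such a $v$ contains the ground tuple arising from a row $r$ exactly when $v \models \phi_\pdb(r)$ (for rows with variables, after instantiating them via $v$). Now if $\tup$ contains no variable, then $v$ leaves it unchanged for every $v$; and if $\phi_\pdb(\tup)$ is a tautology, then $v \models \phi_\pdb(\tup)$ for every $v$. Hence $\tup \in \db_v$ for every admissible valuation, so $\tup$ occurs in all possible worlds, and therefore $\pwCertainB(\pdb,\tup) = \glbKof{\semB}\bigl(\{\, \db(\tup) \mid \db \in \pdb \,\}\bigr) = T$, since every $\db(\tup) = T$. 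Together with the reduction of the first paragraph --- and, where one wants to phrase everything in terms of $\pwK$-relations, the equivalence of incomplete $\semB$-databases and $\pwK$-databases underlying \Cref{lem:poss-world-is-homomorphism} --- this yields $\TUL(\tup) \ordOf{\semB} \pwCertainB(\pdb,\tup)$ for all $\tup$, which is exactly c-soundness. Note that query semantics plays no role here: this is a statement about the labeling of the base instance only.

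I do not anticipate a real obstacle; the effort is bookkeeping rather than insight. The two points that need care are: (i) pinning down the translation from the syntactic object $\pdb$ (a \abbrCtable) to the incomplete $\semB$-database on which $\pwCertainB$ is evaluated, so that ``$\tup$ is in every world'' is literally what the GLB over the vector of annotations computes; and (ii) observing that the syntactic CNF requirement in the definition of $\doTULC$ is irrelevant to \emph{soundness} --- it is present solely so that the tautology test runs in \ptime --- so the argument rests entirely on the semantic clause $\models \phi_\pdb(\tup)$. Finally, the same proof covers the \abbrPCtable case verbatim: equipping the variables with independent probability distributions only restricts which valuations are realized and never creates new worlds, so any tuple that is certain in the underlying \abbrCtable remains certain.
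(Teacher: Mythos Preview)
Your proposal is correct and follows essentially the same approach as the paper's proof: both reduce c-soundness to showing that any tuple labeled $T$ by $\doTULC$ --- i.e., one with a tautological local condition --- must appear in every possible world by the semantics of \abbrCtables. Your write-up is in fact more careful than the paper's terse version, since you explicitly spell out the role of the constant-only requirement, note that the CNF restriction is irrelevant to soundness (it is only there for tractability of the tautology check), and address the \abbrPCtable case.
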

%%%%%%%%%%%%%%%%%%%%%%%%%%%%%%%%%%%%%%%%
% \begin{proof}
% Let $\TUL = \doTULC(\pdb)$. 	By definition of $\TUL$, a tuple $\tup$ is labeled as certain iff $\phi_{\pdb}(\tup)$ is in CNF and $\models \phi_{\pdb}(t)$, which means the expression $\phi_{\pdb}$ is a tautology. By definition of \abbrCtables, a tuple $\tup$ exist in a possible world if $\phi_{\pdb}(\tup)$ evaluates to true in that possible world. Thus, % $\models \phi_{\pdb}(t)$ means $\phi_{\pdb}(\tup)$ evaluates to true in every possible world, so
% $\tup$ must exist in all possible worlds if $\phi_{\pdb}(\tup)$ is a tautology and $\TUL$ is c-sound.
% % all tuples marked as certain in $\TULC(\pdb, \db)$ must exist in all possible worlds. So $\TULC(\pdb, \db)$ is c-sound.
% \end{proof}
%%%%%%%%%%%%%%%%%%%%%%%%%%%%%%%%%%%%%%%%

Note that $\TUL$ is not guaranteed to be c-correct. For instance, a tuple $\tup$ consisting only of constants and for which $\phi_{\pdb}(\tup)$ is a tautology is guaranteed to be certain, but $\TUL(\tup) = F$ if  $\phi_{\pdb}(\tup)$ is not in CNF.% , is certain, even though $\TUL(\tup) = F$.

\begin{Example}
  Consider a \abbrCtable consisting of two tuples $t_1 = (1,X)$ with $\phi_\pdb(t_1) \defas (X = 1)$ and $t_2 = (1,1)$ with $\phi_{\pdb}(t_2) \defas (X \neq 1)$. $\doTULC$ would mark $(1,1)$ as uncertain, because even though this tuple exists in the \abbrCtable and it's local condition is in CNF, the local condition is not a tautology. However, tuple $(1,1)$ is certain since either $X=1$ and then first tuple evaluates to $(1,1)$ or $X \neq 1$ and the second tuple is included in the possible world.
%   from Example~\ref{ex:geocoding} and the corresponding probabilistic query result shown in Figure~\ref{fig:example:pdb}.
% Even though the original geocoding assigns two possible locations to location 3, both sets of coordinates fall into the same neighborhood (Kingsley).
% In other words, the result is certain even though it is derived from inputs that are uncertain.
% $\doTUL_{\abbrCtable}$ correctly marks each input as uncertain, but

% \begin{center}
% {  \upshape
%   \begin{tabular}{cc|c}
%     \textbf{A} & \textbf{B} & $\phi_\pdb$ \\ \hline
%     1 & X & $ X = 1$\\
%     1 & 1 & $ X \neq 1$ \\
%   \end{tabular}
%   }
%   \end{center}
% has several disjoint lineages that jointly cover the set of all possible worlds.
% $\doTUL_{\abbrCtable}$ correctly marks each individual lineage as uncertain, but loses the ability to distinguish alternative lineages that jointly cover all possible worlds.
\end{Example}
}

%%%%%%%%%%%%%%%%%%%%%%%%%%%%%%%%%%%%%%%%%%%%%%%%%%%%%%%%%%%%
\mypar{\abbrXDBs}
\reva{An \abbrXDB~\cite{DBLP:conf/vldb/AgrawalBSHNSW06} is a set of x-relations, which are sets of x-tuples. An
x-tuple $\xTup$ is a set of tuples $\{t_1, \ldots, t_n\}$ with a label indicating whether the x-tuple is optional. %$\{\xTupAlt{1}, \ldots \xTupAlt{n} \}$.
% A probability distribution $\prob$ is defined over the alternatives of each x-tuple $\xTup$, requiring that $\sum_{t \in \xTup} \prob(t) \leq 1$.
% $\sum_{i=1}^{\card{\xTup}} \prob(\xTupAlt{i}) \leq 1$.
Each x-tuple is assumed to be independent of the others, and its alternatives are assumed to be disjoint.
% BORIS: DO WE NEED TO ASSUME THIS? Furthermore, we assume that no two x-tuples in a relation share alternatives.
Thus, a possible world of an x-relation $\rel$ is constructed by selecting at most one alternative $t \in \xTup$ for every x-tuple $\xTup$ from $\rel$ if $\xTup$ is optional, or exactly one if it is not optional.
The probabilistic version of \abbrXDBs (also called a Block-Independent or BI-DB) as introduced in~\cite{DBLP:conf/vldb/AgrawalBSHNSW06} assigns each alternative a probability and we require that $\prob(\xTup) = \sum_{t \in \xTup} \prob(t) \leq 1$. Thus, a tuple is optional if $\prob(\xTup) < 1$ and there is no need to use labels to mark optional tuples.}
% $\prob(\xTup) =1$.
%
% As introduced in Section~\ref{sec:pres-c-compl}, an \abbrXDB consists of x-tuples $\xTup$ which are sets of tuples called alternatives. Each x-tuple may be optional or certain.
% We use the disjoint-independent model from~\cite{DBLP:conf/vldb/AgrawalBSHNSW06} where each relation is a set of x-tuples.
% Such relations are called x-relations and x-databases are sets of x-relations.
% An x-tuple $\xTup$ is a set of tuples $\{t_1, \ldots, t_n\}$. %$\{\xTupAlt{1}, \ldots \xTupAlt{n} \}$.
We use $\card{\xTup}$ to denote the number of alternatives of x-tuple $\xTup$.
% A probability distribution $\prob$ is defined over the alternatives of each x-tuple $\xTup$, requiring that $\sum_{t \in \xTup} \prob(t) \leq 1$. % $\sum_{i=1}^{\card{\xTup}} \prob(\xTupAlt{i}) \leq 1$.
% In the following we use $\prob(\xTup)$ to denote the probability that some alternative of $\xTup$ is present ($\sum_{t \in \xTup} \prob(t)$). % ($\sum_{i=1}^{\card{\xTup}} \prob(\xTupAlt{i})$).
%Each x-tuple is assumed to be independent of the others, and its alternatives are assumed to be disjoint.
% BORIS: DO WE NEED TO ASSUME THIS? Furthermore, we assume that no two x-tuples in a relation share alternatives.
% Thus, a possible world of an x-relation $\rel$ is constructed by selecting at most one alternative $t \in \xTup$ for every x-tuple $\xTup$ from $\rel$ that is optional, or exactly one if $\xTup$ is not optional. % $\prob(\xTup) =1$.
% The probability of the resulting world is:
% $$
% \prob(\db) = \prod_{t \in \db: t \in \xTup} \prob(t) \cdot \prod_{\not \exists t \in \db: t \in \xTup} 1 -  \prob(\xTup)
% $$
% An incomplete version of x-relations can be defined by replacing the probability distribution with an indication whether an x-tuple is certain (one alternative has to occur in every possible world).
We define a labeling scheme $\doTULX$ for x-relations that annotates a tuple $t$ from an \abbrXDB $\pdb$ with $T$ if $t$ is the single, non-optional alternative of an x-tuple, and $F$ otherwise. In probabilistic \abbrXDBs we check $\prob(\xTup) = 1$.
$$
\TUL(\tup) = T \Leftrightarrow \exists \xTup \in \pdb: \card{\xTup} = 1 \wedge \xTup\,\text{is not optional} % \prob(t) = 1
$$

%In the probabilistic version of \abbrXtables, each x-tuple is associated with a probability distribution over its alternatives.

%%%%%%%%%%%%%%%%%%%%%%%%%%%%%%%%%%%%%%%%
\begin{Theorem}[$\doTULX$ is c-correct]\label{theo:h-homo-UA-TUL-correct}
Given a  database $\pdb$, $\doTULX(\pdb)$ is a c-correct labeling.
\end{Theorem}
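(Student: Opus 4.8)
The plan is to show that $\doTULX(\pdb)$ is c-correct by establishing both directions of the biconditional $\TUL(\tup) = T \iff \pwCertain(\pdb,\tup) = T$, where $\TUL = \doTULX(\pdb)$ and $\pwCertain(\pdb,\tup)$ is computed over the incomplete $\semB$-database encoded by the x-DB $\pdb$ (using \Cref{lem:poss-world-is-homomorphism} to pass freely between the incomplete $\semB$-database and its $\pwK$-encoding). Since for $\semB$ the natural order is $F \ordOf{\semB} T$, c-soundness is the implication ``$\TUL(\tup) = T$ implies $\tup$ is certain'' and c-completeness is the converse; proving both gives c-correctness. Throughout I would treat the incomplete and probabilistic (BI-DB) cases uniformly by recalling that a possible world of an x-relation selects exactly one alternative from each non-optional x-tuple and at most one from each optional one, and that in the probabilistic refinement $\xTup$ counts as non-optional precisely when $\prob(\xTup)=1$; I would also adopt the standard convention that every listed alternative has positive probability, so that the support of the distribution is exactly this set of worlds.

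For c-soundness, suppose $\TUL(\tup) = T$. By definition of $\doTULX$ there is an x-tuple $\xTup \in \pdb$ with $\card{\xTup} = 1$ that is not optional and whose single alternative is $\tup$. In any possible world $\db$, a non-optional x-tuple contributes exactly one of its alternatives; since $\xTup$ has only $\tup$ as an alternative, $\tup \in \db$. As this holds for every $\db \in \pdb$, the tuple is certain, i.e. $\pwCertain(\pdb,\tup) = T$. This direction is essentially immediate.

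For c-completeness I would argue by contraposition: assuming $\TUL(\tup) = F$, I construct a possible world omitting $\tup$. If $\TUL(\tup) = F$, then every x-tuple of $\pdb$ that has $\tup$ among its alternatives is either optional or has at least two alternatives. Build $\db$ by processing each x-tuple $\xTup$ of $\pdb$ independently: if $\tup \notin \xTup$, make any admissible selection; if $\tup \in \xTup$ and $\xTup$ is optional, select nothing from $\xTup$; and if $\tup \in \xTup$ with $\card{\xTup} \ge 2$, select some alternative different from $\tup$ (in the probabilistic case, one with positive probability, which exists by the positive-probability convention). Each choice is admissible, so $\db$ is a genuine possible world of positive probability, and since a single x-tuple contributes at most one of its (disjoint) alternatives, no x-tuple contributes $\tup$; hence $\tup \notin \db$ and $\pwCertain(\pdb,\tup) = F$.

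The only step that needs care -- and therefore the main obstacle -- is this last construction: $\tup$ may occur as an alternative of several distinct x-tuples, so I must avoid $\tup$ in all of them simultaneously while still producing a legal (positive-probability) world, rather than accidentally forcing $\tup$ back in through another x-tuple. The per-x-tuple independence of selections in the x-DB model is exactly what makes the construction go through, and in the probabilistic case the positive-probability assumption on alternatives is what rules out the degenerate situation in which a two-alternative block behaves like a one-alternative certain block. Combining the two directions yields $\TUL = \pwCertain(\pdb,\cdot)$, i.e. $\doTULX(\pdb)$ is c-correct.
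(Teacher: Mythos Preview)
Your proposal is correct and follows essentially the same approach as the paper: establish the biconditional ``$\tup$ is certain iff some x-tuple has $\tup$ as its sole, non-optional alternative,'' and use the independence of x-tuples to rule out the possibility that $\tup$ becomes certain by appearing as an alternative of several distinct x-tuples. The paper's proof is a one-sentence sketch of exactly this; your version simply spells out the two directions and the world-construction for c-completeness in full detail.
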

% \begin{proof}
% Trivially holds, since a tuple is certain iff $\prob(\tup) = 1$.
%   % Let $\TUL = \doTUL_{TI}(\pdb)$. Recall that $\lCert_{\semB}(\tup) =  \TUL(\tup) = T$ iff $\prob(\tup)=1$. Since, in any probabilistic database a tuple is certain iff $\prob(\tup) = 1$,
%   % A tuple is part of every \abbrTI databases, $\prob(\tup)=1$ iff $\sum_{\db \in \pdb}\prob(\db)=\sum_{\db \in \pdb}P(\db)=1$. Which means $\tup$ exist in all possible worlds, % so $\TULTI$ would neither underestimate nor overestimate certain answers.
%   % Thus,
% %it follows that  $\TUL$ is c-correct.
%  \end{proof}
%%%%%%%%%%%%%%%%%%%%%%%%%%%%%%%%%%%%%%%%

\ifnottechreport{
\reva{For probabilistic \abbrXDBs, the possible world with highest probability can be efficiently computed and used as the \abbrBGW.
  Since the x-tuples in an \abbrXDB are independent, the probability of a possible world from an \abbrXDB $\pdb$ is maximized by
  including the highest probability alternative for each x-tuple $\xTup$ (i.e., $\argmax_{t \in \xTup} \prob(t)$), or no tuple if that is the highest probability option (i.e., $\max_{t \in \xTup} \prob(t) <  (1 - \prob(\xTup))$).
  For incomplete \abbrXDBs, we choose a random alternative for each x-tuple.}
}

\iftechreport{
  %%%%%%%%%%%%%%%%%%%%%%%%%%%%%%%%%%%%%%%%
  \subsection{Extracting \termBGWs{}}
  \label{sec:deriv-poss-worlds}

  Computing some possible world is trivial for most incomplete and probabilistic data models.
  However, for the case of probabilistic data models we are particularly interested in the highest-probability  world (the best guess world). \reva{We now discuss in more detail how we choose the \abbrBGW $\bgdb$ for the data models for which we have introduced labeling schemes above.}
  %As an example we discuss how to compute a possible world for \abbrTI database and \abbrCtables.   % for several incomplete and probabilistic data models.

  %%%%%%%%%%%%%%%%%%%%%%%%%%%%%%%%%%%%%%%%%%%%%%%%%%%%%%%%%%%%
  \mypar{\abbrTI}
  For a \abbrTI $\pdb$, the best guess world consists of all tuples $\tup$ such that $\prob(\tup) \geq 0.5$. %with probability larger than or equal to  $0.5$.
  % i.e., $\prob(\tup) \geq 0.5$ is a best guess world.
  To understand why this is the case recall that the probability of a world from a  \abbrTI is the product of the probabilities of included tuples with one minus the probability of excluded tuples.
  This probability is maximized by including only tuples where $\prob(\tup) \geq 0.5$. \reva{For the incomplete version of \abbrTIs we have to include all non-optional tuples and can choose arbitrarily which optional tuples to include in $\bgdb$.}

  %%%%%%%%%%%%%%%%%%%%%%%%%%%%%%%%%%%%%%%%%%%%%%%%%%%%%%%%%%%%
  \mypar{\abbrPCtables}
  For a \abbrPCtable, % with only a single tuple,
  computing the most likely possible world reduces to answering a % n existential
  query over the database, which is known to be \texttt{\#P} in general~\cite{suciu2011probabilistic}.
  Specific tables (e.g., those generated by ``safe'' queries~\cite{suciu2011probabilistic}) admit \ptime solutions.
  Alternatively, there exist a wide range of algorithms~\cite{DBLP:journals/vldb/GatterbauerS17,DBLP:conf/sigmod/FinkHOR11,DBLP:journals/vldb/FinkHO13,DBLP:journals/vldb/LiSD11} that can be used to compute an arbitrarily close approximation of the most likely  world.

  %%%%%%%%%%%%%%%%%%%%%%%%%%%%%%%%%%%%%%%%%%%%%%%%%%%%%%%%%%%%
  \mypar{Disjoint-independent databases}
  Since the x-tuples in an \abbrXDB are independent of each other, the probability of a possible world from an \abbrXDB $\pdb$ is maximized by
  including for every x-tuple $\xTup$ its alternative with the highest probability $\argmax_{t \in \xTup} \prob(t)$ or no alternative if $\max_{t \in \xTup} \prob(t) <  (1 - \prob(\xTup))$, i.e., if the probability of not including any alternative for the x-tuple is higher than the highest probability of an alternative for the x-tuple.
  % For \abbrX the best guess world is constructed by

}%/end iftechreport

% A \textit{tuple uncertainty labelling} for a probabilistic database $\pdb$ and one of its possible worlds $D$, is a function \BG{RowD style}

% An \textit{attribute uncertainty labelling} (\textit{UAL}) for a probabilistic database $\pdb$ and one of its possible worlds $D$.

% Having defined such labelings we want to study whether a labelling scheme has certain possible properties.

%%% Local Variables:
%%% mode: latex
%%% TeX-master: "uadb"
%%% End:

% -*- root: uadb.tex -*-
%%%%%%%%%%%%%%%%%%%%%%%%%%%%%%%%%%%%%%%%%%%%%%%%
\section{UA-Databases} \label{sec:UA-model}

We now introduce \textit{\abbrUADBs} (uncertainty-annotated databases) which encode both under\hyp{} and over\hyp{}approximations of the certain annotations of an incomplete $\semK$-database $\pdb$.
% We require the over\hyp{}approximation to be a possible world $\db \in \pdb$ (the \termBGW).
%
This is achieved by annotating every tuple with a pair $\uae{d}{c} \in \doubleDom{\kDom}$ where $d$ records the tuple's annotation in the \abbrBGW ($\db(t)$, for some $\db \in \pdb$) and $c$ stores the under-approximation of the tuple's certain annotation (i.e., $c \ordK \pwCertain(\pdb,t) \ordK d$). % in the uncertainty labeling.
%That is, by construction we have $c \ordK \pwCertain(\pdb,t) \ordK d$.
Both under\hyp{} and over\hyp{}approximations of certain annotations assign tuples annotations from $\semK$, making them $\semK$-databases.
This will be important for proving that these bounds are preserved under queries.
%
%As motivated in the introduction,
Every possible world is by definition a superset of the certain tuples, so a \abbrUADB contains all certain answers, even though the certainty of some answers may be underestimated.
% In other words,
% if a tuple $t$ is annotated with $\uae{d}{c}$, then $\TUL(t) = c$ and $\pwWorld{i}(\pdb)(t) = d$ if $i$ is the identifier of the encoded possible world.
We start by formally defining the annotation domains of \abbrUADBs and mappings that extract the two components of an annotation. Afterwards, we state the main result of this section: queries over \abbrUADBs preserve the under\hyp{} and over\hyp{}approximation of certain annotations.

% (1) a single possible world $\db$ from a $\pwK$-database $\pdb$ and (2) an uncertainty labeling $\TUL$ for $\pdb$. Importantly, we can efficiently restore $\db$ and $\TUL$ from a \abbrUADB.

%%%%%%%%%%%%%%%%%%%%%%%%%%%%%%%%%%%%%%%%%%%%%%%%%%%%%%%%%%%%
\subsection{UA-semirings}\label{sec:ua-semirings}
We define a UA-semiring as a $\semK^2$-semiring, i.e., the direct product of a semiring $\semK$ with itself (see Section~\ref{sec:ua-semirings}).
\BG{Afterwards, we prove that the result of a query over a \abbrUADB encodes the result of the query over the input possible world and uncertainty labeling.}
\iftechreport{In the following we will write $k k'$ instead of $k \multK k'$ if the semiring $\semK$ is clear from the context.  } Recall that operations in $\aDoubleK = (\doubleDom{\kDom}, \addOf{\aDoubleK}, \multOf{\aDoubleK}, \zeroOf{\aDoubleK}, \oneOf{\aDoubleK})$ are defined pointwise, e.g., $[k_1, {k_1}'] \multOf{\aDoubleK} [k_2,{k_2}'] = [k_1 \multK k_2, {k_1}' \multK {k_2}']$.
% we annotate tuples using a pair of natural numbers. Row level annotation in UA-relations is a pair of natural numbers $\uae{u}{c} \in (\mathbb{N} \times \mathbb{N})$ where $u$ stand for the number of uncertain tuples and $c$ stand for the number of certain tuples in bag semantics. We name relations annotated with such structure as UA-relations. Row determinism is a mapping from tuple $t$ to a 2D vector: $\uae{u}{c}$.

%%%%%%%%%%%%%%%%%%%%%%%%%%%%%%%%%%%%%%%%
\begin{Definition}[UA-semiring]
  Let $\semK$ be a semiring. We define the corresponding UA-semiring
  $\uaK{\semK}=\semK^2$ %  (\doubleDom{\kDom}, \uaAdd{\semK}, \uaMult{\semK}, \uaZero{\semK}, \uaOne{\semK})$$
%  where for any $\uae{u_1}{ c_1}, \uae{u_2}{ c_2} \in \doubleDom{\kDom}$:
% %
%  \begin{align*}
%     \uae{u_1}{c_1} \uaAdd{\semK} \uae{u_2}{c_2} &= \uae{u_1 \addK u_2}{ c_1 \addK c_2}\\
%     \uae{u_1}{c_1} \uaMult{\semK} \uae{u_2}{c_2} &= \uae{u_1 u_2 \addK u_1 c_2 \addK u_2 c_1}{c_1 c_2}\\
%     \uaZero{\semK} &= \uae{\zeroK}{\zeroK}\\
%     \uaOne{\semK} &= \uae{\zeroK}{ \oneK}
% \end{align*}
\end{Definition}
%%%%%%%%%%%%%%%%%%%%%%%%%%%%%%%%%%%%%%%%

Note that for any $\semK$, $\uaK{\semK}$ is a % commutative
semiring, because, as mentioned earlier, products of semirings are semirings.

\subsection{Creating UA-DBs}
\label{sec:creat-ua-relat}

We now discuss how to derive UA-relations from a $\pwK$-database or a compact encoding of a $\pwK$-database using some uncertain data model like c-tables. %possible worlds and labelings  from Section~\ref{sec:create-labelings}.
Consider a $\pwK$-database $\pdb$, let $\db$ be one of its  worlds and  $\TUL$ a $\semK$-database   under\hyp{}approximating the certain annotations of $\pdb$. We refer to $\TUL$ as a \textit{labeling} and will study such labelings in depth in Section~\ref{sec:uncert-labell-1} and~\ref{sec:c-soundess}.
We cover in Section~\ref{sec:create-labelings} how to generate a \abbrUADB from common uncertain data models by extracting a (best-guess) world $\db$ and a labeling $\TUL$. %  using the techniques presented in
% and $\lCert$  the certainty function of the labeling scheme which produced $\TUL$.
We construct a \abbrUADB $\uadb$  as an \textit{encoding} of  $\db$ and $\TUL$  by setting for every tuple $\tup$:
$$
\uadb(\tup) = \uae{\db(\tup)}{\TUL(\tup)}
$$
For a \abbrUADB $\uadb$ constructed in this fashion we say that $\uadb$ \textit{approximates} $\pdb$ by encoding $(\TUL, \db)$.
%To create a \abbrUADB that encodes both $\db$ and $\TUL$ we annotate each tuple $\tup$ with $\uae{d}{c}$ such that $c = \TUL(\tup)$ and $d = \db(\tup)$.
% For that we have to be able to compute $u$ based on $c$ and $\db(\tup)$.
% We use the monus operation $\monK$ as introduced in Section~\ref{sec:data_provenance}.
% Recall that the monus is defined for all naturally ordered semirings (and, thus also l-semirings). % for which for each pair of elements $k_1, k_2 \in \kDom$ the set $\{ k_3 \mid k_1 + k_3 \geqK k_2 \}$ has a smallest member.
% Since for any c-sound uncertainty labeling, we have $c \ordK \db(\tup)$, it follows that $c \addK k' = \db(\tup)$ for some $k'$.
% % and based on the definition of monus we have $\db(\tup) \monK c = k'$.
% Thus, setting $u = k' = \db(\tup) \monK c$ the fulfills condition $u \addK c = \db(\tup)$.
% Based on this observation we encode a possible world $\db$ and uncertainty labeling  $\TUL$ as a \abbrUADB $\uadb$ such that for all $\tup$:
% %
% \begin{align*}
% \uadb(\tup) = \uae{ \db(\tup) \monK \lCert(\TUL, \tup)}{\lCert(\TUL, \tup)}
% \end{align*}
%
Given a \abbrUADB $\uadb$,  % derived from a possible world $\db$,
we would like to be able to restore $\TUL$ and $\db$ from $\uadb$.
%Similarly, we would like to be able to recover the labeling $\TUL$ we started from.
%If this is possible then we can confidently claim that  $\uadb$ encodes the possible world $\db$ and labeling $\TUL$.
% Furthermore, we would like to determine which part of an annotation is uncertain.
For that we define two morphisms $\doubleK{\semK} \to \semK$:% which for any $\uae{d}{c} \in \doubleDom{\kDom}$ are defined:\\[-2mm]
%\begin{align*}
{%\small
$$ % \uOf{\uae{u}{c}} = u
   %  \hspace{5mm}
  \cOf{\uae{d}{c}} = c
    \hspace{5mm}
  \dOf{\uae{d}{c}} = d$$
}
%\end{align*}
% $$h_{\dOfName} (\uae{u}{c}) = (u + c)$$
%
Note that by construction, if an \abbrUADB $\uadb$  is an \textit{encoding} of a possible world $\db$ and a labeling $\TUL$ of a $\pwK$-database $\pdb$  then:
%
%\begin{align*}
$  \dOf{\uadb} = \db
    \hspace{2mm} \textbf{and} \hspace{2mm}
\cOf{\uadb} = \TUL$.
%\end{align*}

% This can be done by projecting each annotation on its second element ($c$) using a mapping $h_{\cOfName}: \doubleK{\semK} \to \kDom$ defined for every $\uae{u}{c} \in \doubleDom{\kDom}$ as:
% \begin{align*}
% h_{\cOfName}(\uae{u}{c}) = c
% \end{align*}

%%%%%%%%%%%%%%%%%%%%%%%%%%%%%%%%%%%%%%%%%%%%%%%%%%%%%%%%%%%%%%%%%%%%%%%%%%%%%%%%
\subsection{Querying \abbrUADBs{}}
\label{sec:creat-ua-query}

% \subsection{Preservation of C-soundness and Encoded Worlds}
% \label{sec:correctness}

We now state the main result of this section: query evaluation over \abbrUADBs preserves the under\hyp{}approximation and over\hyp{}approximation
of certain annotations.
% are encodings of a possible world $\db$ and labeling $\TUL$ preserves c-soundness for the labeling $\TUL$ and the possible world $\db$.
\iftechreport{
To prove the main result, we first show that $h_{\cOfName}$ and $h_{\dOfName}$ are homomorphisms, because this implies that queries over \abbrUADBs are evaluated over the $c$ and the $d$ component of an annotation independently. Thus, we can prove the result for under\hyp{} and over\hyp{}approximations separately.
For over\hyp{}approximation we can trivially show an even better result: By definition (Section~\ref{sec:pwk-relations}) the possible world used as an over-approximation is preserved exactly.
Hence, the over-approximation property is preserved and \abbrUADBs are also backwards compatible with \abbrBGQP.
For under\hyp{}approximations we have to show that query evaluation preserves
%evaluating a query over an under\hyp{}approximation preserves the
under\hyp{}approximations. This part is more involved and we will prove this result in Section~\ref{sec:c-soundess}.
}

% Based on Theorem~\ref{theo:UA-preserve-approx}, queries not only preserve the under\hyp{} and over\hyp{}approximation of certain annotations encoded by a \abbrUADB, but also the possible world $\db$.

%%%%%%%%%%%%%%%%%%%%%%%%%%%%%%%%%%%%%%%%
\begin{Theorem}[Queries Preserve Bounds]\label{theo:UA-preserve-approx}
  Let $\pdb$ be a $\pwK$-database, $\TUL$ a labeling for $\pwK$, $\db$ one of its possible worlds, and $\uadb$ be the \abbrUADB encoding the pair $(\TUL, \db)$. Clearly $\uadb$ approximates $\pdb$. Then  $\query(\uadb)$ is an approximation for $\query(\pdb)$  encoding the pair $(\query(\db), \query(\TUL))$. %  i.e., $h_{\dOfName}(\query(\uadb)) = \query(h_{\dOfName}(\uadb))$.
\end{Theorem}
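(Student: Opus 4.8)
The plan is to reduce the theorem to two independent claims — one about the $d$-component (the best-guess world) and one about the $c$-component (the labeling) — by exploiting that $\uaK{\semK} = \semK^2$ is a direct product semiring and that the two projections $h_{\dOfName}, h_{\cOfName}: \doubleK{\semK} \to \semK$ are semiring homomorphisms. First I would verify that $h_{\dOfName}$ and $h_{\cOfName}$ are indeed semiring homomorphisms: each maps $\zeroOf{\aDoubleK} = [\zeroK,\zeroK]$ and $\oneOf{\aDoubleK} = [\oneK,\oneK]$ to $\zeroK$ and $\oneK$ respectively, and since $\addOf{\aDoubleK}$ and $\multOf{\aDoubleK}$ are defined pointwise, projecting onto either coordinate distributes over $\addsymbol$ and $\multsymb$. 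By the result of Green et al.\ recalled in Section~\ref{sec:data_provenance}, any semiring homomorphism lifts to $\semK$-relations and commutes with $\raPlus$ queries. Hence $h_{\dOfName}(\query(\uadb)) = \query(h_{\dOfName}(\uadb))$ and likewise for $h_{\cOfName}$.

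Next I would combine this with the construction of $\uadb$ in Section~\ref{sec:creat-ua-relat}: since $\uadb$ encodes $(\TUL,\db)$, we have $h_{\dOfName}(\uadb) = \db$ and $h_{\cOfName}(\uadb) = \TUL$. Therefore $h_{\dOfName}(\query(\uadb)) = \query(\db)$ and $h_{\cOfName}(\query(\uadb)) = \query(\TUL)$, which is exactly the statement that $\query(\uadb)$ encodes the pair $(\query(\db),\query(\TUL))$ — because an annotation $\uae{d}{c}$ in $\query(\uadb)$ is completely determined by its two projections, and the two projections of $\query(\uadb)$ are $\query(\db)$ and $\query(\TUL)$ respectively. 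For the ``approximation'' part of the conclusion I would argue in two halves. For the over-approximation $\query(\pdb)(\tup) \ordK \query(\db)(\tup)$ component-wise: by Lemma~\ref{lem:poss-world-is-homomorphism}, $\pwWorld{i}$ is a homomorphism, so $\query(\db) = \query(\pwWorld{i}(\pdb)) = \pwWorld{i}(\query(\pdb))$ is literally one of the possible worlds of $\query(\pdb)$, and every possible world dominates the certain annotation by definition of $\pwCertain$ as a greatest lower bound. For the under-approximation $\query(\TUL)(\tup) \ordK \pwCertain(\query(\pdb),\tup)$, I would invoke the preservation of c-soundness for $\raPlus$ queries — this is precisely the result promised for Section~\ref{sec:c-soundess} (the generalization of Reiter's theorem) — applied to the labeling $\query(\TUL)$ of $\query(\pdb)$.

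The genuinely hard part is the under-approximation direction, i.e.\ that $\query(\TUL)$ is still a c-sound labeling of $\query(\pdb)$; everything else (homomorphism checks, commutation with queries, the over-approximation via $\pwWorld{i}$) is essentially bookkeeping once the product-semiring structure is in place. Since the theorem statement defers that argument to Section~\ref{sec:c-soundess}, in this section I would only cite it, and keep the proof here focused on (i) the homomorphism property of $h_{\dOfName}$ and $h_{\cOfName}$, (ii) the resulting factorization of $\query(\uadb)$ into its two coordinates, and (iii) assembling the two approximation bounds. A minor subtlety worth a sentence is that $\query(\TUL)$ and $\query(\db)$ must be the labeling and world derived from the \emph{same} $\query(\pdb)$ — this is immediate because both are obtained by applying the homomorphisms $\pwWorld{i}$ and (the lifted) $\TUL$-construction to the single object $\query(\pdb)$, and homomorphisms commute with $\query$.
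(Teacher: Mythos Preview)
Your proposal is correct and follows essentially the same approach as the paper: verify that the coordinate projections $h_{\dOfName}$ and $h_{\cOfName}$ are semiring homomorphisms (because $\uaK{\semK}=\semK^2$ is a direct product), use commutation of homomorphisms with $\raPlus$ to factor $\query(\uadb)$ into $(\query(\db),\query(\TUL))$, then handle the $d$-side via Lemma~\ref{lem:poss-world-is-homomorphism} and defer the $c$-side to Theorem~\ref{theo:UA-are-c-sound}. The only cosmetic slip is the phrase ``$\query(\pdb)(\tup) \ordK \query(\db)(\tup)$ component-wise'' --- what you mean (and then correctly argue) is $\pwCertain(\query(\pdb),\tup) \ordK \query(\db)(\tup)$, since $\query(\db)$ is a single possible world of $\query(\pdb)$ and hence dominates the greatest lower bound.
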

\iftechreport{\ProofPointerToAppendix}
\ifnottechreport{\ProofPointerToAppendixWithSketch{
The proof decouples $c$ and $d$ by showing that $h_{\cOfName}$ and $h_{\dOfName}$ are homomorphisms.
Computing $d$ is exactly \abbrBGQP.
Then, we show in \Cref{sec:c-soundess} that queries over $h_{\cOfName}$ preserve the lower bound.
}}

\section{Uncertainty Labelings}
\label{sec:uncert-labell-1}

We now % formally
define uncertainty labelings, which are $\semK$-databases whose annotations over\hyp{} or under\hyp{}approximate certain annotations of tuples in a $\pwK$-database with respect to the natural order of semiring $\semK$.
% Here, by bound we mean that labelings may over\hyp{} or under\hyp{}approximate the certain annotation of a tuple with respect to the natural order of semiring $\semK$.
% We use $\semK$-relations to store labelings, i.e., the annotation of a tuple $t$ stores its label.
A labeling scheme is a mapping  % $\pwK$-databases to
 % (a $\semK$-database)
from an incomplete databases to labelings.   % not want to prescribe a particular representation for labelings. We only require that a labeling comes with a function $\lCert$ that allows us to extract the estimate for the certain annotation of a tuple $\tup$ encoded by the labeling. Furthermore, it should be possible to derive a labeling from a $\pwK$-database.

%%%%%%%%%%%%%%%%%%%%%%%%%%%%%%%%%%%%%%%%
\begin{Definition}[Uncertainty Labeling Scheme]
Let $\dbDomK{\semK}$ be the set of all $\semK$-databases, $\mathcal{M}$ an incomplete/probabilistic data model, and $\mathcal{DB}_{\mathcal{M}}$ the set of all possible instances of this model.
  An uncertainty labeling scheme is a function $\doTUL: \mathcal{DB}_{\mathcal{M}} \to \dbDomK{\semK}$ such that the labeling $\TUL = \doTUL(\pdb)$  has the schema $\schema$.
  % $\TUL$ for a pair $\pdb$, where $\pdb$ is a $\pwK$-database and $\db$ is one of its possible worlds, is a $\dpK{\semK}{2}$-database which fulfills the conditions shown below for every tuple $\tup$:
  % \begin{align*}
  %   \dOf{\TUL(\tup)} &= \db(\tup) \tag{Preserves $\db$}\\
%  \cOf{\TUL(\tup)} &\ordK \pwCertain(\pdb, \tup) \tag{Underestimates Certainty}\\
  % \end{align*}
\end{Definition}
%%%%%%%%%%%%%%%%%%%%%%%%%%%%%%%%%%%%%%%%

Ideally, we would like the label (annotation) $\TUL(\tup)$ of a tuple $\tup$ from an uncertainty labeling $\TUL$ to be exactly $\pwCertain(\pdb, \tup)$.\revb{
Observe that an exact labeling can always be computed in $O(|W|)$ time if all worlds of the incomplete database can be enumerated.
However, the number of possible worlds is frequently exponential in the data size.
Thus, most incomplete data models rely on factorized encodings, with size typically logarithmic in $|W|$.
Ideally, we would like labeling schemes to be \ptime in the size of the encoding (rather than in $|W|$).
As mentioned in the introduction, computing certain answers is \conpcomplete, so for tractable query semantics we must accept that
$\TUL(\tup)$ may either over\hyp{}  or under\hyp{}approximate  $\pwCertain(\pdb, \tup)$ (with respect to $\ordK$).
}% for $\semK$.
For instance, under bag semantics (semiring $\semN$), a label $n$ may be smaller or larger than the certain multiplicity of a tuple.
We call a labeling \textit{c-sound} (no false positives) if it consistently under\hyp{}approximates the certain annotation of tuples, \textit{c-complete} (no false negatives) if it consistently over\hyp{}approximates certainty, and \textit{c-correct} if it annotates every tuple with its certain annotation. We also apply this terminology to labeling schemes, e.g., a c-sound labeling scheme only produces c-sound labelings. % is c-sound if every labeling produced by the labeling scheme is c-sound.
For \abbrUADBs{} we are mainly interested in c-sound labeling schemes to provide an under\hyp{}approximation of certain annotations.
% for every tuple, the labeling as a whole overestimates, underestimates, or correctly represents the certain annotations of a $\pwK$-database. We call a labeling c-sound if it underestimates certain annotations wrt. the natural order, c-complete if it overestimates certain annotations, and c-correct if it correctly records certain annotations and also apply this terminology to labeling schemes, e.g., a labeling scheme is c-sound if every labeling produced by the labeling scheme is c-sound.

%%%%%%%%%%%%%%%%%%%%%%%%%%%%%%%%%%%%%%%%
\begin{Definition} % [Correctness]
  If $\TUL$ is an uncertainty labeling for %a $\pwK$-database
  $\pdb$.
  \begin{center}
  \begin{tabular}{ r@{\hspace{10mm}}l }
  \textbf{We call $\TUL$\ldots} & \textbf{\ldots iff for all tuples $t \in \pdb$\ldots} \\[1mm]
  \textbf{c-sound}    & $\TUL(\tup) \ordK \pwCertain(\pdb,\tup)$\\
  \textbf{c-complete} & $\pwCertain(\pdb,\tup) \ordK \TUL(\tup)$\\
  \textbf{c-correct}  & $\pwCertain(\pdb,\tup) = \TUL(\tup)$\\
  \end{tabular}
  \end{center}
  % Observe that a
  % \begin{itemize}
  % \item \textbf{c-sound} iff for all tuples $\tup$: $$\lCert(\TUL,t) \ordK \pwCertain(\pdb,t)$$
  % \item \textbf{c-complete} iff for all tuples $\tup$: $$\pwCertain(\pdb,t) \ordK \lCert(\TUL,t)$$
  % \item \textbf{c-correct} iff it is both c-complete and c-sound, or alternatively, iff for all tuples $\tup$: $$\pwCertain(\pdb,t) = \lCert(\TUL,t)$$
  % \end{itemize}
\end{Definition}
%%%%%%%%%%%%%%%%%%%%%%%%%%%%%%%%%%%%%%%%
A labeling is both c-sound and c-complete iff it is c-correct. % (and visa versa).
Ideally, queries over labelings would preserve these bounds.% on the certain annotations  % correctness properties
%of their an input labeling. % if possible.  % and should be backward compatible with deterministic query processing over the possible world from which the labeling was derived.
% That is, the result of evaluating a query $\query$ over an uncertainty labeling $\TUL$ for a $\pwK$-database $\pdb$ should be a labeling for the result of evaluating the query directly over $\pdb$.

% for an uncertainty labeling $\TUL$ for $\pdb$ and query $\query$, for any query result tuple $\tup$ the $d$ part of its annotation should be equal to $\query(\db)(\tup)$.

%%%%%%%%%%%%%%%%%%%%%%%%%%%%%%%%%%%%%%%%
\begin{Definition}[Preservation of Bounds]
A query semantics for uncertainty labelings preserves a property $X$ (c-soundness, c-completeness, or c-\-cor\-rect\-ness) wrt. a class of queries $\qClass$, if for any incomplete database $\pdb$, labeling  $\TUL$ for $\pdb$ that has property $X$, and query $\query \in \qClass$ we have: $\query(\TUL)$ is an uncertainty labeling for $\query(\pdb)$ with property $X$.
\end{Definition}
%%%%%%%%%%%%%%%%%%%%%%%%%%%%%%%%%%%%%%%%

\BG{NOT MEANINGFUL HERE ANYMORE: As mentioned in the introduction (and observed elsewhere~\cite{L16a,GL16}), false negatives are less of a concern than false positives. Thus, for cases where we cannot preserve c-correctness, we would at least opt for c-soundness.
That is, we prefer labelings that underestimate the ``real'' $\pwCertain(\pdb, \tup)$ annotation wrt. the natural order $\ordK$.}
% That is, for a tuple annotated with $k$ in a $\pwK$-database $\pdb$ and annotated $d$ in a possible world $\db$ of $\pdb$, we  annotate tuple $t$ with $(d,c)$ where $c \ordK \pwCertain(k)$.

%%% Local Variables:
%%% mode: latex
%%% TeX-master: "uadb"
%%% End:

% -*- root: uadb.tex -*-
\section{Querying Labelings}\label{sec:c-soundess}

%In the following, we refer to the type of labeling schemes used in  (i.e., where $\lCert_{\semK}(\TUL, \tup) = \TUL(\tup)$) as
% As demonstrated in the previous section, $\semK$-\textit{labeling schemes} (i.e., where $\TULDom= \dbDomK{\semK}$ and $\lCert_{\semK}(\TUL, \tup) = \TUL(\tup)$) are quite effective.
% an effective  strategy for encoding uncertainty labelings.
We now study whether queries over labelings produced by \textit{labeling schemes} such as the ones described in \Cref{sec:create-labelings} preserve c-soundness. % and c-completeness.
Specifically, we demonstrate that standard $\semK$-relational query evaluation preserves \emph{c-soundness} for any c-sound labeling scheme.
Recall that a query semantics for labelings preserves c-soundness if a query $\query(\TUL)$ evaluated on a c-sound labeling $\TUL$ of \revb{incomplete} database $\pdb$ is a c-sound labeling for $\query(\pdb)$.
Our result generalizes a previous result of Reiter~\cite{R86} to any type of incomplete $\semK$-database for which we can define an efficient c-sound labeling scheme.
% Furthermore, we show in \Cref{sec:pres-c-compl} that $\raPlus$ queries (positive relational algebra, i.e., SPJU)  also preserve c-completeness if the input and/or query fulfill specific conditions.
% is a labeling for a \abbrTI{}, or if it a labeling for an \abbrXDB{} and certain conditions hold on the query.
% for a c-soundness query evaluation over UA-relations that are uncertainty labelings preserves c-soundness. Recall that a query semantics for labelings preserves c-soundness if for any probabilistic database $\pdb$ and one of its possible worlds $\db$ and a c-sound labeling $\TUL$ for $(\pdb, \db)$, the output any query $\query$ is a c-sound labeling for $(\query(\pdb), \query(\db))$.
% We start by proving that
% We first prove a more restrictive c-soundness result.
% Specifically, we prove that $\semK$-relational query evaluation over c-correct simple labelings preserves c-soundness.
We need the following lemma, to show that the natural order of a semiring factors through addition and multiplication.
This is a known result that we only state for completeness.

%%%%%%%%%%%%%%%%%%%%%%%%%%%%%%%%%%%%%%%%
\begin{Lemma}\label{lem:order-factors-through-operations}
  Let $\semK$ be a naturally ordered semiring. For all $k_1, k_2, k_3, k_4 \in \semK$ we have:
  \begin{align*}
    k_1 \ordK k_3 \wedge k_2 \ordK k_4 \Rightarrow k_1 \addK k_2 \ordK k_3 \addK k_4\\
    k_1 \ordK k_3 \wedge k_2 \ordK k_4 \Rightarrow k_1 \multK k_2 \ordK k_3 \multK k_4
  \end{align*}
\end{Lemma}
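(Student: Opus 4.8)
The plan is to unfold the definition of the natural order directly and exhibit the required ``witness'' elements. Recall that $k \ordK k'$ holds precisely when there exists some $k'' \in \kDom$ with $k \addK k'' = k'$. So the hypotheses give us $a, b \in \kDom$ with $k_1 \addK a = k_3$ and $k_2 \addK b = k_4$, and the goal in each case is to produce a single element that, added to the left-hand side, yields the right-hand side.

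For the additive statement, I would take the witness $a \addK b$. Using commutativity and associativity of $\addK$, we compute
\begin{align*}
  (k_1 \addK k_2) \addK (a \addK b) &= (k_1 \addK a) \addK (k_2 \addK b) = k_3 \addK k_4,
\end{align*}
which is exactly $k_1 \addK k_2 \ordK k_3 \addK k_4$.

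For the multiplicative statement, I would substitute $k_3 = k_1 \addK a$ and $k_4 = k_2 \addK b$ and expand the product using distributivity (both left and right, which are available because $\semK$ is commutative):
\begin{align*}
  k_3 \multK k_4 &= (k_1 \addK a) \multK (k_2 \addK b)
  = (k_1 \multK k_2) \addK \big( k_1 \multK b \addK a \multK k_2 \addK a \multK b \big).
\end{align*}
Thus the witness is $c \defas k_1 \multK b \addK a \multK k_2 \addK a \multK b$, and $(k_1 \multK k_2) \addK c = k_3 \multK k_4$ gives $k_1 \multK k_2 \ordK k_3 \multK k_4$.

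There is essentially no real obstacle here; the only points requiring a little care are (i) making sure the regrouping of sums is justified purely by the associativity and commutativity axioms of $\addK$, and (ii) invoking commutativity of $\multK$ so that full distributivity (not just one-sided) is available when expanding the product. Since the lemma is a standard fact about naturally ordered semirings, I would keep the proof to these few lines rather than belaboring the algebra.
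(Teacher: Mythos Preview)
Your proof is correct and matches the paper's approach almost line for line: both unfold the natural order to obtain witnesses (the paper calls them $k_1', k_2'$ where you use $a, b$), take $a \addK b$ as the additive witness, and expand $(k_1 \addK a) \multK (k_2 \addK b)$ by distributivity for the multiplicative case. The only minor remark is that two-sided distributivity is part of the semiring axioms regardless of commutativity of $\multK$, so your caveat (ii) is not actually needed.
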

\ProofPointerToAppendix
%%%%%%%%%%%%%%%%%%%%%%%%%%%%%%%%%%%%%%%%
%%%%%%%%%%%%%%%%%%%%%%%%%%%%%%%%%%%%%%%%

%%%%%%%%%%%%%%%%%%%%%%%%%%%%%%%%%%%%%%%%%%%%%%%%%%%%%%%%%%%%
\subsection{Preservation of C-Soundness}\label{sec:pres-c-soundn}

We now prove that $\raPlus$ over labelings preserves c\hyp{}soundness.
Since queries over both $\pwK$-databases and labelings have $\semK$-relational query semantics,
we can make use of the fact that $\raPlus$ over $\semK$-relations is defined using $\addK$ and $\multK$.
At a high level, the argument is as follows:
(1) we show that $\pwCertain$ applied to the result of an addition (or multiplication) of two $\pwK$-elements $\pwe{k_1}$ and $\pwe{k_2}$ yields a larger (wrt. $\ordK$) result than adding (or multiplying) the result of applying $\pwCertain$ to $\pwe{k_1}$ and $\pwe{k_2}$;
(2) Since c-sound labelings for an input provide a lower bound on $\pwCertain$, we can apply Lemma~\ref{lem:order-factors-through-operations} to show that the query result over a c-sound (or c-correct) labeling  is a lower bound for $\pwCertain$ of the result of the query. Combining arguments, we get preservation of c-soundness.

Functions that have the property mentioned in (1) are called superadditive and supermultiplicative. Formally, a function $f: A \to B$ where $A$ and $B$ are closed under addition and multiplication, and $B$ is ordered (order $\leq_{B}$) is superadditive (supermultiplicative) iff for all $a_1, a_2 \in A$:
\begin{align*}
   f(a_1 + a_2)  &\geq_{B} f(a_1) + f(a_2) \tag{superadditive}\\
f(a_1 \times a_2) &\geq_{B} f(a_1) \times f(a_2) \tag{supermultiplicative}
\end{align*}
In a nutshell, if we are given a c-sound $\semK$-labeling, then evaluating any $\raPlus$-query over the labeling using $\semK$-relational query semantics preserves c-soundness if we can prove that $\pwCertain$ is superadditive and supermultiplicative.

%%%%%%%%%%%%%%%%%%%%%%%%%%%%%%%%%%%%%%%%
\begin{Lemma}\label{lem:super-all}
Let $\semK$ be a semiring. $\pwCertain$ is superadditive and supermultiplicative wrt. the natural order $\ordK$.
\end{Lemma}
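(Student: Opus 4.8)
The plan is to reduce everything to the universal property of the greatest lower bound together with \Cref{lem:order-factors-through-operations}. Unfolding definitions, for a $\pwK$-element $\pwe{k}$ we have $\pwCertain(\pwe{k}) = \GlbK\{\pwe{k}[i] \mid i \in \pwDom\}$, and both $\pwAdd$ and $\pwMult$ act componentwise. So I would fix $\pwe{k_1},\pwe{k_2} \in \pwkDom$ and set $a = \pwCertain(\pwe{k_1})$ and $b = \pwCertain(\pwe{k_2})$. Since $\geqK$ is $\ordK$ read backwards, what must be proved is $a \addK b \ordK \pwCertain(\pwe{k_1} \pwAdd \pwe{k_2})$ (superadditivity) and $a \multK b \ordK \pwCertain(\pwe{k_1} \pwMult \pwe{k_2})$ (supermultiplicativity). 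The key observation is that to show $x \ordK \GlbK S$ it suffices to check that $x$ is a lower bound of $S$, i.e., $x \ordK s$ for every $s \in S$; this is exactly what it means for $\GlbK S$ to be the \emph{greatest} lower bound, and it is well defined because we work in l-semirings.

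Next I would carry out the argument for $\pwAdd$. Because $a = \GlbK\{\pwe{k_1}[i] \mid i\}$ is in particular a lower bound of that set, $a \ordK \pwe{k_1}[i]$ for every $i \in \pwDom$; likewise $b \ordK \pwe{k_2}[i]$. Feeding these two inequalities into the additive half of \Cref{lem:order-factors-through-operations} gives $a \addK b \ordK \pwe{k_1}[i] \addK \pwe{k_2}[i] = (\pwe{k_1} \pwAdd \pwe{k_2})[i]$ for every $i$. Hence $a \addK b$ is a lower bound of $\{(\pwe{k_1} \pwAdd \pwe{k_2})[i] \mid i \in \pwDom\}$, and therefore $a \addK b \ordK \GlbK\{(\pwe{k_1} \pwAdd \pwe{k_2})[i] \mid i\} = \pwCertain(\pwe{k_1} \pwAdd \pwe{k_2})$, which is superadditivity. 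The supermultiplicative case is word-for-word the same, using the multiplicative half of \Cref{lem:order-factors-through-operations} and the fact that $\pwMult$ is componentwise, yielding $a \multK b \ordK (\pwe{k_1} \pwMult \pwe{k_2})[i]$ for each $i$ and hence $a \multK b \ordK \pwCertain(\pwe{k_1} \pwMult \pwe{k_2})$.

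I do not expect a genuine obstacle here: the statement is essentially monotonicity of $\addK$ and $\multK$ w.r.t.\ the natural order (\Cref{lem:order-factors-through-operations}, a standard fact) combined with the defining property of $\GlbK$ on finite sets in an l-semiring. The only thing to keep straight is the direction of the order — that ``super'' means $\pwCertain$ of the combined vector lies \emph{above} the combination of the two $\pwCertain$s w.r.t.\ $\ordK$ — and the fact that equality can fail, so this one-sided inequality is genuinely the best possible. For instance in $\semN$, with $\pwe{k_1} = [3,0]$ and $\pwe{k_2} = [0,5]$ we get $\min(3,0) + \min(0,5) = 0 < 3 = \min(3,5) = \pwCertain(\pwe{k_1} \pwAdd \pwe{k_2})$; this asymmetry is exactly why the later development can only guarantee c-soundness in general rather than c-correctness.
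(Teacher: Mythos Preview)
Your proof is correct and follows essentially the same approach as the paper: show that $\pwCertain(\pwe{k_1}) \addK \pwCertain(\pwe{k_2})$ (resp.\ $\multK$) is a lower bound for every component $(\pwe{k_1} \pwAdd \pwe{k_2})[i]$, then appeal to the universal property of $\GlbK$. The only presentational difference is that you invoke \Cref{lem:order-factors-through-operations} as a black box, whereas the paper inlines that argument by unfolding the natural-order witnesses $k',k''$ and computing directly; your modularization is arguably cleaner since the lemma is already available.
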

\ProofPointerToAppendix

Using the superadditivity and -multiplicativity of $\pwCertain$, we now prove preservation of c-soundness. % lead to the result that evaluating $\raPlus$ queries over $\semK$-labeling using $\semK$-relational query semantics preserves c-soundness if the input labeling is also c-sound.
%%%%%%%%%%%%%%%%%%%%%%%%%%%%%%%%%%%%%%%%
%% Commenting out theorem 4 in the abbreviated version
\ifbool{ShowDetailedProofs}{
We first prove a restricted version of this result.
\begin{Lemma}\label{lem:k-rel-preserves-soundness-over-c-correct}
Let $\pdb$ be a $\pwK$-database and $\TUL$ be a c-correct $\semK$-labeling for $\pdb$. $\raPlus$ queries over $\TUL$ preserve c-soundness.
\end{Lemma}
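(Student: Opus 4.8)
The plan is to reduce the statement to a single monotonicity-type inequality relating $\query$ and the map $\pwCertain$. Since $\TUL$ is c-correct, $\TUL$ is exactly the $\semK$-database obtained by applying $\pwCertain$ tuplewise to $\pdb$; write $\pwCertain(\pdb)$ for this $\semK$-database, so that $\query(\TUL) = \query(\pwCertain(\pdb))$. Unfolding the definition of c-soundness, the claim that $\query(\TUL)$ is c-sound for $\query(\pdb)$ becomes: for every tuple $t$,
\[
  \query(\pwCertain(\pdb))(t) \;\ordK\; \pwCertain(\query(\pdb))(t),
\]
where the right-hand side is $\pwCertain$ applied to the $\pwK$-annotation $\query(\pdb)(t)$. (Schema compatibility of $\query(\TUL)$ as a labeling for $\query(\pdb)$ is automatic, since $\raPlus$ queries preserve schemas.) So it suffices to prove this inequality for all $\raPlus$ queries $\query$, which I would do by structural induction on $\query$.

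The base case is a relation access, where $\query(\pwCertain(\pdb)) = \pwCertain(\pdb) = \pwCertain(\query(\pdb))$, so equality (hence $\ordK$) holds. Each inductive case combines exactly two ingredients: monotonicity of $\addK$ and $\multK$ with respect to $\ordK$ (\Cref{lem:order-factors-through-operations}) and the super\-additivity/super\-multiplicativity of $\pwCertain$ (\Cref{lem:super-all}). For union, $\query(\pwCertain(\pdb))(t) = \query_1(\pwCertain(\pdb))(t) \addK \query_2(\pwCertain(\pdb))(t)$; the induction hypothesis gives $\query_i(\pwCertain(\pdb))(t) \ordK \pwCertain(\query_i(\pdb))(t)$ for $i=1,2$, so by \Cref{lem:order-factors-through-operations} the sum is $\ordK \pwCertain(\query_1(\pdb))(t) \addK \pwCertain(\query_2(\pdb))(t)$; since $\query(\pdb)(t) = \query_1(\pdb)(t) \pwAdd \query_2(\pdb)(t)$, superadditivity of $\pwCertain$ gives $\pwCertain(\query_1(\pdb))(t) \addK \pwCertain(\query_2(\pdb))(t) \ordK \pwCertain(\query(\pdb))(t)$, and transitivity of $\ordK$ closes the case. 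Join is the same argument with $\multK$/$\pwMult$ and supermultiplicativity replacing $\addK$/$\pwAdd$ and superadditivity (the projections $\pi_{\query_1}[t],\pi_{\query_2}[t]$ play no role in the inequality). Projection applies the same pattern to the finite sum $\sum_{t=t'[U]}$, after first extending \Cref{lem:super-all} from binary to arbitrary finite sums by an easy induction (the empty sum case uses $\pwCertain(\pwZero)=\zeroK$). Selection $\selection_\theta$ multiplies the annotation by $\theta(t)$, which is $\oneK$ or $\zeroK$ over the labeling and $\pwOne$ or $\pwZero$ over the $\pwK$-database; since $\pwCertain(\pwOne)=\oneK$ and $\pwCertain(\pwZero)=\zeroK$, this case is just an instance of the multiplication argument.

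The only real care needed is bookkeeping about which semiring each quantity lives in: c-soundness of $\query(\TUL)$ is stated with respect to the natural order $\ordK$ of $\semK$, whereas $\query(\pdb)$ lives over $\pwK$, so every step must pass through $\pwCertain$ exactly once and in the right direction. One also has to confirm that all sums occurring (inside projection and inside the definition of $\pwAdd$) are finite, so the finitary version of \Cref{lem:super-all} applies; this holds because $\semK$-relations have finite support. No single step is difficult — the load-bearing facts are \Cref{lem:order-factors-through-operations} and \Cref{lem:super-all}, and the proof merely propagates them through the inductive structure of $\raPlus$.
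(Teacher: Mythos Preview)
Your proposal is correct and follows essentially the same approach as the paper: reduce c-soundness of $\query(\TUL)$ to the inequality $\query(\TUL)(\tup) \ordK \pwCertain(\query(\pdb),\tup)$ and derive it from the superadditivity/supermultiplicativity of $\pwCertain$ (\Cref{lem:super-all}) together with monotonicity of $\addK,\multK$ (\Cref{lem:order-factors-through-operations}). The paper states this in one sentence (``follows immediately from \Cref{lem:super-all} and c-correctness of $\TUL$''), whereas you unroll the structural induction over $\raPlus$ explicitly; the content is the same.
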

%%%%%%%%%%%%%%%%%%%%%%%%%%%%%%%%%%%%%%%%
%%%%%%%%%%%%%%%%%%%%%%%%%%%%%%%%%%%%%%%%

The major drawback of Lemma~\ref{lem:k-rel-preserves-soundness-over-c-correct} is that it is limited to c-correct input labelings.
Next, we show that c-soundness is still preserved even if the input labeling is only c-sound.

}{} % end \ifbool{ShowDetailedProofs}

%%%%%%%%%%%%%%%%%%%%%%%%%%%%%%%%%%%%%%%%
\begin{Theorem}\label{theo:UA-are-c-sound} % [UA-relational Queries Preserve c-soundness]
Let $\pdb$ be a $\pwK$-database and $\TUL$ a c-sound labeling for $\pdb$. $\raPlus$ queries over $\TUL$ preserve c-soundness.   % $\query$ be a $\raPlus$ query. Then $\query(\TUL)$ is a c-sound labeling for $(\query(\pdb), \query(\db))$.
  % Let $\pdb$ be a probabilistic database and $\db$ be a UA-database corresponding to $\pdb$ that is c-sound. Given a query $\query$, the result $\query(\db)$ is c-sound wrt. $\query(\pdb)$.
\end{Theorem}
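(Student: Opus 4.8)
The plan is to reduce Theorem~\ref{theo:UA-are-c-sound} to the already-established machinery: Lemma~\ref{lem:order-factors-through-operations} (monotonicity of $\addK$ and $\multK$ wrt.\ $\ordK$) and Lemma~\ref{lem:super-all} (superadditivity and supermultiplicativity of $\pwCertain$). The key observation is that $\raPlus$ query evaluation over a labeling $\TUL$ is defined purely in terms of $\addK$ and $\multK$, in exactly the same way as evaluation over the $\pwK$-database $\pdb$ (only the underlying semiring differs). So an inductive argument on the structure of the query should carry a pointwise bound of the form $\query(\TUL)(\tup) \ordK \pwCertain(\query(\pdb),\tup)$ through every operator.

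First I would set up the induction on the structure of a positive relational algebra expression $\query$. The base case is a relation atom $\rel$: by hypothesis $\TUL$ is c-sound, i.e.\ $\TUL(\tup) \ordK \pwCertain(\pdb,\tup)$ for every $\tup$, which is exactly the claim. For the inductive step I would treat each of the four $\raPlus$ operators (union, join, projection, selection). For union: $[\query_1 \union \query_2](\TUL)(\tup) = \query_1(\TUL)(\tup) \addK \query_2(\TUL)(\tup)$; by the induction hypothesis each summand is $\ordK$ the corresponding certain annotation over $\pdb$, so by Lemma~\ref{lem:order-factors-through-operations} their sum is $\ordK \pwCertain(\query_1(\pdb),\tup) \addK \pwCertain(\query_2(\pdb),\tup)$, and then Lemma~\ref{lem:super-all} (superadditivity of $\pwCertain$, applied to the two $\pwK$-vectors annotating $\tup$ in $\query_1(\pdb)$ and $\query_2(\pdb)$) gives that this is $\ordK \pwCertain(\query_1(\pdb) \union \query_2(\pdb),\tup) = \pwCertain((\query_1\union\query_2)(\pdb),\tup)$. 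Join is identical with $\multK$ in place of $\addK$ and supermultiplicativity in place of superadditivity, noting that $\pwMult$ on $\pwK$-vectors is componentwise so the annotation of $\tup$ in $\query_1(\pdb)\join\query_2(\pdb)$ really is the product of the two sub-annotations. Projection is a (finite) iterated sum over the preimage tuples $t'$ with $t'[U]=t$, so it follows from the union case by induction on the number of such tuples (using associativity/commutativity of $\addK$ and repeated application of the two lemmas). Selection multiplies by $\theta(t) \in \{\oneK,\zeroK\}$, which is a trivial special case of join (or can be checked directly, since $\pwCertain$ of a vector all of whose entries are the same $\theta$-value is just that value).

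The main obstacle — or rather the one spot requiring care — is making precise the passage from ``the labeling bound holds pointwise'' to ``the certain annotation of the query result dominates the certain annotation computed from the bounds.'' Concretely, one must be careful that $\pwCertain(\pdb,\tup)$ for a tuple in an intermediate result is the GLB over worlds of $\pdb(\tup)[i]$, and that $\query(\pdb)$ is evaluated world-by-world; the identification $\query(\pwK\text{-db})$ with world-wise evaluation is exactly what Lemma~\ref{lem:poss-world-is-homomorphism} and the componentwise definition of $\pwK$ give us, so that the $i$-th component of the annotation of $\tup$ in $\query_j(\pdb)$ is $\query_j(\pwWorld{i}(\pdb))(\tup)$. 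Given that, Lemma~\ref{lem:super-all} is precisely the statement that lets the GLB be pushed inside $\addK$/$\multK$. If the detailed-proofs toggle is on, I would first record the restricted Lemma~\ref{lem:k-rel-preserves-soundness-over-c-correct} (same induction, but with equality $\TUL(\tup)=\pwCertain(\pdb,\tup)$ in the base case, so only Lemma~\ref{lem:super-all} is needed and Lemma~\ref{lem:order-factors-through-operations} is not), and then obtain the full theorem by composing: a c-sound $\TUL$ is $\ordK$ a c-correct labeling $\TUL'$ (namely $\TUL'(\tup)=\pwCertain(\pdb,\tup)$), the induction shows $\query(\TUL)\ordK\query(\TUL')$ via Lemma~\ref{lem:order-factors-through-operations} alone, and Lemma~\ref{lem:k-rel-preserves-soundness-over-c-correct} gives $\query(\TUL')(\tup)\ordK\pwCertain(\query(\pdb),\tup)$; transitivity of $\ordK$ finishes it. Either organization works; the one-pass induction is shorter, the two-step version isolates the two lemmas more cleanly.
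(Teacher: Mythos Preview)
Your proposal is correct and follows essentially the same route as the paper: both reduce the theorem to Lemma~\ref{lem:order-factors-through-operations} (monotonicity of $\addK,\multK$) together with Lemma~\ref{lem:super-all} (super\-additivity/super\-multiplicativity of $\pwCertain$), observing that $\raPlus$ semantics is built entirely from these two operations. The paper compresses your per-operator induction into a single closure statement --- if $k_1 \ordK \pwCertain(\pwe{k_3})$ and $k_2 \ordK \pwCertain(\pwe{k_4})$ then $k_1 \addK k_2 \ordK \pwCertain(\pwe{k_3}\pwAdd\pwe{k_4})$ and likewise for $\multK$ --- and then appeals to ``query semantics is defined using semiring operations only''; your explicit treatment of union/join/projection/selection is just the unrolled form of that same argument, and your alternative two-step organization via the c-correct case is exactly the paper's Lemma~\ref{lem:k-rel-preserves-soundness-over-c-correct}.
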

\ProofPointerToAppendix
%%%%%%%%%%%%%%%%%%%%%%%%%%%%%%%%%%%%%%%%
% \detailedproof{
%   Since $\TUL$ is a c-sound labeling  for any tuple $\tup$ we have $\TUL(\tup) \ordK \pwCertain(\pdb, \tup)$.
%   The proof requires showing that for any tuple $\tup$ we have $\query(\TUL)(\tup) \ordK \pwCertain(\query(\pdb), \tup)$.
%   This result is obtained by first showing that $\query(\TUL')(\tup) \ordK \pwCertain(\query(\pdb), \tup)$ for some hypothetical c-correct labeling $\TUL'$ (follows from Lemma~\ref{lem:super-all} and \cite{Green:2007:PS:1265530.1265535}), and then chaining to the theorem result by showing that $\query(\TUL)(\tup) \ordK \query(\TUL')(\tup)$ (follows from Lemma~\ref{lem:order-factors-through-operations}).
% }{
%   Since $\TUL$ is a c-sound labeling  for any tuple $\tup$ we have $\TUL(\tup) \ordK \pwCertain(\pdb, \tup)$. We have to prove that for any tuple $\tup$ we have $\query(\TUL)(\tup) \ordK \pwCertain(\query(\pdb), \tup)$.
\revm{In Appendix~~\ref{sec:pres-c-compl} we demonstrate that under certain circumstances, queries also preserve c-completeness.}

\section{Preservation of C-Completeness}
\label{sec:pres-c-compl}

% We now demonstrate preservation of c\hyp{}completeness for queries over labeled \abbrTIs~\cite{suciu2011probabilistic} and for some \abbrXDBs~\cite{DBLP:conf/vldb/AgrawalBSHNSW06}.

%%%%%%%%%%%%%%%%%%%%%%%%%%%%%%%%%%%%%%%%
\mypar{\abbrTIs}
We now demonstrate that positive queries preserve c-\-com\-plete\-ness if the input is a labeling produced by the c-complete labeling scheme $\doTULTI$ (\Cref{sec:create-labelings}). 
To show this, we observe that the existence of a world for in which two $\pwK$-elements $\pwe{k_1}$ and $\pwe{k_2}$ are both minimal
% in the same possible world, i.e., there exists a possible world $i$ such that $\GlbK(\pwe{k_1}) = \pwe{k_1}[i]$ and $\GlbK(\pwe{k_2}) = \pwe{k_2}[i]$, 
then $\GlbK$ commutes with addition and multiplication, and 
% : $\GlbK(\pwe{k_1} \ast \pwe{k_2}) = \GlbK(\pwe{k_1}) \ast \GlbK(\pwe{k_2})$ for $\ast \in \{ \addK, \multK \}$.
%Thus, 
standard $\semK$-relational semantics preserve c-completeness.
% if the prior property holds. 
%Every $\pwK$ encoding produced by $\doTULTI$ has this property.

%%%%%%%%%%%%%%%%%%%%%%%%%%%%%%%%%%%%%%%%
\begin{Lemma}\label{lem:min-factors-when-same-min}
  Let $\pwe{k_1}, \pwe{k_2} \in \pwkDom$ for some possible world semiring $\pwK$. If there exists $i \in \pwDom$ such that $\GlbK(\pwe{k_1}) = \pwe{k_1}[i]$ and $\GlbK(\pwe{k_2}) = \pwe{k_2}[i]$, then the following holds:
  \begin{gather*}
    \GlbK(\pwe{k_1} \pwAdd \pwe{k_2}) = \GlbK(\pwe{k_1}) \addK \GlbK(\pwe{k_2}) = (\pwe{k_1} \pwAdd \pwe{k_2})[i]\\
    \GlbK(\pwe{k_1} \pwMult \pwe{k_2}) = \GlbK(\pwe{k_1}) \multK \GlbK(\pwe{k_2}) = (\pwe{k_1} \pwMult \pwe{k_2})[i]
  \end{gather*}
  % Furthermore,
  % \begin{align*}
  %   \GlbK(\pwe{k_1} \pwAdd \pwe{k_2}) &= (\pwe{k_1} \pwAdd \pwe{k_2})[i]\\
  %   \GlbK(\pwe{k_1} \pwMult \pwe{k_2}) &= (\pwe{k_1} \pwMult \pwe{k_2})[i]\\
  % \end{align*}
\end{Lemma}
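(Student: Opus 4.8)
The plan is to prove both displayed equalities by a short ``squeeze'' argument: bound $\GlbK(\pwe{k_1} \pwAdd \pwe{k_2})$ from above (in $\ordK$) by a single coordinate, and from below by superadditivity of $\pwCertain$, and observe that the hypothesis of a common witness $i$ makes the two bounds coincide. Throughout I would use that we work over an l-semiring, so $\ordK$ is a genuine partial order (in particular antisymmetric).

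First I would unfold the pointwise definitions of $\pwAdd$ and $\pwMult$ together with the hypothesis $\GlbK(\pwe{k_1}) = \pwe{k_1}[i]$ and $\GlbK(\pwe{k_2}) = \pwe{k_2}[i]$ to record the identity $(\pwe{k_1} \pwAdd \pwe{k_2})[i] = \pwe{k_1}[i] \addK \pwe{k_2}[i] = \GlbK(\pwe{k_1}) \addK \GlbK(\pwe{k_2})$, and analogously $(\pwe{k_1} \pwMult \pwe{k_2})[i] = \GlbK(\pwe{k_1}) \multK \GlbK(\pwe{k_2})$. Next, since $\GlbK(\pwe{k_1} \pwAdd \pwe{k_2})$ is by definition the greatest lower bound of the multiset of coordinates of $\pwe{k_1} \pwAdd \pwe{k_2}$, it is in particular a lower bound of the $i$-th coordinate, so $\GlbK(\pwe{k_1} \pwAdd \pwe{k_2}) \ordK (\pwe{k_1} \pwAdd \pwe{k_2})[i] = \GlbK(\pwe{k_1}) \addK \GlbK(\pwe{k_2})$, and likewise for the product. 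For the reverse inequality I would invoke Lemma~\ref{lem:super-all} (superadditivity and supermultiplicativity of $\pwCertain = \GlbK$), which gives $\GlbK(\pwe{k_1}) \addK \GlbK(\pwe{k_2}) \ordK \GlbK(\pwe{k_1} \pwAdd \pwe{k_2})$ and $\GlbK(\pwe{k_1}) \multK \GlbK(\pwe{k_2}) \ordK \GlbK(\pwe{k_1} \pwMult \pwe{k_2})$. Antisymmetry of $\ordK$ then forces equality in both cases, and the first step identifies the common value with $(\pwe{k_1} \pwAdd \pwe{k_2})[i]$, resp. $(\pwe{k_1} \pwMult \pwe{k_2})[i]$, completing the proof.

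As an alternative that avoids citing Lemma~\ref{lem:super-all}, I could establish the ``$\geqK$'' direction directly from Lemma~\ref{lem:order-factors-through-operations}: for every $j \in \pwDom$ we have $\pwe{k_1}[i] \ordK \pwe{k_1}[j]$ and $\pwe{k_2}[i] \ordK \pwe{k_2}[j]$ (because $i$ witnesses the GLBs), so the lemma yields $\pwe{k_1}[i] \addK \pwe{k_2}[i] \ordK \pwe{k_1}[j] \addK \pwe{k_2}[j]$ and $\pwe{k_1}[i] \multK \pwe{k_2}[i] \ordK \pwe{k_1}[j] \multK \pwe{k_2}[j]$; hence coordinate $i$ is also a lower bound of all coordinates of $\pwe{k_1} \pwAdd \pwe{k_2}$ (resp. $\pwe{k_1} \pwMult \pwe{k_2}$), so it \emph{is} their GLB. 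There is essentially no hard step here; the only points requiring care are orienting $\ordK$ against $\geqK$ correctly when quoting superadditivity, and recalling that a lower bound of a finite set that lies in the set is automatically its greatest lower bound in a lattice.
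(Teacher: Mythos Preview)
Your proposal is correct. Your \emph{alternative} route is exactly the paper's own proof: it uses the hypothesis to get $\pwe{k_1}[i] \ordK \pwe{k_1}[j]$ and $\pwe{k_2}[i] \ordK \pwe{k_2}[j]$ for all $j$, applies Lemma~\ref{lem:order-factors-through-operations} coordinatewise, and concludes that the $i$-th coordinate is a lower bound (hence the GLB, since it is one of the coordinates). Your \emph{main} squeeze argument is a mild but genuine variation: instead of showing directly that coordinate $i$ is minimal via Lemma~\ref{lem:order-factors-through-operations}, you get the ``$\geqK$'' half from Lemma~\ref{lem:super-all} and the ``$\ordK$'' half from the trivial fact that the GLB is below each coordinate, then close with antisymmetry of $\ordK$. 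Both are equally short; the paper's version is marginally more self-contained (Lemma~\ref{lem:order-factors-through-operations} is the lighter dependency), while your squeeze highlights that this lemma is precisely the borderline case where the superadditivity/supermultiplicativity inequalities of Lemma~\ref{lem:super-all} become equalities.
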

%%%%%%%%%%%%%%%%%%%%%%%%%%%%%%%%%%%%%%%%
\begin{proof}
% \detailedproof{
%  From $\pwWorld{i}$ being a homomorphism (Lemma~\ref{lem:poss-world-is-homomorphism}) and $\GlbK(\pwe{k_j}) = \pwe{k_j}[i]$ (given), we get that $(\pwe{k_1} \pwAdd \pwe{k_2})[i] = \GlbK(\pwe{k_1}) \addK \GlbK(\pwe{k_2})$.
%  Following a similar line of reasoning as in Lemma~\ref{lem:super-all}, we show that $(\pwe{k_1} \pwAdd \pwe{k_2})[i] \ordK (\pwe{k_1} \pwAdd \pwe{k_2})[j]$ for any $j$.  The proof for $\multK$ is symmetric.
% }{
 Recall that $\pwWorld{i}$ is a homomorphism (Lemma~\ref{lem:poss-world-is-homomorphism}), so $(\pwe{k_1} \pwAdd \pwe{k_2})[i] = \pwe{k_1}[i] \addK \pwe{k_2}[i]$ and $\GlbK(\pwe{k_j}) = \pwe{k_j}[i]$ for $j \in \{1,2\}$.
 Thus, $(\pwe{k_1} \pwAdd \pwe{k_2})[i]= \GlbK(\pwe{k_1}) \addK \GlbK(\pwe{k_2})$. Next, $\GlbK(\pwe{k_1} \pwAdd \pwe{k_2}) = (\pwe{k_1} \pwAdd \pwe{k_2})[i]$ which holds if for any $j \neq i \in W$ we have $(\pwe{k_1} \pwAdd \pwe{k_2})[i] \ordK (\pwe{k_1} \pwAdd \pwe{k_2})[j]$. Since $\pwe{k_1}[i] = \GlbK(\pwe{k_1})$ and $\GlbK$ is defined based on the natural order, we
 know that $\pwe{k_1}[i] \ordK \pwe{k_1}[j]$ and analog for $\pwe{k_2}$ we have $\pwe{k_2}[i] \ordK \pwe{k_2}[j]$. Lemma~\ref{lem:order-factors-through-operations} then implies $(\pwe{k_1} \pwAdd \pwe{k_2})[i] \ordK (\pwe{k_1} \pwAdd \pwe{k_2})[j]$. 
\ifnottechreport{
 The proof for multiplication is analogous.
}
\iftechreport{
 The proof for multiplication is analog using Lemma~\ref{lem:order-factors-through-operations} to show that $(\pwe{k_1} \pwMult \pwe{k_2})[i] \ordK (\pwe{k_1} \pwMult \pwe{k_2})[j]$ for any $j \in \pwDom$.
}
%}
\end{proof}

To demonstrate c-completeness preservation for \abbrTIs we have to demonstrate that the encoding of a \abbrTI as a $\pwK$-database fulfills the precondition of Lemma~\ref{lem:min-factors-when-same-min}.

%%%%%%%%%%%%%%%%%%%%%%%%%%%%%%%%%%%%%%%%
\begin{Lemma}\label{lem:tip-has-same-position-min}
  Let $\pdb$ be a $\pwK$-database that represents a \abbrTI. Then there exists $i \in \pwDom$ such that for any tuple $\tup$: % \in \tupDom$:
  % \begin{align*}
  \begin{center}
    $\GlbK(\pdb(\tup)) = \pdb(\tup)[i]$.
  \end{center}

  % \end{align*}
\end{Lemma}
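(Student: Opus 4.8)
The plan is to exhibit one ``minimal'' possible world whose index $i \in \pwDom$ simultaneously witnesses the greatest lower bound for \emph{every} tuple, which is exactly the precondition of Lemma~\ref{lem:min-factors-when-same-min}. Recall that the possible worlds of a \abbrTI{} are precisely the instances obtained by taking all non-optional tuples together with an arbitrary subset of the optional tuples. In particular, the instance $\db_i$ obtained by choosing the \emph{empty} subset of optional tuples --- i.e.\ the world containing exactly the non-optional tuples --- is always one of the possible worlds; fix its identifier $i$. In the $\pwK$-encoding $\pdb$ of the \abbrTI{}, a tuple present in world $\db_j$ is annotated $\oneK$ in coordinate $j$, and a tuple absent from $\db_j$ is annotated $\zeroK$ there.

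First I would dispose of the non-optional tuples: if $\tup$ is non-optional it is present in every world, so $\pdb(\tup)[j] = \oneK$ for all $j \in \pwDom$, and hence $\GlbK(\pdb(\tup)) = \oneK = \pdb(\tup)[i]$. Next I would handle every other tuple --- optional tuples and tuples not occurring in the \abbrTI{} at all. Such a tuple is, by construction of $\db_i$, absent from $\db_i$, so $\pdb(\tup)[i] = \zeroK$. Since $\semK$ is naturally ordered, $\zeroK$ is the least element of $\ordK$ (because $\zeroK \addK k = k$ gives $\zeroK \ordK k$ for every $k$), so $\zeroK \glbK k = \zeroK$ for all $k$; iterating $\glbK$ over the coordinates of $\pdb(\tup)$, one of which is $\zeroK$, yields $\GlbK(\pdb(\tup)) = \zeroK = \pdb(\tup)[i]$. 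Combining the two cases, the chosen $i$ satisfies $\GlbK(\pdb(\tup)) = \pdb(\tup)[i]$ for all $\tup$, as required.

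There is essentially no real obstacle: the only points needing (trivial) care are (i) observing that the ``all non-optional tuples, no optional tuples'' instance is genuinely a possible world of the \abbrTI{}, which is immediate from the semantics of tuple-independence, and (ii) the elementary lattice fact that a greatest lower bound of any multiset of semiring elements containing $\zeroK$ equals $\zeroK$, since $\zeroK$ is the bottom of the natural order.
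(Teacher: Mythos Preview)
Your proposal is correct and follows essentially the same approach as the paper: both pick the possible world containing only the non-optional (probability-$1$) tuples, then split into the two cases ``tuple is non-optional'' (annotation is constant across worlds) and ``tuple is optional or absent'' (annotation at $i$ is $\zeroK$, which is the bottom of $\ordK$). The only cosmetic difference is that the paper phrases the case split via $\prob(\tup)=1$ versus $\prob(\tup)<1$, whereas you use the equivalent optional/non-optional terminology; your version is slightly more explicit in justifying why $\zeroK$ is the bottom element.
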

%%%%%%%%%%%%%%%%%%%%%%%%%%%%%%%%%%%%%%%%
\begin{proof}
%   \detailedproof{
%   The world $i$ is the world that includes only certain tuples (i.e., with $\prob(\tup \in \pdb) = 1$).
% }{
% \begin{proof}
  Consider the possible world $\db$ defined as follows:
  \begin{align*}
    \db(\tup) =
    \begin{cases}
      \GlbK(\pdb(\tup)) &\mathtext{if} \prob(\tup) = 1\\
      \zeroK &\mathtext{otherwise}\\
    \end{cases}
  \end{align*}
  This world exists, because in a \abbrTI all tuples with probability $p=1$ have annotation $\oneOf{\mathbb B}$ in all worlds.
  Furthermore, since the tuples are independent events, there must exist one world containing no tuples with probability $p < 1$. 
  Let $i$ denote the identifier of this world and denote by $\db = \pwWorld{i}(\pdb)$.
  \textbf{(Case 1)} $\prob(\tup) = 1$ and so $\forall j \in \pwDom: \pdb(\tup)[i] = \pdb(\tup)[j]$.
  \textbf{(Case 2)} $\prob(\tup) < 1$ and $\db(\tup) = \pdb(\tup)[i] = \zeroK$.
  Because $\forall k \in \kDom: \zeroK \ordK k$, it follows that $\GlbK(\pdb(\tup)) = \zeroK = \pdb(\tup)[i]$.  As a result, $\forall t \in \tupDom: \GlbK(\pdb(\tup)) = \db(\tup) = \pdb(\tup)[i]$
% \end{proof}
%}
\end{proof}

%%%%%%%%%%%%%%%%%%%%%%%%%%%%%%%%%%%%%%%%
% \detailedproof{
%   The world $i$ is the world that includes only certain tuples (i.e., with $\prob(\tup \in \pdb) = 1$).
% }{
% % \begin{proof}
%   Consider the possible world $\db$ defined as follows:
%   \begin{align*}
%     \db(\tup) =
%     \begin{cases}
%       \GlbK(\pdb(\tup)) &\mathtext{if} \prob(\tup) = 1\\
%       \zeroK &\mathtext{otherwise}\\
%     \end{cases}
%   \end{align*}
%   This world exists, because in a \abbrTI all tuples with probability equals to $1$ carry the same annotation in every possible world $i$.
% Furthermore, since the tuples are assumed to be independent events there exists a possible world that does contain none of the tuples with probability less than one. Let $i$ denote the identifier of $\db$ in $\pdb$, i.e., $\pwWorld{i}(\pdb) = \db$.  We claim that $\GlbK(\pdb(\tup)) = \db(\tup) = \pdb(\tup)[i]$ for all $\tup \in \tupDom$. We have to distinguish two cases. Either $\prob(\tup) = 1$ and $\pdb(\tup)[i] = \pdb(\tup)[j]$ for any $j \in \pwDom$ and, thus, $\GlbK(\pdb(\tup)) = \pdb(\tup)[i]$. Otherwise, $\prob(\tup) < 1$ and $\db(\tup) = \pdb(\tup)[i] = \zeroK$. Since, $\zeroK$ is the smallest element in $\semK$ wrt. $\ordK$ from this follows that $\GlbK(\pdb(\tup)) = \zeroK = \pdb(\tup)[i]$.
% % \end{proof}
% }
%%%%%%%%%%%%%%%%%%%%%%%%%%%%%%%%%%%%%%%%

Lemmas~\ref{lem:min-factors-when-same-min} and~\ref{lem:tip-has-same-position-min} together imply that our labeling approach preserves c-completeness if the input is a \abbrTI.

%%%%%%%%%%%%%%%%%%%%%%%%%%%%%%%%%%%%%%%%
\begin{Corollary}
Let $\TUL$ be a labeling for a \abbrTI $\pdb$ computed as $\doTUL_{TI}(\pdb)$. Then $\raPlus$  % $\semK$-relational query evaluation
over $\TUL$ preserves c-completeness.
\end{Corollary}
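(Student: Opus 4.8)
The plan is to prove the (slightly stronger) fact that $\query(\TUL)$ is a c\hyp{}correct labeling for $\query(\pdb)$ — which in particular is c\hyp{}complete — by exhibiting a single possible world that simultaneously witnesses every greatest lower bound, both in $\pdb$ and in $\query(\pdb)$. First I would fix notation: since $\doTULTI$ is c\hyp{}correct by Theorem~\ref{theo:h-homo-UA}, the input labeling satisfies $\TUL(\tup) = \pwCertain(\pdb,\tup) = \GlbK(\pdb(\tup))$ for every tuple $\tup$. By Lemma~\ref{lem:tip-has-same-position-min} there is a fixed identifier $i \in \pwDom$ (the world containing exactly the probability\hyp{}one tuples) with $\GlbK(\pdb(\tup)) = \pdb(\tup)[i] = \pwWorld{i}(\pdb)(\tup)$ for all $\tup$. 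Hence $\TUL = \pwWorld{i}(\pdb)$: the c\hyp{}correct TI labeling is just the projection of the $\pwK$-database onto world $i$.

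The core of the argument is then the commutation of queries with the homomorphism $\pwWorld{i}$. Since $\pwWorld{i}$ is a semiring homomorphism (Lemma~\ref{lem:poss-world-is-homomorphism}), it commutes with $\raPlus$ queries, so $\query(\TUL) = \query(\pwWorld{i}(\pdb)) = \pwWorld{i}(\query(\pdb))$, i.e. $\query(\TUL)(\tup) = \query(\pdb)(\tup)[i]$ for every output tuple $\tup$. Because the greatest lower bound of a vector of semiring elements is $\ordK$ each of its components, $\pwCertain(\query(\pdb),\tup) = \GlbK(\query(\pdb)(\tup)) \ordK \query(\pdb)(\tup)[i] = \query(\TUL)(\tup)$, which is precisely c\hyp{}completeness of $\query(\TUL)$ for $\query(\pdb)$. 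That already discharges the corollary.

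Lemma~\ref{lem:min-factors-when-same-min} lets us upgrade the inequality to an equality: by induction on the structure of $\query$, world $i$ continues to witness the GLB in the output, i.e. $\GlbK(\query(\pdb)(\tup)) = \query(\pdb)(\tup)[i]$. The base case is Lemma~\ref{lem:tip-has-same-position-min}; for union and join an output annotation is a single $\pwAdd$ resp.\ $\pwMult$ of subquery annotations, for selection it is a $\pwMult$ with $\pwOne$ or $\pwZero$ (whose components all coincide, so $i$ trivially witnesses their GLB), and for projection it is a finite iterated $\pwAdd$; in each case the inductive hypothesis together with Lemma~\ref{lem:min-factors-when-same-min} (applied repeatedly in the projection case, using finiteness of relations) shows $i$ witnesses the GLB of the combination. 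Combined with the previous paragraph this gives $\query(\TUL)(\tup) = \GlbK(\query(\pdb)(\tup)) = \pwCertain(\query(\pdb),\tup)$, i.e. c\hyp{}correctness — consistent with Theorem~\ref{theo:UA-are-c-sound}, which already guarantees c\hyp{}soundness is preserved from the c\hyp{}correct input.

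The main obstacle, and the only place tuple\hyp{}independence is really used, is the requirement of a \emph{single} world $i$ that minimizes every tuple at once. In an arbitrary incomplete $\semK$-database different tuples may attain their certain annotation in different worlds, and then the commutation step $\query(\TUL) = \pwWorld{i}(\query(\pdb))$ does not connect $\query(\TUL)$ to $\pwCertain(\query(\pdb),\cdot)$ in the right direction, and the induction fails at join/projection. Tuple\hyp{}independence supplies such a uniform worst\hyp{}case world via Lemma~\ref{lem:tip-has-same-position-min}; everything else is routine propagation through $\pwAdd$ and $\pwMult$ via Lemma~\ref{lem:min-factors-when-same-min}.
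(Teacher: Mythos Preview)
Your proof is correct. Your third paragraph (the c\hyp{}correctness upgrade by induction with Lemma~\ref{lem:min-factors-when-same-min}) is essentially the argument the paper has in mind when it says ``Lemmas~\ref{lem:min-factors-when-same-min} and~\ref{lem:tip-has-same-position-min} together imply\ldots''.

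Your second paragraph, however, is a genuinely cleaner route to the corollary as stated (c\hyp{}completeness only), and it is not the paper's route. The key step is the identification $\TUL = \pwWorld{i}(\pdb)$, which you obtain by combining c\hyp{}correctness of $\doTULTI$ (Theorem~\ref{theo:h-homo-UA}) with Lemma~\ref{lem:tip-has-same-position-min}. Once the labeling is literally a possible world, c\hyp{}completeness follows from the homomorphism property of $\pwWorld{i}$ (Lemma~\ref{lem:poss-world-is-homomorphism}) and the trivial fact that $\GlbK(\vec{k}) \ordK \vec{k}[i]$; Lemma~\ref{lem:min-factors-when-same-min} is not needed at all. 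The paper instead invokes Lemma~\ref{lem:min-factors-when-same-min} to push the GLB through each operator, which yields the stronger equality $\query(\TUL)(\tup) = \pwCertain(\query(\pdb),\tup)$ but is overkill for the inequality the corollary asks for. So: your approach buys a shorter proof of the stated claim; the paper's approach buys c\hyp{}correctness (which you recover separately in paragraph three anyway).
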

%%%%%%%%%%%%%%%%%%%%%%%%%%%%%%%%%%%%%%%%

%\detailedproof{
%\subsection{Preserving C-completeness for x-Relations}\label{sec:c-correct-xtuple}
%%%%%%%%%%%%%%%%%%%%%%%%%%%%%%%%%%%%%%%%%%%%%%%%%%%%%%%%%%%%
\mypar{\abbrXDBs}
In general, $\raPlus$ queries over labelings derived from \abbrXDBs using our labeling scheme $\doTULX$ from Section~\ref{sec:create-labelings} do not preserve c-completeness. % However, not all is lost. We
We present a sufficient condition for a query to preserve c-completeness over such a labeling. To this end, we define \textit{x-keys}, constraints that ensure that alternatives within the scope of an x-tuple are not all identical if projected on a set of attributes $A$. Since our labeling scheme for \abbrXDBs is c-complete, queries preserve c-completeness unless a result tuple that is certain is derived from multiple \emph{correlated} uncertain input tuples. Since x-tuples from an \abbrXDB are independent of each other, this can only be the case if a result tuple is derived from alternatives of an x-tuple $\xTup$ from every possible world (i.e., where $\xTup$ is not optional).  % $\prob(\xTup) = 1$).
Such a situation can be avoided if it is guaranteed that it is impossible for a result tuple to be derived from all alternatives of an x-tuple.
%Intuitively, if we ensure that two different alternatives of an x-tuple are only used to generate different outputs then this is guaranteed to hold.

% correctness guarantees as they relate to X-tuples.  We will make use of the following definitions.
% \begin{Definition}[Set of X-tuples]\label{def:x-tup}
% $\xTupSet = \{\xTup \ s.t. \ \xTup \in \rel\}$
% \end{Definition}
% \begin{Definition}[Versions of an X-tuple]\label{def:version-of-x-tup}
% $\xTupVer = \{\xTupAlt{i} \ s.t. \ \xTupAlt{i} \in \xTup \wedge \ i \in [\ 1, |\xTup| \ ]\}$
% \end{Definition}

\begin{Definition}[x-key]\label{def:x-key}
  Let $\rel$ be an x-relation with schema $\relschema$. A set of attributes $A \subseteq \relschema$ is called an x-key for $\rel$ iff % for all $\xTup \in R$ at least one of the following conditions holds:
  \[
    \forall \xTup \in R: (\xTup\,\text{is optional}) \vee \revc{\card{\xTup} = 1} \vee (\exists t_1, t_2  \in \xTup: t_1[A] \neq t_2[A])
\]
%  (\exists t_1, t_2  \in \xTup: t_1[A] \neq t_2[A]) \vee \prob(\xTup) < 1 \vee \card{\xTup} = 1$$
  % \begin{itemize}
  % \item $(\exists t_1 \neq t_2  \in \xTup: t_1[A] \neq t_2[A]) \vee (\prob(\xTup) < 1) \vee (\card{\xTup} = 1)$
  % \item $\prob(\xTup) < 1$
  % \item $\card{\xTup} = 1$
  % \end{itemize}
% \begin{gather*}
%   \forall \xTup \in R: (\exists i \neq j \in \{ 1, \ldots, |\xTup|\}: \xTupAlt{i}.A \neq \xTupAlt{j}.A)\\
%                        \vee \prob(\xTup) < 1\\
%                        \vee \card{\xTup} = 1
%   %\vee P(t) < 1\}
% \end{gather*}
\end{Definition}

An x-key is a set of attributes $A$  such that for any x-tuple $\xTup$ that is not optional % $\prob(\xTup) = 1$
and has more than one alternative, there exists at least two alternatives that differ in $A$.
\iftechreport{The following lemma states that a superset of an x-key is also an x-key.

%%%%%%%%%%%%%%%%%%%%%%%%%%%%%%%%%%%%%%%%%%%%%%%%%%%%%%%%%%%%
\begin{Lemma}\label{lem:superset-of-xkey-is-xkey}
Let $A \subseteq B \subseteq \relschema$ where $\relschema$ is the schema of an x-relation $\rel$. If $A$ is an x-key for $R$, then so is $B$.
\end{Lemma}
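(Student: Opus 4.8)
The plan is to unfold the definition of an x-key (\Cref{def:x-key}) for $B$ and verify the disjunction tuple-by-tuple, reducing each case to the corresponding case already known for $A$. So first I would fix an arbitrary x-tuple $\xTup \in \rel$ and, using the hypothesis that $A$ is an x-key, split into the three cases of \Cref{def:x-key}: (i) $\xTup$ is optional, (ii) $\card{\xTup} = 1$, or (iii) there exist $t_1, t_2 \in \xTup$ with $t_1[A] \neq t_2[A]$. Cases (i) and (ii) do not mention the attribute set at all, so they transfer verbatim to the witness required for $B$.

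The only case requiring an argument is (iii). Here I would invoke the elementary monotonicity of tuple restriction: if $A \subseteq B$ then for any two tuples $t_1, t_2$ of schema $\relschema$, agreement on $B$ implies agreement on $A$ (restricting to $B$ and then to $A$ gives the restriction to $A$). Taking the contrapositive, $t_1[A] \neq t_2[A]$ implies $t_1[B] \neq t_2[B]$. Hence the same pair $t_1, t_2 \in \xTup$ witnesses the third disjunct of the x-key condition for $B$. Since $\xTup$ was arbitrary, $B$ satisfies \Cref{def:x-key}, i.e., $B$ is an x-key for $\rel$.

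There is essentially no obstacle here: the statement is a monotonicity property of the x-key predicate in its attribute-set argument, and the entire content is the observation that coordinate-wise inequality of tuples is preserved when enlarging the set of coordinates examined. The proof is a few lines; the only thing to be careful about is stating the restriction-monotonicity fact in the contrapositive direction (disagreement propagates upward from $A$ to $B$), rather than the more familiar direction (agreement propagates downward).
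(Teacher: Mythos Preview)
Your proposal is correct and takes essentially the same approach as the paper: the paper likewise observes that the first two disjuncts of \Cref{def:x-key} do not depend on the attribute set, and that if two alternatives differ on $A$ they trivially differ on any superset $B \supseteq A$. Your write-up is, if anything, slightly more careful in spelling out the contrapositive step.
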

%%%%%%%%%%%%%%%%%%%%%%%%%%%%%%%%%%%%%%%%%%%%%%%%%%%%%%%%%%%%
\begin{proof}
Whether the first condition or first subcondition of the second condition of Definition~\ref{def:x-key} hold for an x-tuple is independent of the particular choice of x-key. If the second subcondition is true (which trivially implies that the first subcondition is true), then two alternatives of the x-tuple differ on $A$ which trivially implies that they differ on a superset of $A$.
\end{proof}
%%%%%%%%%%%%%%%%%%%%%%%%%%%%%%%%%%%%%%%%
}
We prove that for any \abbrXDB  $\pdb$, if a conjunctive, self-join free query $\query$ (a query using selection, projection, and join that accesses no relation more than once) returns at least one x-key per accessed relation, then the query preserves c-completeness. % labeling will be correct.

%%%%%%%%%%%%%%%%%%%%%%%%%%%%%%%%%%%%%%%%
\begin{Theorem}\label{theo:x-db-c-completeness}
  Let $\TUL$ be a labeling for an \abbrXDB $\pdb$ computed using $\doTULX$. Consider a conjunctive query $\query$ in canonical form $\projection_{\mathcal{A}}(\selection_\theta(R_1 \times \ldots \times R_n))$ with $R_i \neq R_j$ for all $i \neq j \in \{1, \ldots, n\}$. Query $\query$ preserves c-completeness if $\mathcal{A}$ contains an x-key for every relation $R_i$ accessed by $\query$.
\end{Theorem}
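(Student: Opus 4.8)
The plan is to reduce everything to the construction of a single ``witness world.'' Since $\doTULX$ is c-correct (Theorem~\ref{theo:h-homo-UA-TUL-correct}), the input labeling satisfies $\TUL(\tup') = \pwCertain(\pdb,\tup') = \GlbK(\pdb(\tup'))$ for every base tuple $\tup'$, where $\pdb$ denotes the $\pwK$-encoding of the \abbrXDB{} (so $\semK = \semB$ here). By the definition of preservation of c-completeness, it suffices to show $\pwCertain(\query(\pdb),\tup) \ordK \query(\TUL)(\tup)$ for every result tuple $\tup$. First I would unfold the canonical-form query semantics over $\pwK$-relations. Writing $A_j$ for the subset of the (disjoint) attributes of $R_j$ that occur in $\mathcal{A}$, and letting $S_\tup$ be the set of combinations $(t_1,\dots,t_n)$ with $t_j \in R_j$, $\theta(t_1\cdots t_n)$ true, and $(t_1\cdots t_n)[\mathcal{A}] = \tup$, we get $\query(\pdb)(\tup) = \bigoplus_{(t_1,\dots,t_n)\in S_\tup}\bigotimes_{j}\pdb(t_j)$ (operations of $\pwK$), and similarly $\query(\TUL)(\tup) = \bigvee_{(t_1,\dots,t_n)\in S_\tup}\bigwedge_j \TUL(t_j)$. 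Call a tuple of $R_j$ \emph{$\tup$-contributing} if it occurs as the $j$-th component of some member of $S_\tup$; every such tuple agrees with $\tup$ on $A_j$.

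The crux is building a world $i^\star$ in which $\pdb(t_j)[i^\star] = \GlbK(\pdb(t_j))$ for every $\tup$-contributing tuple $t_j$, of every $R_j$. I would construct it by choosing an alternative independently for each x-tuple, as follows. If an x-tuple $\xTup$ of $R_j$ is the single, non-optional alternative, it is forced and its (certain) tuple lies in every world. Every other x-tuple $\xTup$ of $R_j$ has \emph{all} of its alternatives annotated $\zeroOf{\semB}$ by $\GlbK$; I claim it additionally has an option --- an alternative, or ``no alternative'' when $\xTup$ is optional --- that is not $\tup$-contributing, and I would select such an option. For optional $\xTup$ this is trivial; for non-optional $\xTup$ with $\card{\xTup}\ge 2$ it is exactly where the hypothesis is used: $A_j$ is an x-key for $R_j$ (by Lemma~\ref{lem:superset-of-xkey-is-xkey}, since $\mathcal{A}$ contains an x-key of $R_j$), so two alternatives of $\xTup$ disagree on $A_j$ and hence cannot both equal $\tup$ on $A_j$, so at least one alternative is non-$\tup$-contributing. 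Because the x-tuples, and the relations of an \abbrXDB{}, are mutually independent, all these choices are jointly realizable, defining a world $i^\star$ in which no uncertain $\tup$-contributing tuple is present and every certain one is; hence $\pdb(t_j)[i^\star] = \GlbK(\pdb(t_j)) = \TUL(t_j)$ for every $\tup$-contributing $t_j$.

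To finish: $\pwWorld{i^\star}$ is a semiring homomorphism (Lemma~\ref{lem:poss-world-is-homomorphism}), so it commutes with $\query$; and $\query(\cdot)(\tup)$ depends only on the annotations of the $\tup$-contributing tuples, on which $\pwWorld{i^\star}(\pdb)$ coincides with $\TUL$. Therefore $\pwWorld{i^\star}(\query(\pdb))(\tup) = \query(\pwWorld{i^\star}(\pdb))(\tup) = \query(\TUL)(\tup)$, and since a greatest lower bound is below every coordinate, $\pwCertain(\query(\pdb),\tup) = \GlbK(\query(\pdb)(\tup)) \ordK \query(\pdb)(\tup)[i^\star] = \query(\TUL)(\tup)$, i.e.\ $\query(\TUL)$ is c-complete for $\query(\pdb)$. (Feeding the common minimizer $i^\star$ into Lemma~\ref{lem:min-factors-when-same-min} applied inductively over the sum-of-products in fact yields equality, so $\query(\TUL)$ is even c-correct for $\query(\pdb)$ --- consistent with Theorem~\ref{theo:UA-are-c-sound} --- but c-completeness is what is claimed.) I expect the main obstacle to be the x-tuple bookkeeping in the second step: carefully justifying that the x-key property, together with self-join-freeness (which makes $A_j$ well defined), prevents a non-optional multi-alternative x-tuple from having all of its alternatives agree with $\tup$ on the projected attributes, so that a ``safe'' alternative always exists.
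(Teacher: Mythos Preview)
Your proposal is correct and arrives at the same key insight as the paper --- the x-key hypothesis guarantees that every non-optional multi-alternative x-tuple has at least one alternative that does not project onto $\tup[A_j]$, so one can ``dodge'' all uncertain contributing tuples simultaneously --- but you package it quite differently. The paper argues by contradiction: assuming $\query(\TUL)(\tup)=F$ yet $\pwCertOf{\semB}(\query(\pdb')(\tup))=T$, it partitions the monomials of the sum-of-products by the index $min_j$ of their first false factor, and then builds the witnessing world incrementally, relation by relation (handle $M_1$ using choices in $R_1$, then $M_2$ using choices in $R_2$, etc.), invoking independence across relations to glue the pieces together. You instead give a one-shot, direct construction of a single witness world $i^\star$ per output tuple, then use the homomorphism property of $\pwWorld{i^\star}$ and the fact that $\query(\cdot)(\tup)$ only reads $\tup$-contributing annotations to conclude $\query(\TUL)(\tup) = \query(\pdb)(\tup)[i^\star] \geqK \pwCertain(\query(\pdb),\tup)$. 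Your route is shorter, avoids the $M_1, M_2, \ldots$ bookkeeping, and --- as you observe --- immediately yields c-correctness (not just c-completeness) via Lemma~\ref{lem:min-factors-when-same-min}; the paper's decomposition, on the other hand, makes the role of self-join-freeness (distinct relations $\Rightarrow$ independent choices) slightly more explicit.
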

%%%%%%%%%%%%%%%%%%%%%%%%%%%%%%%%%%%%%%%%
\begin{proof}
  % \detailedproof{We assume for sake of contradiction that c-completeness is not preserved and then use the fact that the output contains x-keys for all input relations to lead this to a contradiction.
  % }{
  Let $\pdb = \{R_1, \ldots, R_n\}$ be an x\hyp{}database, $\pdb' = \{R_1', \ldots, R_n'\}$ its encoding as a $\pwKof{\semB}$-database, $\TUL$ a c-complete labeling for $\pdb'$ derived using $\doTULX$, and $\query$ be a selfjoin-free query of the form $\projection_{\mathcal{A}}(\selection_\theta(R_1 \times \ldots \times R_n))$ such that $\mathcal{A}$ contains an x-key for every relation $R_i$ for $i \in \{1, \ldots, n\}$. Any selfjoin-free $\raPlus$ query without union can be brought into this form.
We have to show that $\query(\TUL)$ is a c-complete labeling for $\query(\pdb')$. We prove this claim by contradiction. For sake of the contradiction assume that $\query(\TUL)$ is not a c-complete labeling. Then there has to exist a tuple $t \in \query(\pdb')$ such that $\query(\TUL)(t) = F$ and $\pwCertOf{\semB}(\query(\pdb')(t)) = T$.
Recall that $\addOf{\semB} = \vee$ and $\multOf{\semB} = \wedge$. Unfolding definitions of relational algebra operators over $\semK$-relations we get:
\begin{align*}
\query(\TUL)(\tup) &= \bigvee_{u: u[A] = \tup \wedge \forall i \in \{1,\ldots,n\}: u[R_i] = t_i} \left(\bigwedge_{i=1}^{n} \TUL(t_i)\right) \wedge \theta(u)\\
\query(\pdb')(\tup) &= \hspace{-1cm} \sum_{u: u[A] = \tup \wedge \forall i \in \{1,\ldots,n\}: u[R_i] = t_i} \hspace{-1.5cm}R'_1(t_1) \multOf{\pwKof{\semB}} \ldots \multOf{\pwKof{\semB}} R'_n(t_n) \multOf{\pwKof{\semB}} \theta(u)
\end{align*}
Note that for result tuples $u$ of the crossproduct for which $u \not\models \theta$ we have $\theta(u) = F$ (respective $\theta(u) = \zeroOf{\pwKof{\semB}}$). Thus, any monomial (product) corresponding to such a $u$ will evaluate to $F$ ($\zeroOf{\pwKof{\semB}}$). Thus, we can equivalently write the above expressions as shown below where the $j$ values identify monomials for which $u  \models \theta$ WLOG assuming that there are $m \in \semN$ such monomials. % for some $m \in \semN$.
\ifnottechreport{ We get:
$\query(\TUL)(\tup) = \bigvee_{\forall j \in \{1,\ldots,m\}} \left(\bigwedge_{i=1}^{n} \TUL(t_{j_i}) \right)$ and $\query(\pdb')(\tup) = \sum_{\forall j \in \{1,\ldots,m\}} \prod_{i=1}^{n} R'_i(t_{j_i})$.
}
\iftechreport{
\begin{align*}
\query(\TUL)(\tup) &= \bigvee_{\forall j \in \{1,\ldots,m\}} \left(\bigwedge_{i=1}^{n} \TUL(t_{j_i}) \right)\\
  \query(\pdb')(\tup) &= \sum_{\forall j \in \{1,\ldots,m\}} \prod_{i=1}^{n} R'_i(t_{j_i})
\end{align*}
}
We use $b_{j_i}$ to denote $\TUL(t_{j_i})$ and $\pwe{k_{j_i}}$ to denote $R'_i(t_{j_i})$.
Based on our assumption we know:
\ifnottechreport{
$
\bigvee_{\forall j \in \{1,\ldots,m\}} \left(\bigwedge_{i=1}^{n} b_{j_i} \right) = F
$.
}
\iftechreport{
$$
\bigvee_{\forall j \in \{1,\ldots,m\}} \left(\bigwedge_{i=1}^{n} b_{j_i} \right) = F
$$
}
So this can only be the case if for every $j \in \{1, \ldots, m\}$ there exists $f \in \{1, \ldots, n\}$ such that $b_{j_f} = F$.
For any $j \in \{1, \ldots, m\}$ let $min_j$ denote the smallest such $f$, i.e., the first element in the $j^{th}$ conjunct that is false and let $t_{min_j}$ denote the corresponding tuple. Based on the fact that $\TUL = \doTULX(\pdb')$ and that $\doTULX$ is c-complete, we know that if $\TUL(t_{min_j}) = F$ then $t_{min_j}$ is not certain. % $\prob(t_{min_j}) < 1$.
We will use this fact to derive a contradiction with the assumption $\pwCertOf{\semB}(\query(\pdb')(\tup)) = T$. For that, we partition the set of monomials from $\query(\pdb')(\tup)$ into two subsets $M_1$ and ${M_1}^{\mathcal{C}}$ where $M_1$ contains the identifiers $j$ of all monomials  such that $min_j = 1$ and ${M_1}^{\mathcal{C}}$ contains all remaining monomials.  
\ifnottechreport{
We will show that $\pwCertOf{\semB}(\sum_{j \in M_1} \prod_{i=1}^{n} k_{j_i}) = F$, then $\pwCertOf{\semB}(\sum_{j \in {M_1}^{\mathcal{C}}} \prod_{i=1}^{n} k_{j_i}) = F$, and finally $\pwCertOf{\semB}(\query(\pdb')(\tup)) = \pwCertOf{\semB}(\sum_{j \in M_1} \prod_{i=1}^{n} k_{j_i} \addOf{\pwKof{\semB}} \sum_{j \in {M_1}^{\mathcal{C}}} \prod_{i=1}^{n} k_{j_i}) = F$ which is the contradiction we wanted to derive.
}
\iftechreport{
We will show that $$\pwCertOf{\semB}(\sum_{j \in M_1} \prod_{i=1}^{n} k_{j_i}) = F$$, then $$\pwCertOf{\semB}(\sum_{j \in {M_1}^{\mathcal{C}}} \prod_{i=1}^{n} k_{j_i}) = F$$, and finally $$\pwCertOf{\semB}(\query(\pdb')(\tup)) = \pwCertOf{\semB}(\sum_{j \in M_1} \prod_{i=1}^{n} k_{j_i} \addOf{\pwKof{\semB}} \sum_{j \in {M_1}^{\mathcal{C}}} \prod_{i=1}^{n} k_{j_i}) = F$$ which is the contradiction we wanted to derive.
}

\ifnottechreport{
First, consider $\sum_{j \in M_1} \prod_{i=1}^{n} k_{j_i}$. Since $\multOf{\semB} = \wedge$ and $\multOf{\pwKof{\semB}}$ is defined as point-wise application of $\wedge$ to a vector $\pwe{k} \in \pwKof{\semB}$ we have $\pwe{k} \multOf{\pwKof{\semB}} \pwe{k'} \ordOf{\pwKof{\semB}} \pwe{k}$ for any $\pwe{k}$ and $\pwe{k'}$. Thus, $\sum_{j \in M_1} \prod_{i=1}^{n} k_{j_i} \ordOf{\pwKof{\semB}} \sum_{j \in M_1} k_{j_1}$. We  show $\pwCertOf{\semB}(\sum_{j \in M_1} k_{j_1}) = F$ from which follows $\pwCertOf{\semB}(\sum_{j \in M_1} \prod_{i=1}^{n} k_{j_i}) = F$.
}
\iftechreport{

First, consider $$\sum_{j \in M_1} \prod_{i=1}^{n} k_{j_i}$$ Since $\multOf{\semB} = \wedge$ and $\multOf{\pwKof{\semB}}$ is defined as point-wise application of $\wedge$ to a vector $\pwe{k} \in \pwKof{\semB}$ we have $\pwe{k} \multOf{\pwKof{\semB}} \pwe{k'} \ordOf{\pwKof{\semB}} \pwe{k}$ for any $\pwe{k}$ and $\pwe{k'}$. Thus, $$\sum_{j \in M_1} \prod_{i=1}^{n} k_{j_i} \ordOf{\pwKof{\semB}} \sum_{j \in M_1} k_{j_1}$$. We  show $\pwCertOf{\semB}(\sum_{j \in M_1} k_{j_1}) = F$, from which follows $$\pwCertOf{\semB}(\sum_{j \in M_1} \prod_{i=1}^{n} k_{j_i}) = F$$.
}

By construction we have that $t_{j_1}$ is not certain
% $\prob(t_{j_1}) < 1$
for all $j$ in $M_1$. Now consider the set of x-tuples from $R_1$ for which the tuples $t_{j_1}$ are alternatives. WLOG let $\xTup_1, \ldots, \xTup_l$ be these x-tuples. Now consider an arbitrary x-tuple $\xTup$ from this set and let $s_1, \ldots, s_o$ be its alternatives that are present in $M_1$. We know that none of the  $s_i$ are certain based on the fact that alternatives are disjoint events and x-tuples are independent of each other.
% $\prob(s_i) < 1$ for all $s_i$.
We distinguish 2 cases: either $\xTup$ is optional or $\xTup$ is not optional.
% $\prob(\xTup) < 1$ or $\prob(\xTup) = 1$.
In the latter case based on the fact that the query result % $\mathcal{A}$
contains an x-key for $R_1$ we know that there exists at least one alterative $s$ of $\xTup$ that is neither in $M_1$ nor in ${M_1}^{\mathcal{C}}$. To see why this is the case observe that its presence in $M_1$ would violate the x-key while by construction ${M_1}^{\mathcal{C}}$ only contains tuples $t$ from $R_1'$  which are certain.  % with $\prob(t) = 1$.
Next we construct a possible world $w \in \pwDom$ from $\pdb$ which does not contain any of the $t_{j_1}$ which means that $k_{j_1}[w] = F$. In turn, this implies that $\sum_{j \in M_1} k_{j_1}  = F$. We construct $w$ as follows: for every x-tuple $\xTup$ from $\xTup_1, \ldots, \xTup_l$ we either include no alternative of $\xTup$ if the x-tuple is optional % $\prob(\xTup) < 1$
or an alternative that is not present in $M_1$.
Now further partition ${M_1}^{\mathcal{C}}$ into two subsets: $M_2$ which contains all monomials for which $min_j = 2$ and ${M_2}^{\mathcal{C}}$ for all remaining monomials. Then using an argument symmetric to the one given for $M_1$ above we can construct a possible world for which $\sum_{j \in M_2} \prod_{i=1}^{n} t_{j_i}[w] = F$ and, thus, $\pwCertOf{\semB}(\sum_{j \in M_2} \prod_{i=1}^{n} k_{j_i}) = F$. Because the x-tuples from $M_1$ and $M_2$ are from different relations there is no overlap between these sets of x-tuples. Based on the independence of x-tuples in \abbrXDBs this implies that we can also construct a possible world $w$ where $\sum_{j \in M_1} \prod_{i=1}^{n} k_{j_i}[w] \addOf{\pwKof{\semB}} \sum_{j \in M_2} \prod_{i=1}^{n} k_{j_i}[w] = F$ and, thus, $\pwCertOf{\semB}(\sum_{j \in M_1} \prod_{i=1}^{n} k_{j_i} \addOf{\pwKof{\semB}} \sum_{j \in {M_2}} \prod_{i=1}^{n} k_{j_i}) = F$. We can now continue this construction to include $M_3$, $M_4$, and so on. Note that we are guaranteed that $M_n$ contains all monomials that will be left over at this point, because we started from the observation that at least one $k$ in every monomial corresponds to a tuple $t$ which is not certain.  % has $\prob(t) < 1$.
It follows that
\ifnottechreport{
$
\pwCertOf{\semB}(\query(\pdb')(\tup)) = \pwCertOf{\semB}\left(\sum_{o=1}^{n} \left(\sum_{j \in M_o} \prod_{i=1}^{n} k_{j_i} \right) \right)  = F
$.
  }
\iftechreport{
$$
\pwCertOf{\semB}(\query(\pdb')(\tup)) = \pwCertOf{\semB}\left(\sum_{o=1}^{n} \left(\sum_{j \in M_o} \prod_{i=1}^{n} k_{j_i} \right) \right)  = F
$$
}
which contradicts our assumption that $\pwCertOf{\semB}(\query(\pdb')(\tup)) = T$ and thus concludes the proof.
%}
\end{proof}
\section{Implementation}
\label{sec:implementation}

We now discuss the implementation of a \abbrUADB as a query rewriting front-end built on top of a relational DBMS.
A $\uaK{\semK}$-relation with schema $\relschema(A_1, \ldots, A_n)$ annotated with a pairs of $\semK$-elements $\uae{d}{c}$ is encoded by a $\semK$-relation $\relschema'(A_1, \ldots, A_n, \cAttr)$ where the annotation of each tuple encodes $d$ and attribute $\cAttr$ stores $c$.
We specifically implement \abbrUADBs for bag semantics, as this is the model used by most DBMSes.
In contrast to $\semN$-relations where the multiplicity of a tuple is stored as its annotation, relational databases represent a tuple $\tup$  with multiplicity $n$ as $n$ copies of $t$. While in principle we could use attribute $\cAttr$ to store $c$ for each copy of $t$, alternatively we can use $\cAttr$ as a boolean marker and mark $c$ copies of $\tup$ as certain ($1$) and the remaining $d - c$ copies as uncertain ($0$)  as shown in the example in Section~\ref{sec:intro}.
We believe that this approach is easier to interpret and, thus, % for users and since it does not have any drawbacks
apply it here. % in our implementation.

Our frontend rewriting engine receives queries of the form $\query(\uadb)$ over an $\uaK{\semN}$-annotated database $\uadb$ with schema $\{\; \relschema_i(A_1, \ldots, A_n) \;\}$.
It rewrites such a query into an equivalent query $\rewrN{\query}(\db)$ over a classical bag-relational database $\db$ with schema $\{\; \relschema_i'(A_1, \ldots, A_n, C) \;\}$ where $C \in \{ 0, 1 \}$ denotes the uncertainty label.
The rewriting $\rewrN{\cdot}$ is defined through a set of rules given in Figure~\ref{fig:rewriteRules}. \tabularnewline
%Here \lstinline{R.*} denotes the schema of query or relation $\relschema$ in $\uadb$.
To support queries over a wide range of incomplete and probabilistic data models we allow the user to specify the data model of each input relation. Our rewriting engine uses SQL implementations of the labeling schemes and extraction of best guess worlds from Section~\ref{sec:create-labelings} to transform such an input relation into our encoding of $\uaK{\semN}$-relations.
%and $e \to a$ used in generalized  projection expressions denotes projecting on  the result of evaluating expression $e$ and calling the resulting attribute $a$.
\begin{figure}
{\footnotesize
\begin{eqnarray*}
  \rewrN{R}                              &=&
    \text{A Labeled \texttt{R} (see Section~\ref{sec:create-labelings})}\\
  \rewrN{\selection_\theta(Q)}           &=&
    \text{\lstinline{SELECT * FROM Q WHERE} }\theta\\
  \rewrN{\projection_{A_1\ldots A_n}(Q)} &=&
    \text{\lstinline{SELECT A1, ..., AN, C FROM Q}}\\
  \rewrN{Q_1 \bowtie_\theta Q_2}         &=&
    \text{\lstinline{SELECT Q1.*, Q2.*, Q1.C*Q2.C AS C}}\\
    &&\text{\lstinline{FROM Q1, Q2 WHERE} }\theta\\
  \rewrN{Q_1 \union Q_2}                 &=&
    \text{\lstinline{Q1 UNION ALL Q2}}
  % & \rewrN(Q_1 - Q_2) = \projection_{\schemaOf(Q_1),0 \rightarrow \uAttr}(Q_1 - Q_2) \\
  % % & \rewrN(Q_1 \intersection Q_2) = \projection_{\schemaOf(Q_1),0 \rightarrow \uAttr}(Q_1 \intersection Q_2) \\
  % \rewrN(\aggregation{G}{f(a)}(Q))     &=&
    % \projection_{G,f(a), 0 \rightarrow \uAttr}(\aggregation{G}{f(A)}(Q))
\end{eqnarray*}
}\\[-3mm]
\caption{Query rewrite rules}
\label{fig:rewriteRules}
\trimfigurespacing
\end{figure}
% \BG{We should mention that this uses additional operators $-$ and $\aggregation{}{}$}
We implement our approach as a middleware over a database system through an extension of SQL. An input query is first parsed, translated into a relational algebra graph,  rewritten using $\rewrN{\cdot}$, and then converted back to SQL for execution.
\ifnottechreport{We formally prove the correctness of our rewriting and show SQL implementations of our labeling schemas from  Section~\ref{sec:create-labelings} in~\cite{FH18}.}
% Furthermore, we also show SQL implementations of our labeling schemes from .}
\iftechreport{\subsection{Relational Algebra Rewriting and Correctness}
\label{sec:corr-rewr}

To prove that the rewriting defined above is correct, we first formally define the function $\bagEnc$ implementing the  encoding of a $\uaK{\semN}$-database as an $\semN$-database and restate $\rewrN{}$ as relational algebra rewriting rules. Afterwards, we prove that this rewriting correctly encodes  $\uaK{\semN}$ query semantics.
In the following, we use $\{t \mapsto k\}$ to denote a singleton relation where tuple $t$ is annotated with $k$ and all other tuples are annotated with $0$. Recall that $\arity{R}$ denotes the arity (number of attibutes) of a relation.

%%%%%%%%%%%%%%%%%%%%%%%%%%%%%%%%%%%%%%%%
\begin{Definition}[Multiset encoding]
$\bagEnc(R)$ is a function from $\uaK{\semN}$-relations to $\semN$-relations. Let $R$ be a $\uaK{\semN}$-relation with schema ${A_1,\ldots ,A_n}$. Let $R'$ be an $\semN$-relation with schema ${A_1,\dots ,A_n,\uAttr}$  that is the result of $\bagEnc(R)$ for some $R$. $\bagEnc$ and its inverse are defined as:
\begin{align*}
	\bagEnc(R) &=\bigcup_{t \in \aDom^{\arity{R}}} \{(t,1) \mapsto \cOf{R(t)} \} \\
	&\hspace{2cm}\cup \{(t,0) \mapsto \dOf{R(t)} - \cOf{R(t)} \} \\
	\bagEnc^{-1}(R') &=\bigcup_{t \in \aDom^{\arity{R}}} {t \mapsto (R'(t,1), R'(t,0)+R'(t,1))}
\end{align*}
\end{Definition}
%%%%%%%%%%%%%%%%%%%%%%%%%%%%%%%%%%%%%%%%

We define $\bagEnc$ over databases as applying $\bagEnc$ to every relation in the database.
Note that even though we define the encoding for bag semantics here, it can be generalized to any $\uaK{\semK}$ where semiring $\semK$ has a monus~\cite{Geerts:2010bz} by replacing $-$ with $\monK$ (the monus operation).
Next, we define the relational algebra version of our rewriting $\rewrN{\cdot}$ that translates an input query into a query over the encoding produced by $\bagEnc$. % which faithfully simulates the semantics of $\uaK{\semN}$-relational queries over this encoding.
Again, the rewriting is defined through a set of rules (one per relational algebra operator). The rules are shown in Figure~\ref{fig:rel-algebra-rewrN}. Here $\schemaOf(\query)$ denotes the schema of the result of query $\query$ and $e \to a$ used in generalized  projection expressions denotes projecting on  the result of evaluating expression $e$ and calling the resulting attribute $a$.

%%%%%%%%%%%%%%%%%%%%%%%%%%%%%%%%%%%%%%%%
\begin{figure*}[t]
  \centering
  \begin{align*}
   \rewrN{R} &= R\\
	 \rewrN{\selection_\theta(Q)}&=\selection_\theta(\rewrN{Q}) \\
	 \rewrN{\projection_A(Q)}&=\projection_{A,\cAttr}(\rewrN{Q}) \\
	 \rewrN{Q_1 \join_\theta Q_2}&=\projection_{\schemaOf(Q_1 \join Q_2),min(Q_1.\cAttr,Q_2.\cAttr) \rightarrow \cAttr}(\rewrN{Q_1} \join_\theta \rewrN{Q_2}) \\
	 \rewrN{Q_1 \union Q_2} &= \rewrN{Q_1} \union \rewrN{Q_2}
	% & \rewrN(Q_1 - Q_2) &= \projection_{\schemaOf(Q_1),0 \rightarrow \cAttr}(Q_1 - Q_2) \\
	% % & \rewrN(Q_1 \intersection Q_2) = \projection_{\schemaOf(Q_1),0 \rightarrow \cAttr}(Q_1 \intersection Q_2) \\
	% & \rewrN(\aggregation{G}{f(a)}(Q)) = \projection_{G,f(a), 0 \rightarrow \cAttr}(\aggregation{G}{f(A)}(Q))
\end{align*}

\caption{Relational algebra rewrite rules implementing $\rewrN{Q}$}
\label{fig:rel-algebra-rewrN}
\end{figure*}
%%%%%%%%%%%%%%%%%%%%%%%%%%%%%%%%%%%%%%%%

%%%%%%%%%%%%%%%%%%%%%%%%%%%%%%%%%%%%%%%%
\begin{Theorem}
  Let $\uadb$ be a $\uaK{\semN}$-database and $\query$ an $\raPlus$ query. The following holds:
  \begin{align*}
    \query(\uadb) = \bagEnc^{-1} (\rewrN{\query}(\bagEnc(\uadb)))
  \end{align*}
\end{Theorem}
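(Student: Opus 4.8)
The plan is to prove the slightly stronger \emph{commutation} identity
\[
  \rewrN{\query}(\bagEnc(\uadb)) \;=\; \bagEnc\bigl(\query(\uadb)\bigr)
  \qquad\text{for every }\raPlus\text{ query }\query,
\]
and then obtain the theorem by applying $\bagEnc^{-1}$ to both sides. For that last step I rely on $\bagEnc^{-1}\circ\bagEnc$ being the identity on \abbrUADBs{}: writing $c=\cOf{\uadb(\tup)}$ and $d=\dOf{\uadb(\tup)}$, we have $\bagEnc^{-1}(\bagEnc(\uadb))(\tup)=\bigl(\bagEnc(\uadb)(\tup,1),\,\bagEnc(\uadb)(\tup,0)+\bagEnc(\uadb)(\tup,1)\bigr)=(c,(d-c)+c)$, which recovers the original $c$- and $d$-components precisely because $c\ordK d$, so the truncated subtraction in $\semN$ coincides with ordinary subtraction. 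The condition $\cOf{\uadb(\tup)}\ordK\dOf{\uadb(\tup)}$ is the standing well-formedness assumption on \abbrUADBs{} (each \abbrUADB encodes a \termBGW{} $\db$ and a labeling $\TUL$ with $\TUL(\tup)\ordK\db(\tup)$), and a one-line side induction using monotonicity of $+$, $\times$ and $\min$ shows $\raPlus$ queries preserve it, so $\query(\uadb)$ is again well-formed and $\bagEnc^{-1}$ applies to it.

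The commutation identity I would prove by structural induction on $\query$. The base case $\query=R$ is immediate, as $\rewrN{R}(\bagEnc(\uadb))=\bagEnc(\uadb)(R)=\bagEnc(R(\uadb))$. For the inductive cases the key facts are that $h_{\cOfName}$ and $h_{\dOfName}$ are semiring homomorphisms $\semN^2\to\semN$ (the two projections of the direct product, hence distributing over $+$ and $\times$), together with $\cOf{\cdot}\ordK\dOf{\cdot}$. For union it suffices that $\bagEnc$ is additive tuple-by-tuple: on the $(\tup,1)$-copies since $h_{\cOfName}$ distributes over $+$, and on the $(\tup,0)$-copies since $(d_1{+}d_2)-(c_1{+}c_2)=(d_1{-}c_1)+(d_2{-}c_2)$ when $c_i\le d_i$. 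Selection merely scales each copy's multiplicity by $\theta(\tup)\in\{0,1\}$, and $\theta$ refers only to the original attributes, not to the $\cAttr$ column, so it commutes on the $1$- and $0$-copies alike. Projection sums multiplicities over tuples agreeing on the projected attributes, matching $h_{\cOfName}/h_{\dOfName}$ distributing over the sum in the $\semK$-relational projection, and on the $0$-copies using $\sum_i(d_i-c_i)=(\sum_i d_i)-(\sum_i c_i)$.

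The only case that needs real work — and the step I expect to be the main obstacle — is join, because its rewrite uses a generalized projection with $\min$. Fix a result tuple $\tup$ built from $\tup_1,\tup_2$ whose annotations have $d$-/$c$-components $(d_1,c_1)$ and $(d_2,c_2)$. The $\semN^2$-relational join annotates $\tup$ with the componentwise product $(d_1 d_2,\,c_1 c_2)$, so $\bagEnc$ of the join contains $c_1 c_2$ copies of $(\tup,1)$ and $d_1 d_2-c_1 c_2$ copies of $(\tup,0)$. On the rewritten side, $\bagEnc(\cdot)$ of the two inputs are joined on $\theta$ keeping both label columns $C_1,C_2$, and then $\min(C_1,C_2)\to\cAttr$ is applied; since the result tuple determines $\tup_1$ and $\tup_2$ uniquely, the multiplicity of $(\tup,\cAttr{=}1)$ is $\sum_{\min(b_1,b_2)=1}(\text{mult of }(\tup_1,b_1))(\text{mult of }(\tup_2,b_2))=c_1 c_2$ (only $b_1{=}b_2{=}1$ contributes), and the multiplicity of $(\tup,\cAttr{=}0)$ is the sum over $(b_1,b_2)\in\{(0,0),(0,1),(1,0)\}$, i.e.\ $(d_1{-}c_1)(d_2{-}c_2)+(d_1{-}c_1)c_2+c_1(d_2{-}c_2)$, which simplifies to $d_1 d_2-c_1 c_2$ (all subtractions exact since $c_i\le d_i$). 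Hence the two encodings agree tuple-by-tuple and the induction closes. The identical computation, with $\monK$ in place of $-$ and invoking the monus laws, yields the general-$\semK$-with-monus version mentioned right after the definition of $\bagEnc$.
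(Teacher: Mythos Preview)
Your proof is correct and follows the same structural-induction route as the paper, with identical per-operator multiplicity computations (your join calculation $(d_1{-}c_1)(d_2{-}c_2)+(d_1{-}c_1)c_2+c_1(d_2{-}c_2)=d_1d_2-c_1c_2$ matches the paper's verbatim). Your packaging via the commutation identity $\rewrN{\query}\circ\bagEnc=\bagEnc\circ\query$ is marginally cleaner---the paper tacitly uses exactly this form of the induction hypothesis inside each step before reapplying $\bagEnc^{-1}$---and you make explicit the well-formedness invariant $c\le d$ that the paper leaves implicit but needs for $d-c+c=d$ under $\semN$'s monus.
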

%%%%%%%%%%%%%%%%%%%%%%%%%%%%%%%%%%%%%%%%
\begin{proof}
  Straightforward induction over the structure of queries.
  %%%%%%%%%%%%%%%%%%%%
  \proofpara{Base case: $\query = R$} WLOG consider a tuple $\tup$ and let $R(\tup) = \uae{d}{c}$. We know that $$\bagEnc(R)(t,0) = \dOf{R(t)} - \cOf{R(t)} = d - c$$ and $$\bagEnc(R)(t,1) = \cOf{R(t)} = c$$
  Let $R'' = \bagEnc^{-1}(\rewrN{\query}(\bagEnc(R))) = \bagEnc^{-1}(\bagEnc(R))$. Then $R''(t) = \uae{R(t,0) + R(t,1)}{R(t,1)} = \uae{d - c + c}{c} =  \uae{d}{c}$.

%%%%%%%%%%%%%%%%%%%%
  \proofpara{Induction Step}
  Assume that the claim holds for queries $\query_1$ and $\query_2$, we have to show that it also holds for applying an operator of $\raPlus$ to the result of these queries. We use ${\query_{\rewrN{}}}_i = \rewrN{\query_i}$ and  ${\rel}_{i}= \query_i(\uadb)$. %  and $\rel_i = \bagEnc({\uarel}_i)$.

  %%%%%%%%%%%%%%%%%%%%%%%%%%%%%%%%%%%%%%%%
  \proofpara{Selection $\selection_{\theta}(\query_1)$}
  Note that
  $$\rewrN{\selection_\theta(\query_1)} = \selection_\theta(\rewrN{\query_1})$$
  Consider a tuple $\tup$  with $\rel_1(\tup) = \uae{d}{c}$. Let ${\rel_1}'' = \bagEnc^{-1}(\selection_{\theta}(\bagEnc(\rel_1))$.
We have $\selection_{\theta}(\rel_1)(\tup) = \rel_1(\tup) \uaMult{\semN} \theta(\tup)$ and
$$\selection_{\theta}(\bagEnc(\rel_1)(\tup,0)) = (d - c) \cdot \theta(\tup,0)$$
$$\selection_{\theta}(\bagEnc(\rel_1)(\tup,1)) = c \cdot \theta(\tup,1)$$
Since the selection condition does not access attribute $U$, we have $$\theta(\tup,0) = \theta(\tup,1) = 1 \Leftrightarrow \theta(\tup) = \uae{0}{1}$$
Applying the definition of $\bagEnc^{-1}$, we get
$${R_1}''(\tup) = \uae{ (d-c) \cdot \theta(\tup,0) + c \cdot \theta(\tup,1)}{c \cdot \theta(\tup,1)}$$
We now distinguish two cases: either $\tup \models \theta$ and $\tup \not\models \theta$.
First consider the case where $\tup \models \theta$. Then, $\theta(\tup,0) = \theta(\tup,1) =1$ and we get

$${R_1}''(\tup) = \uae{ (d - c + c) \cdot 1}{c \cdot 1} = \uae{d}{c} = R_1(\tup) = R_1(\tup) \uaMult{\semN} \theta(\tup)$$

Now consider the case $\tup \not\models \theta$. Then, $\theta(\tup,0) = \theta(\tup,1) = 0$ and we get

$${R_1}''(\tup) = \uae{ (d - c + c) \cdot 0}{c \cdot 0} = \uae{0}{0} =  R_1(\tup) \uaMult{\semN} \theta(\tup)$$

%%%%%%%%%%%%%%%%%%%%%%%%%%%%%%%%%%%%%%%%%%%%%%%%%%%%%%%%%%%%
\proofpara{Natural Join $\query_1 \join_\theta \query_2$}
Let $$\rel'' =  \bagEnc^{-1}(  \rewrN{Q_1 \join_\theta Q_2}(\bagEnc(\rel_1),\bagEnc(\rel_2))$$
and consider a tuple $\tup$ with $\tup \models \theta$ and let $\tup_1 = \tup[R_1]$, $\tup_2 = \tup[R_2]$, $R_i(\tup_i) = \uae{d_i}{c_i}$ for $i \in \{1,2\}$, and
$\query_{res} =  \rewrN{Q_1 \join_\theta Q_2}$.
\begin{align*}
\query_{res} = &\projection_{\schemaOf(Q_1 \join_\theta Q_2),min(Q_1.\cAttr,Q_2.\cAttr) \rightarrow \cAttr}(\query_{join})\\
  \query_{join} = &\rewrN{\query_1} \join_\theta \rewrN{\query_2}
\end{align*}

Based on the induction assumption we have $\rewrN{\query_i}(\tup_i,0) = d_i - c_i$ and $\rewrN{\query_i}(\tup_i,1) = c_i$. Mapping $\bagEnc$ creates two versions of $\tup_i$, thus, there are 4 ways of joining these versions:

\begin{align*}
\query_{join}(\tup, 0,0) &= \rewrN{\query_1}(\tup_1,0) \cdot \rewrN{\query_2}(\tup_2,0) \\&= (d_1 - c_1) \cdot (d_2 - c_2)\\[2mm]
\query_{join}(\tup, 0,1) &= \rewrN{\query_1}(\tup_1,0) \cdot \rewrN{\query_2}(\tup_2,1) \\&= (d_1 - c_1) \cdot c_2\\[2mm]
\query_{join}(\tup, 1,0) &= \rewrN{\query_1}(\tup_1,1) \cdot \rewrN{\query_2}(\tup_2,0) \\&= c_1 \cdot (d_2 - c_2)\\[2mm]
\query_{join}(\tup, 1,1) &= \rewrN{\query_1}(\tup_1,1) \cdot \rewrN{\query_2}(\tup_2,1) \\&= c_1 \cdot c_2\\
\end{align*}

The projection expression $min(Q_1.\cAttr, Q_2.\cAttr)$ maps the first three cases to $(\tup, 0)$ and the last case to $(\tup,1)$. Thus,
\begin{align*}
  \query_{res}(\tup,0) &= (d_1 - c_1) \cdot (d_2 - c_2) + (d_1 - c_1) \cdot c_2 + c_1 \cdot (d_2 - c_2)  \\&= d_1 \cdot d_2 - c_1 \cdot c_2\\[2mm]
  \query_{res}(\tup,1) &= c_1 \cdot c_2
\end{align*}
Finally, we get
\begin{align*}
\rel''(\tup) = \uae{d_1 \cdot d_2 - c_1 \cdot c_2 + c_1 \cdot c_2}{c_1 \cdot c_2} = [\query_1 \join_\theta \query_2](\tup)
\end{align*}

%%%%%%%%%%%%%%%%%%%%%%%%%%%%%%%%%%%%%%%%%%%%%%%%%%%%%%%%%%%%
\proofpara{Projection $\projection_{A}(\query_1)$}
$\rewrN{\projection_{A}(\query_1)} = \projection_{A,\cAttr}(\rewrN{\query_1})$.
Recall the definition of projection: $[\projection_A(R_1)](\tup) = \sum_{s[A] = t} R_1(s)$. Consider a tuple $\tup$ and let $\{s_1, \ldots, s_n\}$ be the set of tuples with $s_i[A] = t$. Furthermore, let $s_i = \uae{d_i}{c_i}$ and
$${\rel_1}'' = \bagEnc^{-1}(\projection_{A,\cAttr}(\bagEnc(\rel_1))$$
Then, ${\rel_1}''(\tup) = \uae{k_{t,0} + k_{t,1}}{k_{t,1}}$ for
\begin{align*}
  k_{t,0} &=  {\sum_{(s,0)[A,\uAttr] = (t,0)} \bagEnc(R_1)(s,0)} = {\sum_{i=1}^{n} d_i} - {\sum_{i=1}^{n} c_i}\\
  k_{t,1} &=  {\sum_{(s,1)[A,\uAttr] = (t,1)} \bagEnc(R_1)(s,1)} = {\sum_{i=1}^{n} c_i}\\
\end{align*}
Thus, we get
\begin{align*}
  {\rel_1}''(\tup) &= \uae{k_{t,0} + k_{t,1}}{k_{t,1}}\\
  &= \left[{\sum_{i=1}^{n} d_i} - {\sum_{i=1}^{n} c_i} + {\sum_{i=1}^{n} c_i},{\sum_{i=1}^{n} c_i}\right]\\
  &= \sum_{s[A] = t} R_1(s)\\
  &= \projection_A(R_1)(\tup)
\end{align*}

%%%%%%%%%%%%%%%%%%%%%%%%%%%%%%%%%%%%%%%%%%%%%%%%%%%%%%%%%%%%
\proofpara{Union $\query_1 \union \query_2$}
$\rewrN{\query_1 \union \query_2} = (\rewrN{\query_1} \union \rewrN{\query_2})$. Consider a tuple $\tup$  with $\rel_i(\tup) = \uae{d_i}{c_i}$. Let
$${\rel}'' = \bagEnc^{-1}(\bagEnc(\rel_1) \union \bagEnc(\rel_2))$$.

We have $[\rel_1 \union \rel_2](\tup) = \uae{d_1 + d_2}{c_1 + c_2}$ and
\begin{align*}
[\bagEnc(\rel_1) \cup \bagEnc(\rel_2)](\tup,0)) &= (d_1 - c_1) + (d_2 - c_2)  \\
[\bagEnc(\rel_1) \cup \bagEnc(\rel_2)](\tup,1)) &= c_1 + c_2
\end{align*}
Based on this we get
\begin{align*}
  {R_1}''(\tup) &= \uae{ (d_1 - c_1) + (d_2 - c_2) + (c_1 + c_2)}{c_1 + c_2}\\
  &= \uae{ d_1 + d_2}{c_1 + c_2}\\
               &= \uae{d_1}{c_1} \uaAdd{\semN} \uae{d_2}{c_2}=  [\rel_1 \union \rel_2](\tup)
\end{align*}

\end{proof}

%%%%%%%%%%%%%%%%%%%%%%%%%%%%%%%%%%%%%%%%%%%%%%%%%%%%%%%%%%%%
\subsection{SQL Implementations of Labeling Schemes}
\label{sec:sql-impl-label}

We now show SQL implementations of our methods for extracting best guess worlds and labeling schemes from Section~\ref{sec:create-labelings}.

%%%%%%%%%%%%%%%%%%%%%%%%%%%%%%%%%%%%%%%%
\mypar{\abbrTIs}
Consider a \abbrTI relation $R(A_1, \ldots, A_n)$ which is stored as a relation $R'(A_1, \ldots, A_n, P)$ where attribute $P$ stores the probabilities of tuples. Recall that we include all tuples $\tup$ where $\prob(\tup) \geq 0.5$ in the best guess world and our labeling scheme for \abbrTIs annotates tuples $\tup$ with $T$ (certain) if $\prob(t) = 1$. In SQL this is expressed as

\begin{lstlisting}[morekeywords={TI,IS,PROBABILITY,WITH}]
SELECT A1, ... An,
       CASE WHEN P = 1
            THEN 1
            ELSE 0
       END AS C
FROM R
WHERE P >= 0.5
\end{lstlisting}

We expect the user to specify the name of the attribute storing the probability for any relation that is marked to be a \abbrTI relation.
The example shown below illustrates how to mark a relation $R$ which stores probabilities in attribute $p$ as a \abbrTI relation.

\begin{lstlisting}[morekeywords={TI,IS,PROBABILITY,WITH}]
SELECT * FROM R IS TI WITH PROBABILITY (p)
\end{lstlisting}

%%%%%%%%%%%%%%%%%%%%%%%%%%%%%%%%%%%%%%%%
\mypar{\abbrXDBs}
For an x-relation $R(A_1, \ldots, A_n)$ which is stored as a relation $R'(\xAttr, \altAttr, A_1, \ldots, A_n, \pAttr)$ where $\xAttr$ stores identifiers for x-tuples and $\altAttr$ store an identifier for alternatives that is unqiue within the scope of an x-tuple.
For each x-tuple $\xTup$ we pick the alternative with the highest probability if the total probability mass of the x-tuple is larger or equal to $0.5$. We only mark alternatives of x-tuples as certain if $\prob(\xTup) = 1$ and $\card{\xTup} = 1$. In the SQL implementation we make extensive use of analytical functions (SQL's \lstinline!OVER!-clause).

\begin{lstlisting}
SELECT A1, ..., An
       CASE WHEN P = 1
            THEN 1
            ELSE 0
       END AS C
FROM R
WHERE Aid = FIRST_VALUE(Aid) OVER w1
      AND 1 - (sum(P) OVER w2)
          >= max(P) OVER w2
WINDOW w1 AS (PARTITION BY Xid ORDER BY P DESC),
       w2 AS (PARTITION BY Xid)
\end{lstlisting}

When an input relation is identified as an x-relation, we require that the user specifies which attributes stores x-tuple identifies, alternative identifiers, and probabilities. For example, consider the SQL snipplet shown below.

\begin{lstlisting}[morekeywords={X,XID,ALTID,PROBABILITY,IS}]
SELECT *
FROM R IS X WITH XID (tid)
                 ALTID (aid)
                 PROBABILITY (p)
\end{lstlisting}

%%%%%%%%%%%%%%%%%%%%%%%%%%%%%%%%%%%%%%%%
\mypar{\abbrCtables}
For a \abbrCtable $R(A_1, \ldots, A_n)$ which is stored as a relation $R'(A_1, \ldots, A_n, V_1, \ldots, V_n, \lcAttr)$ where $\lcAttr$ stores the local condition $\phi_\pdb(\tup)$ (as a string) and $V_i$ stores a variable name if $A_i = v$ for some variable $v$ and \lstinline!NULL! otherwise. The SQL implementation of the labeling schema and best guess world computation for \abbrCtables assumes the existence of a UDF \lstinline!isTautology! that implements the tautology check as described in Section~\ref{sec:create-labelings}.

\begin{lstlisting}[morekeywords={IS}]
SELECT A1, ..., An
       CASE WHEN isTautology(LC)
            THEN 1
            ELSE 0
       END AS C
FROM R
WHERE V1 IS NULL AND ... AND Vn IS NULL
\end{lstlisting}

To mark an input as a \abbrCtable the user has to specify which attributes store the $V_i$'s and local condition.

\begin{lstlisting}[morekeywords={CTABLE,LOCAL,CONDITION,VARIABLES,WITH}]
SELECT *
FROM R IS CTABLE WITH VARIABLES (V1, ..., Vn)
                      LOCAL CONDITION (lc)
\end{lstlisting}

%\input{old_implementation.tex}

%%% Local Variables:
%%% mode: latex
%%% TeX-master: "techreport"
%%% End:
}

\section{Related Work} \label{related}

%%%%%%%%%%%%%%%%%%%%%%%%%%%%%%%%%%%%%%%%
\mypar{Incomplete and probabilistic data models}
%\label{sec:repr-prob-data}
% \BG{Say more about the properties of each representation:}
% \AH{I added more information in the following paragraph considering the above suggestion from Boris.}
Uncertainty was recognized as an important problem by the database community early-on. Codd~\cite{DBLP:journals/tods/Codd79} extended the relational model with null values to represent missing information and proposed to use 3-valued logic to evaluate queries over databases with null values.
Imielinski~\cite{DBLP:journals/jacm/ImielinskiL84} introduced \abbrVtables and \abbrCtables as representations of incompleteness. C-tables are closed under full relational algebra. %In contrast to \abbrCoddTables and \abbrVtables, \abbrCtables are closed under full relational algebra.
Reiter~\cite{R86} proposed to model databases as logical theories, a model equivalent to \abbrVtables. %\BG{Check that, but should be true}
\iftechreport{Abiteboul~\cite{AG85} defined update operations over incomplete databases. }
\iftechreport{Underlying all these models is the possible world semantics. }
Probabilistic data models quantify the uncertainty in incomplete databases by assigning probabilities to individual possible worlds. \abbrTIs~\cite{suciu2011probabilistic} are a prevalent model for probabilistic data where each tuple is associated with its marginal probability and tuples are assumed to be independent.
Green et al.~\cite{DBLP:journals/debu/GreenT06} studied probabilistic versions of \abbrCtables.  % relating them to~\cite{SB06}.
\abbrVCtables generalize \abbrCtables~\cite{5447879,Yang:2015:LOA:2824032.2824055} by allowing symbolic expressions as values.
\ifnottechreport{Probabilistic query processing (PQP) has been studied for several decades (e.g., an important survey is~\cite{suciu2011probabilistic}).}

% Codd tables mark unknown values with a $NULL$ value, where each $NULL$ is interpreted as unique from every other $NULL$. \abbrVtables build onto the Codd table representation by using variables to designate unknown values, thus allowing one variable to represent the same value in different cells.   The \abbrCtables model is a V-Table that annotates each tuple with a propositional formula over a set of variables called a local condition. Intuitively, for each assignment to the variables we obtain a possible relation containing all tuples whose formulas are satisfied.  Finally, \abbrPCtables~\cite{DBLP:journals/debu/GreenT06} are an extension of \abbrCtables where an additional probability annotation for each tuple is included, allowing the data to be interpreted from a probabilistic perspective
% , e.g. confidence bounds on query output, returning the top-k tuples, etc.

%%%%%%%%%%%%%%%%%%%%%%%%%%%%%%%%%%%%%%%%
\iftechreport{
\mypar{Probabilistic Query Processing}
Probabilistic query processing (PQP) has been a field of research for several decades (e.g., an important survey is~\cite{suciu2011probabilistic}).
Computing the marginal probability of a query result tuple can be reduced to weighted model counting and, thus, is \sharpP in general~\cite{DS07b}. Most practical approaches for PQP are either limited to queries which can be answered in \ptime (so-called \textit{safe} queries) and/or compute approximate probabilities for query answers (e.g.,~\cite{OH10}).
Systems implementing PQP include Sprout~\cite{DBLP:conf/sigmod/FinkHOR11}, Trio~\cite{DBLP:conf/vldb/AgrawalBSHNSW06}, MCDB~\cite{jampani2008mcdb}, Mimir~\cite{DBLP:journals/corr/NandiYKGFLG16}, MYSTIQ~\cite{DBLP:conf/sigmod/BoulosDMMRS05}, and many others.
% With the exception of Mimir~\cite{DBLP:journals/corr/NandiYKGFLG16}, the above systems demand significant changes on the backend database engine or entirely new database kernels to deploy. Similar to Mimir, our approach is a middleware on top of existing database systems and our implementation does not require any modification to the backend system.
}

%%%%%%%%%%%%%%%%%%%%%%%%%%%%%%%%%%%%%%%%%%%%%%%%%%%%%%%%%%%%
\mypar{Certain Answers}
%
%\label{sec:cert-answ-incompl}
Many approaches for answering queries over incomplete databases employ certain answer semantics~\cite{DBLP:journals/jacm/ImielinskiL84,AK91,L16a,GL16,GL17}.  The foundational work by Lipski~\cite{L79a} defined certain answers analogously to our approach, but using minima instead of GLBs.
Computing certain answers is \conpcomplete~\cite{AK91,DBLP:journals/jacm/ImielinskiL84} (data complexity) for first order queries, even for restricted data models such as Codd-tables. This hardness result even holds for conjunctive queries over more complex uncertain data models (e.g., OR-databases~\cite{DBLP:journals/jcss/ImielinskiMV95}).
Thus, it is not surprising that approaches for approximating the set of certain answers have been proposed.
Reiter~\cite{R86} proposed a \ptime algorithm that returns a subset of the certain answers (c-sound) for positive existential queries (and a limited form of universal queries). Guagliardo and Libkin~\cite{GL17,L16a,GL16} % demonstrated that SQL's semantics for dealing with null values can produce both false negatives and false positives (it is neither c-sound nor c-complete). They propose
 propose a query semantics that preserves c-soundness for full relational algebra (first order queries) \revm{for Codd- and \abbrVtables.} Then, \cite{GL17}~defined certain and possible multiplicities for bag semantics, and presented initial thoughts on how to extend~\cite{GL16} for bag semantics. 
 \reva{Our approach works with a wider range of data models and models of uncertainty than \cite{GL16,GL17}, at the cost of being a slightly weaker approximation.  Furthermore, unlike this approach, \abbrUADBs are closed under query evaluation.}
 Sundarmurthy et al.~\cite{sundarmurthy_et_al:LIPIcs:2017:7061} introduced m-tables, which can represent not just uncertainty, but also model information about missing tuples, as well as terms c-soundness/-correctness. This approach works for both set and bag semantics.
 Consistent query answering~\cite{DBLP:series/synthesis/2011Bertossi,DBLP:conf/pods/ArenasBC99} (CQA) is computing certain answers to a query over the incomplete database defined by of all repairs for a database that violates a set of constraints. The complexity of variants of this problem has been studied extensively (e.g.,~\cite{DBLP:journals/ipl/KolaitisP12,DBLP:conf/pods/CaliLR03,DBLP:conf/pods/KoutrisW18}) and several combinations of classes of constraints and queries have been identified that permit first-order rewritings~\cite{FM05,GP17,DBLP:journals/tods/Wijsen12,DBLP:conf/pods/Wijsen10}.
 Geerts et al.~\cite{GP17} study first order under-approximations of certain answers in the context of CQA.

% \BG{REPLACED WITH ABOVE: The standard theoretical approach to evaluating queries on incomplete databases is to compute certain answers by labeling answers as certain or uncertain. Such approach may generate false positive answers (non-certain answer is labeled certain)~cite{GL16}. A more correctness guaranteed approach~cite{L16a} is to label answers as certain true, certain false, and unknown for SQL queries (FOL queries) over relations with labelled nulls (variables that are not repeated). C-sound is guaranteed by translating the query to return a subset of certain answers for positive operations in relational algebra. For negative operation like set difference, a superset query translation of certain answers can be used in subtrahend to guarantee c-sound.}
% \begin{itemize}
% \item \BG{Libkin et al. provide query translations that compute subsets of certain answers  for SQL queries (FOL queries) over relations with labelled nulls (variables that are not repeated) L16a,GL16} \BG{@Su: they have an interesting approach for modelling certain answers under attribute-level uncertainty. The~GL16 paper has some completeness experiments, we can have look whether some this can be used}
% \item \BG{Reiter~R86}
% \item Certain answers are hard~\cite{AK91}
% \item m-tables~\cite{sundarmurthy_et_al:LIPIcs:2017:7061}
% \end{itemize}
%%%%%%%%%%%%%%%%%%%%%%%%%%%%%%%%%%%%%%%%
\mypar{Annotated Databases}
%\label{sec:annotated-databases}
%
Green et al.~\cite{Green:2007:PS:1265530.1265535} introduced the semiring annotation framework that we utilize in this work. % and demonstrated how extensions of the relational data model can be modelled as $\semK$-relations. % This initial work also defined positive relational algebra ($\raPlus$) over K-relations using the addition and multiplication operations of semirings.
The connection between annotated databases, provenance, and uncertainty has been recognized early-on.
A particular type of semiring annotations, often called Lineage, has been used  for probabilistic query processing~(e.g., see~\cite{suciu2011probabilistic,WT08}).
Green et al.~\cite{Green:2007:PS:1265530.1265535} observed that set semantics incomplete databases can be expressed as $\semK$-relations by annotating each tuple with the set of worlds containing it. We define a more general type of incomplete databases based on $\semK$-relations which is defined for any l-semiring.
%
% Geerts et al.~\cite{Geerts:2010bz} used the standard construction of a monus operation for naturally ordered semiring to define the relational algebra difference operator for K-relations.
Kostylev et al.~\cite{DBLP:conf/icdt/KostylevB12} investigate how to deal with dependencies among annotations from multiple domains.
Similar to~\cite{DBLP:conf/icdt/KostylevB12}, we consider ``multi-dimensional'' annotations, % and similar to ~\cite{Geerts:2010bz} we utilize the concept of the natural order,
but for a very different purpose: to extend incomplete databases beyond set semantics. % and their under-approximations
%to $\pwK$-relations. % and to define an annotation structure that encodes both a possible world and uncertainty labeling.

 % are relations whose tuples are annotated with a unique value from a semiring. A semiring $\mathcal{K}=\{ \mathbb{K},\oplus,\otimes,0,1 \}$~\cite{Green:2007:PS:1265530.1265535} is an algebraic structure consists of a set $\mathbb{K}$ with two binary operators. A structure need to hold on associativity, commutativity, distributivity to be a semiring. K-relations can be used to unify a variety of data models including standard relational model, incomplete databases and probabilistic databases. \\
%\textbf{GProM provenance rewriter.} GProM~\cite{arab2014generic} provenance rewriter is one of the main components in GProM. The rewriter takes an relational algebra graph input containing dummy provenance computation operators and output rewritten graph that can produce identical result of the input graph while adding extra attributes to the end showing provenance information. The rewriting are based on research of querying provenance semiring annotated databases. Our uncertainty rewriter implements under similar idea based on our research of uncertainty annotated databases.

%%% Local Variables:
%%% mode: latex
%%% TeX-master: "uadb"
%%% End:

% -*- root: uadb.tex -*-

\begin{figure}
  \centering
    \includegraphics[width=0.35\textwidth, trim=0.2cm 1.6cm 0cm 1.3cm, clip]{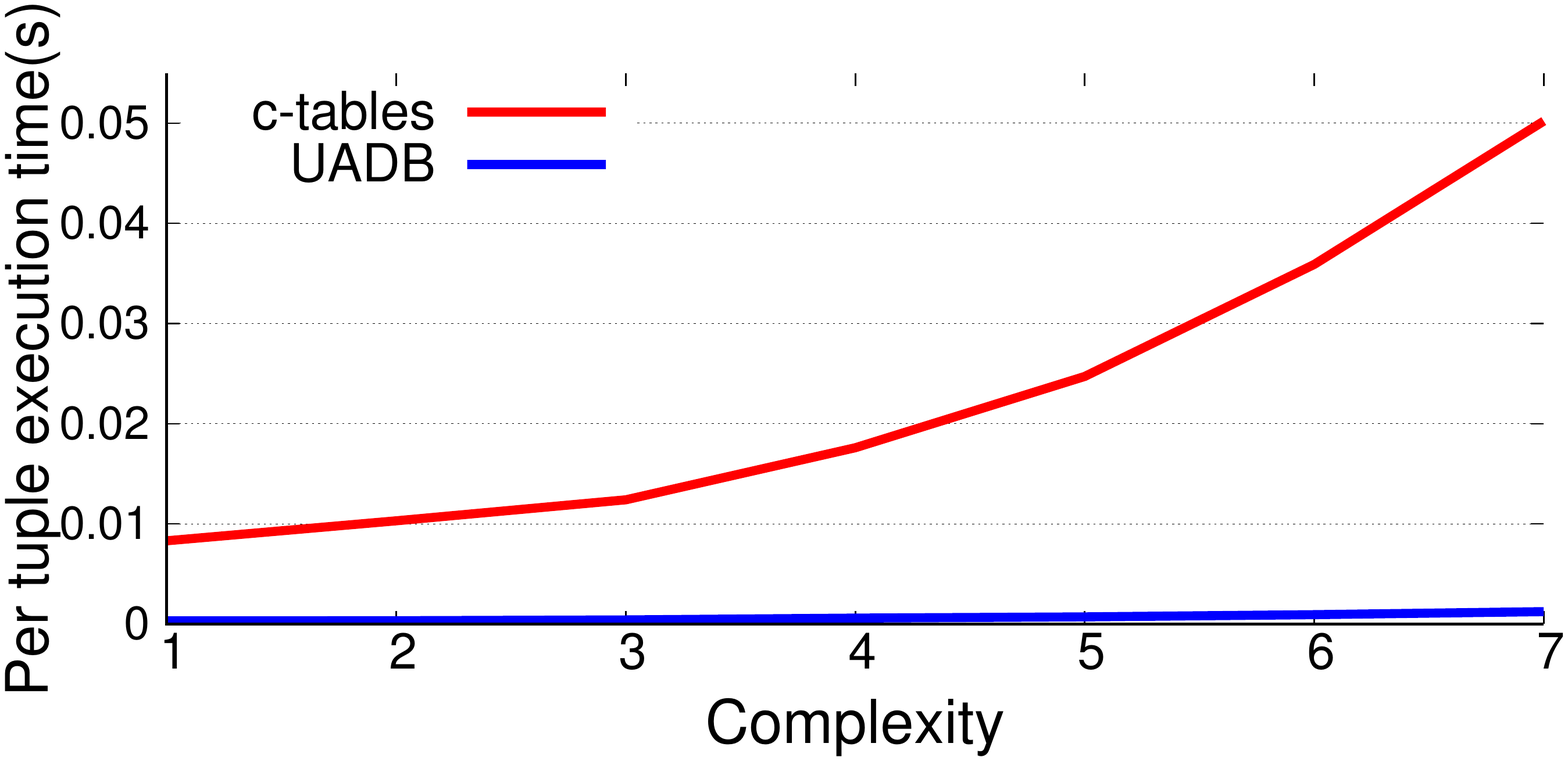}
    \vspace*{-3mm}
  \bfcaption{Certain answers over \abbrCtables}
  \label{fig:ctable}
  \trimfigurespacing
\end{figure}

\section{Experiments}
\label{sec:experiments}

We evaluate the performance of queries over \abbrUADBs implemented on a commercial DBMS\footnote{The DBMS is not identified due to license restrictions}.
We compare \textbf{\abbrUADB}s with
(1)~\textbf{Det}:~Deterministic \abbrBGQP,
(2)~\textbf{Libkin}:~An alternate under-approximation of certain answers~\cite{L16a,GL16},
(3)~\textbf{MayBMS}:~We use MayBMS to compute the full set of possible answers\footnote{Times listed for MayBMS do include only computing possible answers and not \revb{computing probabilities unless stated otherwise.}}, and
(4)~\textbf{MCDB}:~We use MCDB-style~\cite{jampani2008mcdb} database sampling (10 samples) to over-approximate the certain answers.
All experiments are run on a machine with 2$\times$6 core AMD Opteron 4238 CPUs, 128GB RAM, 4$\times$1TB 7.2K HDs (RAID 5). We report the average running time of 5 runs. We also evaluate % the impact of c-incompleteness by measuring
false negative (i.e., a misclassified certain answer) rates for \abbrUADBs and both false negative and false positive rates for other systems. Furthermore, we demonstrate that \abbrBGQP and \abbrUADBs produce answers that are of higher utility (more similar to a ground truth result) than certain answers.

%%%%%%%%%%%%%%%%%%%%%%%%%%%%%%%%%%%%%%%%%%%%%%%%%%%%%%%%%%%%
\subsection{Performance Comparison} % Using Synthetic Data}
We first use PDBench~\cite{antova2008fast}, a modified TPC-H data generator~\cite{tpch} that introduces uncertainty by generating random possible values for randomly selected cells (attributes).
The generator produces a columnar encoding \revb{optimized for MayBMS}, with % the TPC-H
tables as pairs of tuple identifiers and attribute values.
Ambiguity arises from having multiple values for the same tuple identifier.
We directly run MayBMS queries \revb{(omitting probability computations)} on these columnar tables. For MCDB, we simulate the tuple bundle query using 10 samples. We also apply Libkin by constructing a database instance with nulls from the PDBench tables and applying queries  generated by the optimized rewriting described in~\cite{GL16}. We run deterministic queries and queries generated by our approach on one possible world that is selected by randomly choosing a value for each uncertain cell. For our approach, we treat the input as an \abbrXDB and mark tuples with at least one uncertain cell as uncertain. The three PDBench queries roughly correspond to TPC-H queries Q3, Q6 and Q7.

% -*- root: uadb.tex -*-
\begin{figure*}[!ht]
  \begin{minipage}{1.0\linewidth}
\centering
\begin{subfigure}[b]{.31\textwidth}
	\centering
  \includegraphics[width=0.9\textwidth, trim=0.5cm 3cm 2cm 4cm, clip]{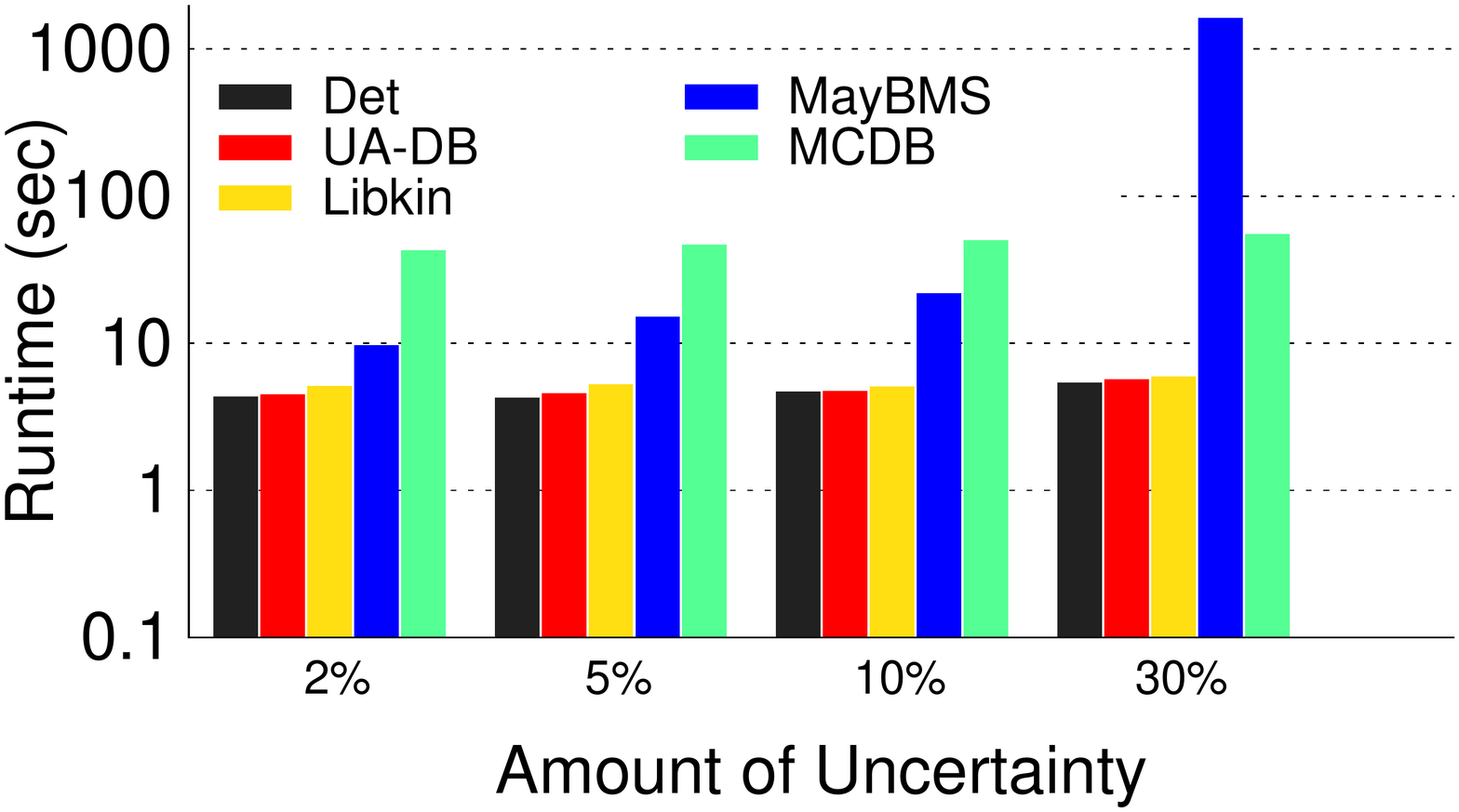}
  \bfcaption{PDBench - Q1}
  \label{fig:q1}
\end{subfigure}
\begin{subfigure}[b]{.31\textwidth}
	\centering
  \includegraphics[width=0.9\textwidth, trim=0.5cm 3cm 2cm 4cm, clip]{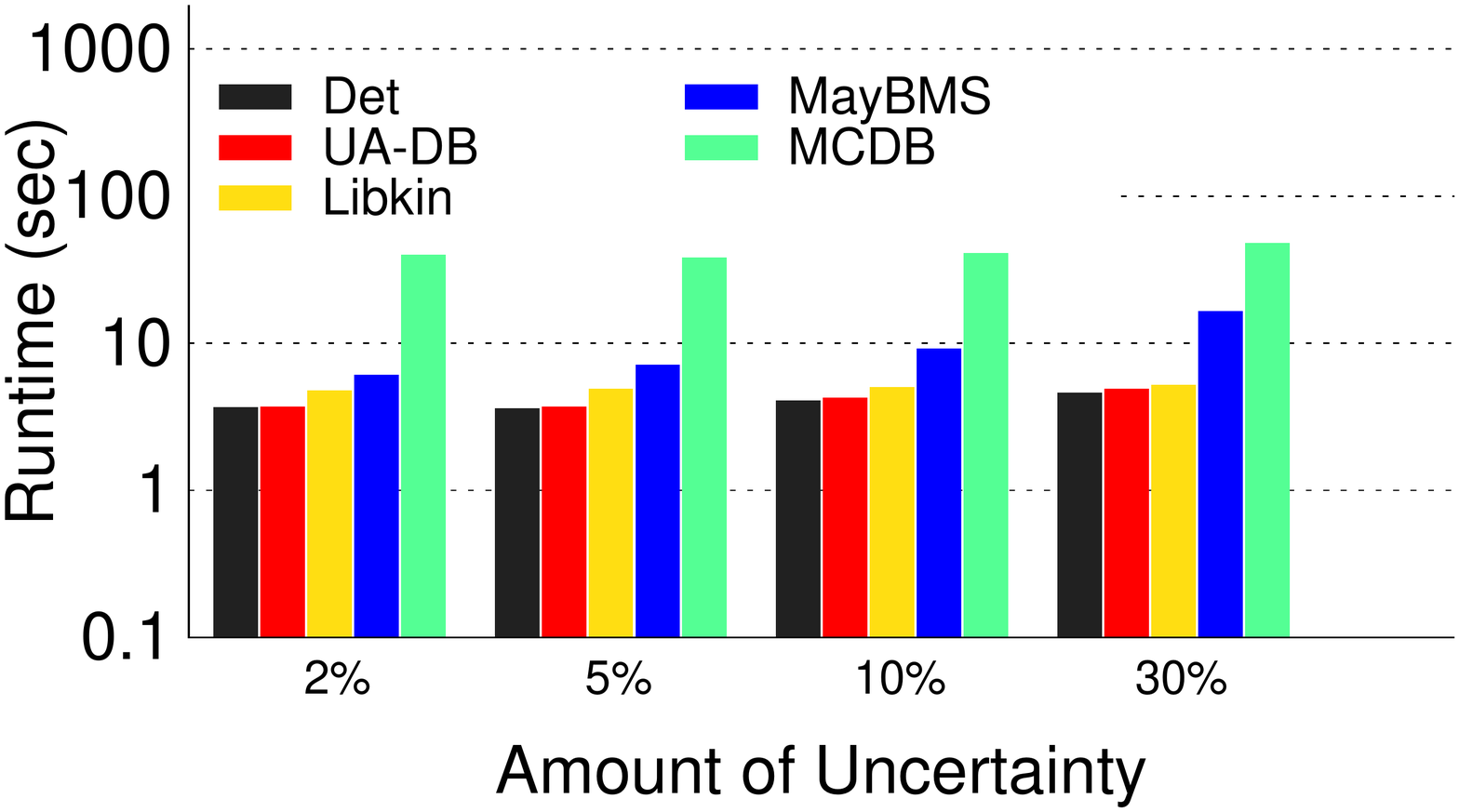}
  \bfcaption{PDBench - Q2}
  \label{fig:q2}
\end{subfigure}
\begin{subfigure}[b]{.31\textwidth}
	\centering
  \includegraphics[width=0.9\textwidth, trim=0.5cm 3cm 2cm 4cm, clip]{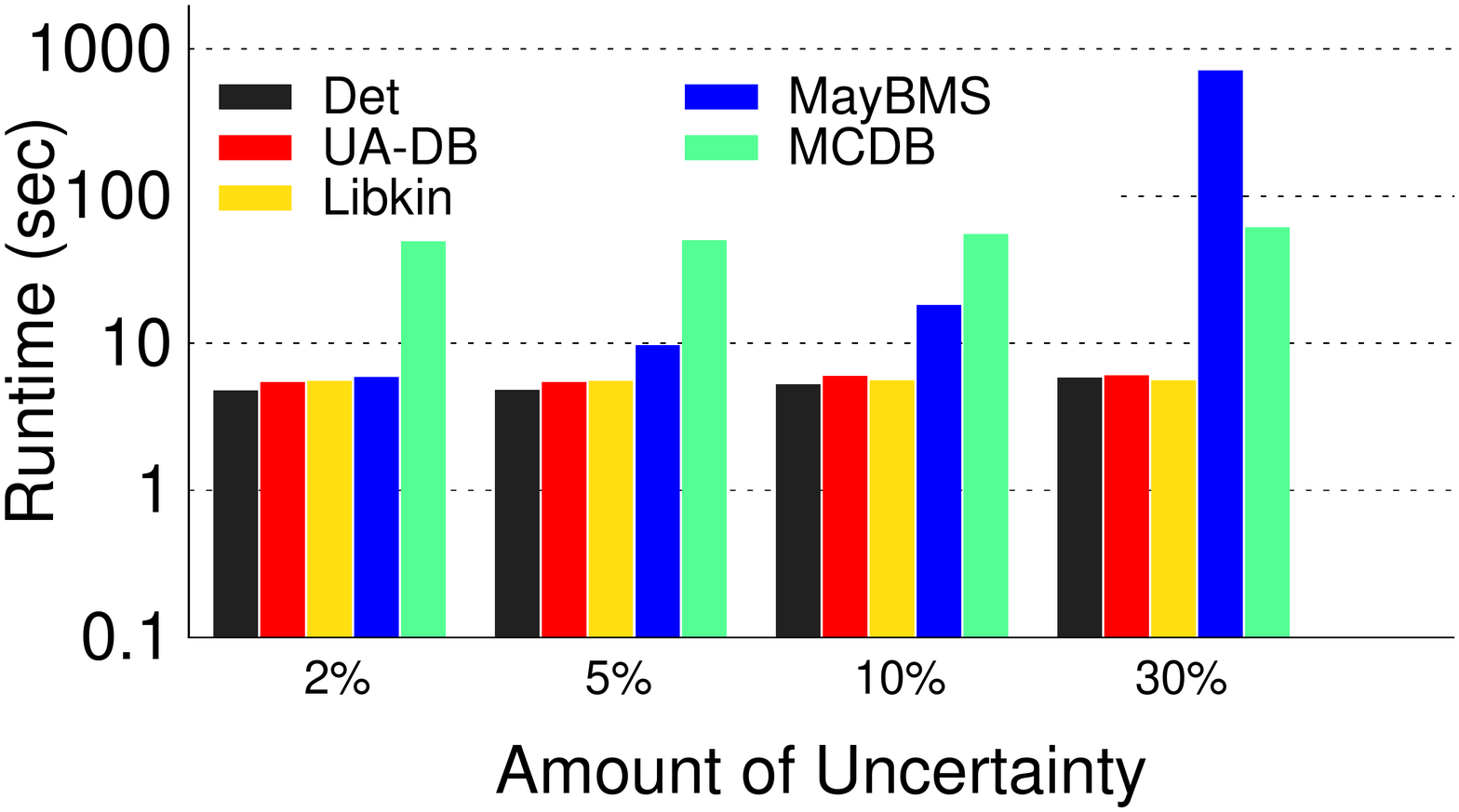}
  \bfcaption{PDBench - Q3}
  \label{fig:q3}
\end{subfigure}
\vspace*{-4mm}
\bfcaption{Performance of PDBench queries -  varying the amount of uncertainty for scale factor 1}
\label{fig:runtime_uncert}
\end{minipage}
% \end{figure*}

% \begin{figure*}\centering

\begin{minipage}{1.0\linewidth}

\begin{minipage}[b]{0.65\textwidth}\centering
{\small
  \begin{tabular}{@{}rrrrcrrr@{}} %\toprule
& \multicolumn{3}{c}{\textbf{\abbrUADB}} & \phantom{abc}& \multicolumn{3}{c}{\textbf{MayBMS}}\\
\cmidrule{2-4} \cmidrule{6-8}
& $Q1$ & $Q2$ & $Q3$ && $Q1$ & $Q2$ & $Q3$\\ \midrule
%$uncertainty$\\
$2\%$ & 14,260 & 152,583 & 9,016 && 113,966 & 210,996 & 15,108\\
$5\%$ & 34,041 & 152,432 & 8,619 && 501,114 & 327,052 & 32,438\\
$10\%$ & 61,800 & 152,389 & 8,794 && 2,392,916 & 618,199 & 97,454\\
$30\%$ & 130,581 & 152,885 & 7,994 && 134,054,635 & 3,941,554 & 4,351,782\\
\bottomrule
\end{tabular}
}
\vspace{-3mm}
\bfcaption{Query result sizes (\#rows)}
\label{table:ressize}
\end{minipage}%
\begin{minipage}[b]{0.35\textwidth}\centering
  {\small
\begin{tabular}{@{}rrrr@{}}\toprule
& $Q1$ & $Q2$ & $Q3$ \\ \midrule
$2\%$ & 0 (0\%) & 143,618 (94\%) & 7,861 (87\%)\\
$5\%$ & 1 (0\%) & 130,594 (86\%) & 6,023 (70\%)\\
$10\%$ & 4 (0\%) & 111,120 (73\%) & 3,979 (45\%)\\
$30\%$ & 1 (0\%) & 52,724 (34\%) & 586 (7\%)\\
\bottomrule
\end{tabular}
}
\vspace{-1mm}
\bfcaption{Result certain answer \%}
\label{table:resp}
\end{minipage}
\end{minipage}
% \end{figure*}

% \begin{figure*}
\begin{minipage}{1.0\linewidth}
\centering
\begin{subfigure}[b]{.31\textwidth}
	\centering
  \includegraphics[width=0.9\textwidth, trim=0.5cm 3cm 2cm 3cm, clip]{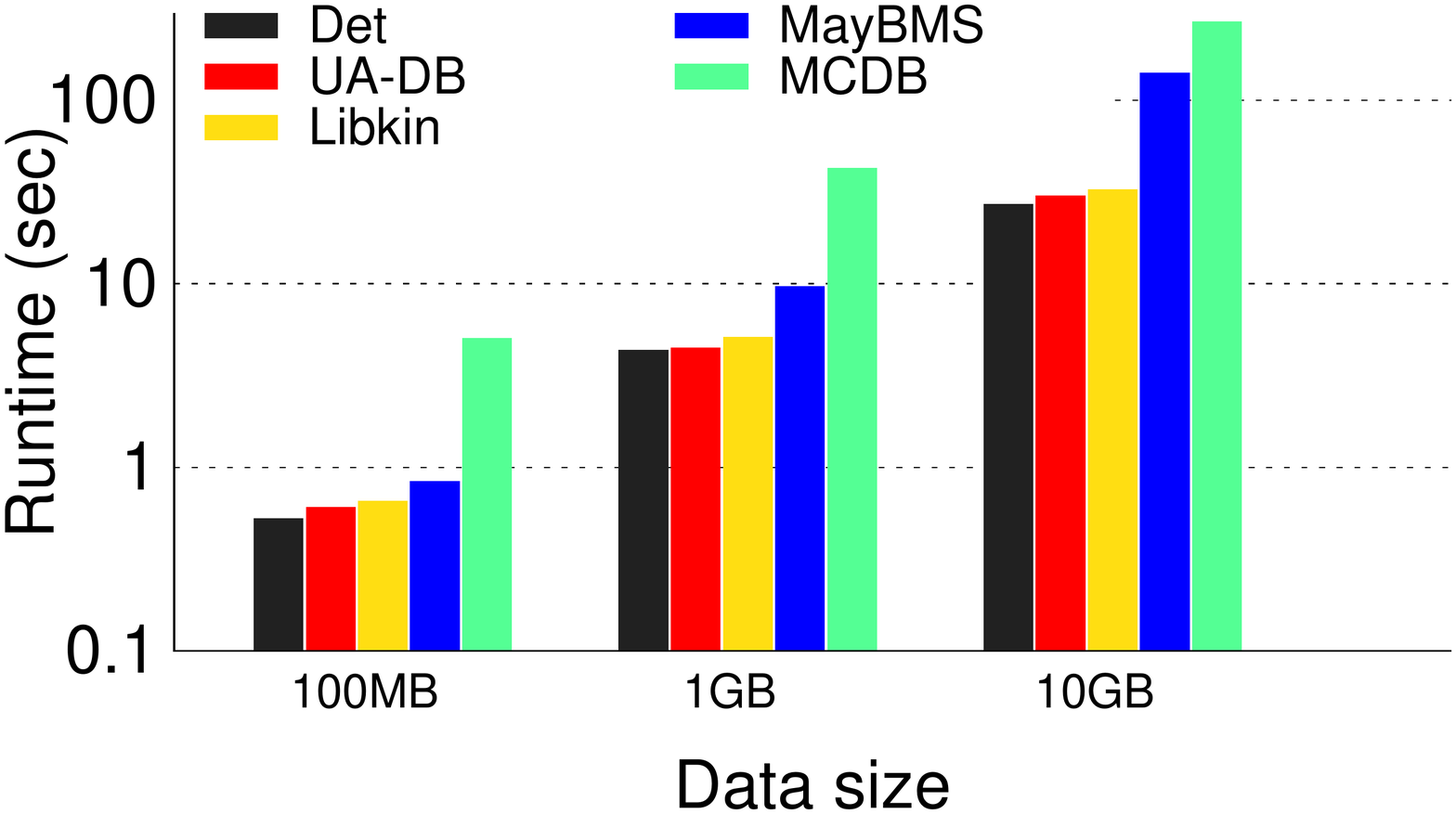}
  \bfcaption{PDBench - Q1}
  \label{fig:q1d}
\end{subfigure}
\begin{subfigure}[b]{.31\textwidth}
	\centering
  \includegraphics[width=0.9\textwidth, trim=0.5cm 3cm 2cm 3cm, clip]{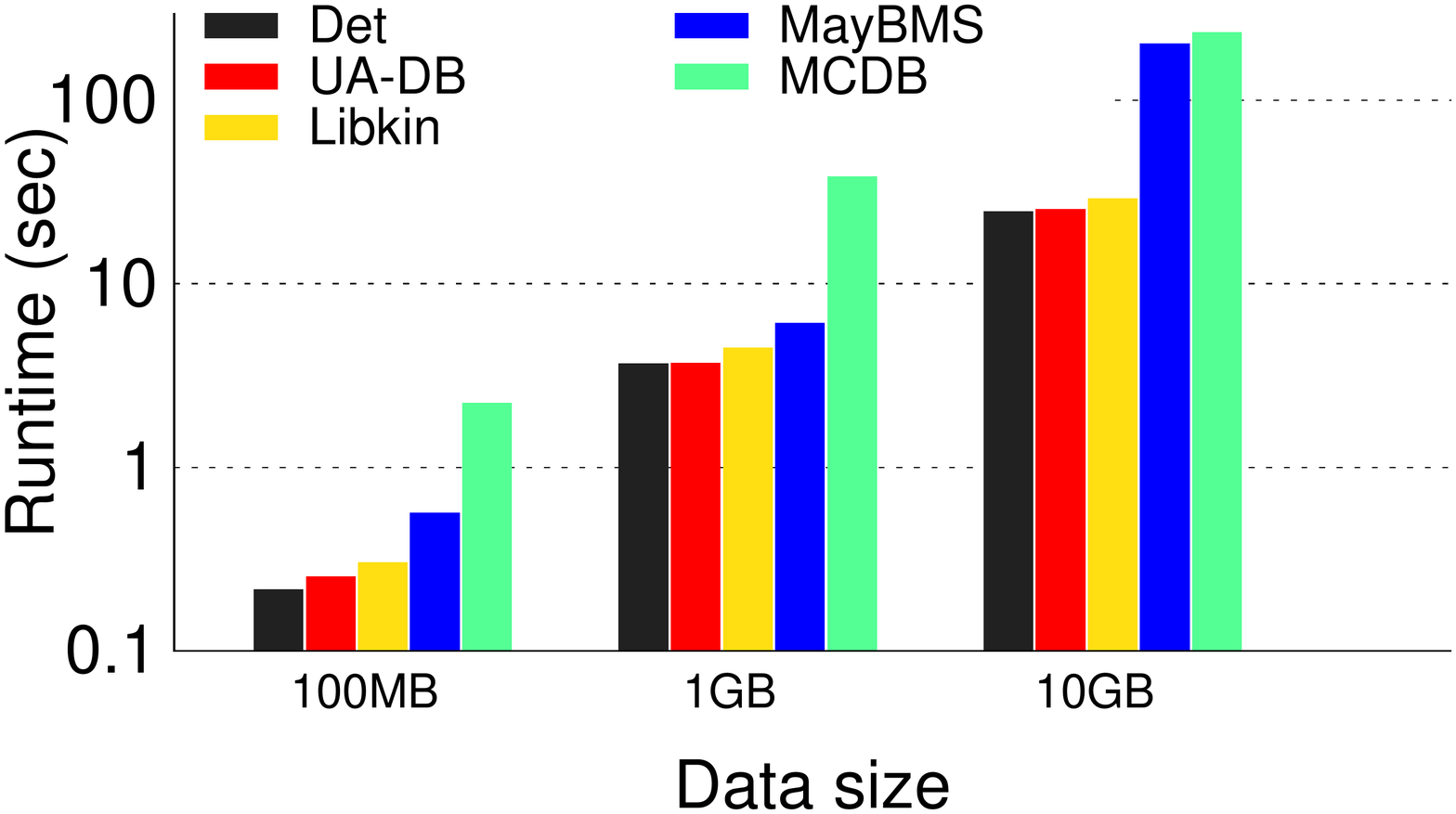}
  \bfcaption{PDBench - Q2}
  \label{fig:q2d}
\end{subfigure}
\begin{subfigure}[b]{.31\textwidth}
	\centering
  \includegraphics[width=0.9\textwidth, trim=0.5cm 3cm 2cm 3cm, clip]{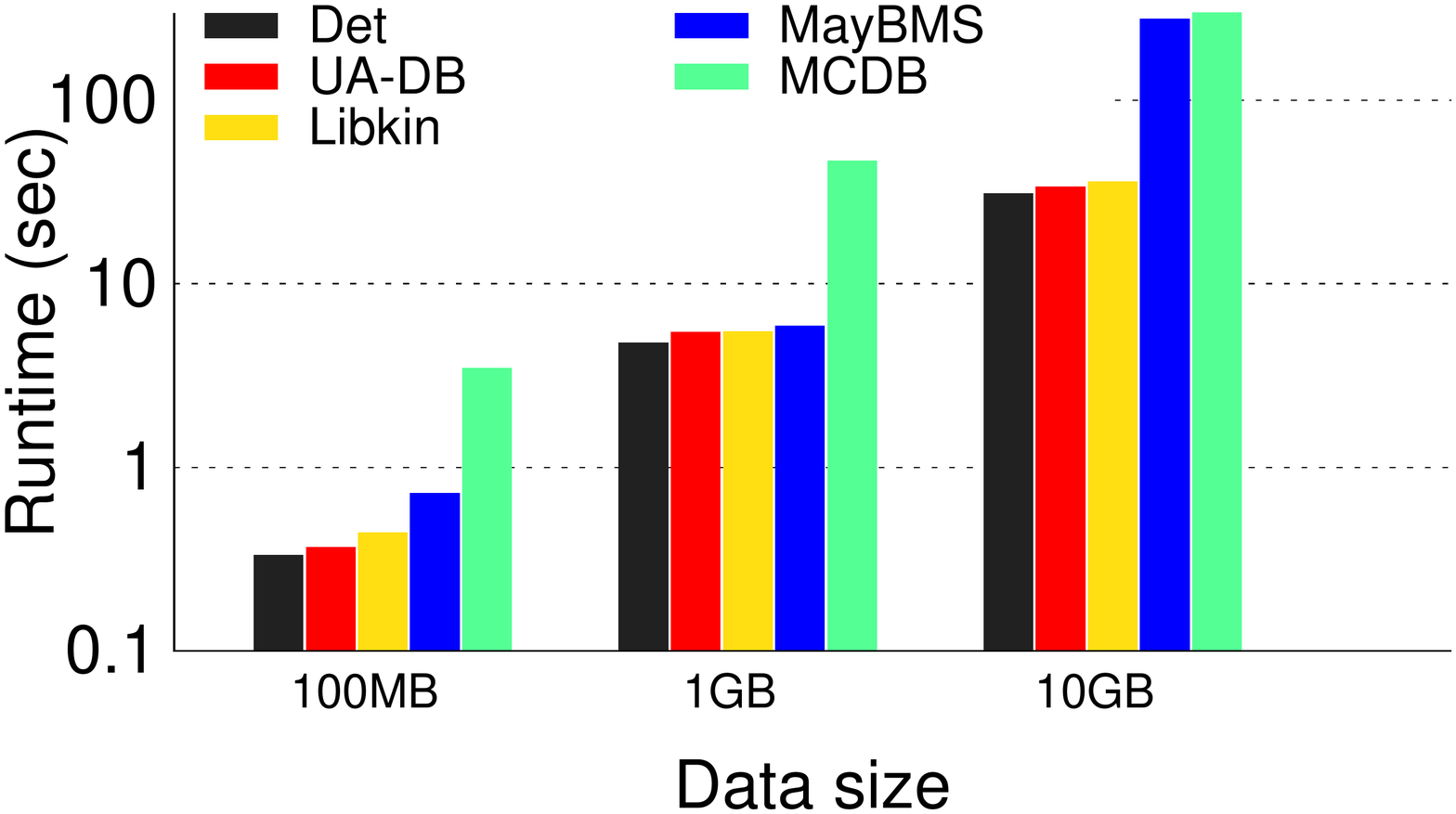}
  \bfcaption{PDBench - Q3}
  \label{fig:q3d}
\end{subfigure}
\vspace*{-3mm}
\bfcaption{Performance of PDBench queries -  varying database size for 2\% uncertainty}
\label{fig:runtime_data}
\end{minipage}
\vspace{2mm}
\input{exp_figures_2.tex}

\end{figure*}

%%% Local Variables:
%%% mode: latex
%%% TeX-master: "uadb"
%%% End:

%\mypar{Scalability on data} For experimenting scalability on data size, we use data scaling factor of \\
%%%%%%%%%%%%%%%%%%%%
\mypar{Amount of uncertainty}
Using a scale factor 1 database ($\sim$1GB of data per possible world), we evaluate scalability with respect to amount of uncertainty.
Using PDBench, we vary the percentage of uncertain cells in the range 2\%, 5\%, 10\% and 30\%.
Each uncertain cell has up to 8 possible values.
Figure~\ref{fig:runtime_uncert} shows the runtime results for the three PDBench queries.
As expected, runtimes for \abbrUADBs and Libkin are close to deterministic query processing.
The slight overheads arise from propagating uncertainty annotations and dealing with nulls, respectively. Furthermore, \abbrUADBs have to output additional tuples that belong to the \termBGW, but are not certain.
Libkin slightly outperforms \abbrUADBs{} for query Q3 at levels of uncertainty above 10\%, since the query's join only returns certain tuples and, thus, there is no overhead for dealing with nulls.
% Meanwhile, \abbrUADBs{} have a slight overhead for propagating uncertainty annotations.
For queries Q1 and Q2, \abbrUADBs{} slightly outperform Libkin as the overhead of dealing with nulls outweighs the overhead for returning a larger result and propagating uncertainty annotations.
MCDB effectively needs to evaluate queries once for each sample, and so runs more than 10 times slower than deterministic query processing.
MayBMS has a reasonable, but still noticeable overhead at lower levels of uncertainty.
As uncertainty increases, the query output size in MayBMS increases roughly cubically for Q1 and Q3, and it begins to perform \revb{several orders of magnitude (the plots use log scale)} slower than \abbrUADBs.
MayBMS performs better for the simple selection query Q2.
To better understand the performance of MayBMS, we show result sizes (number of tuples) for each query varying amounts of uncertainty in  Figure~\ref{table:ressize}. Our approach % generally
produces the same number of results as deterministic processing.
Conversely, MayBMS returns the full set of possible answers and, thus the result size increases dramatically as uncertainty increases. % has more tuples as the result under same uncertainty level comparing to deterministic result size. The result size grows significantly as uncertainty level increase.
We also show the percentage of certain answers for each query per input uncertainty level in Figure~\ref{table:resp}.
% Q1's result is highly uncertain.
The unexpected increase of result size for Q1 over \abbrUADBs is caused by a shift in the correlation between attributes \texttt{o\_orderkey} and \texttt{l\_shipdate} that affects the number of tuples passing Q1's selection condition resulting from PDBench choosing values for uncertain cells independently.
% Note the dramtic increase of result size for Q1PDBench chooses values for uncertain cells independent of each other. This affects the correlation between attributes \texttt{o\_orderkey} and \texttt{l\_shipdate} which in turn causes the number of tuples that fulfill the selection condition of Q1 to increases dramatically when increasing the amount of uncertainty.

% PDBench's uncertainty generation seems to affect the correlation among  values of attributes, because the number of tuples that fulfill the selection condition of Q1 increases dramatically when increasing the amount of uncertainty. We verified that this was due to uncertain values being chosen independently of each other which affects the correlation between attributes \texttt{o\_orderkey} and \texttt{l\_shipdate}. This in turn makes it much more likely for tuples with uncertain values to fulfill the selection condition of query Q1 than tuples generated by vanilla TPC-H.

% having a much higher chance of fulfilling the selection condition than the attribute values generated by vanilla TPC-H.

%\BG{Explanation for increate of output size} This helps to explain the huge variation of the result size of Q1 using deterministic query processing. % The result uncertainty percentage is highly depend on the query and uncertainty data we use, but generally, input with higher amounts of uncertainty  is likely to have exponentially more uncertain results (combinatorial explosion).

%%%%%%%%%%%%%%%%%%%%
\mypar{Dataset size}
To evaluate scalability, we use datasets with scale factors (SF) 0.1 (100MB), 1 (1GB) and 10 (10GB) and fix the uncertainty percentage (2\%).
The results % for PDBench queries Q1, Q2 and Q3
are shown in Figure~\ref{fig:runtime_data}. Again \abbrUADBs and Libkin exhibit performance similar to deterministic queries as well as certain answers and MCDB is again roughly 10 times slower (the sample size is 10). MayBMS's relative overhead over deterministic processing increases with data set size. For instance, for Q1 the overhead is $\sim$ 60\% for SF 0.1 and $\sim$500\% for SF 10.

%%%%%%%%%%%%%%%%%%%%%%%%%%%%%%%%%%%%%%%%
\mypar{Certain Answers over \abbrCtables}
As an example of a more complex incomplete data model, we evaluate the performance of \abbrUADBs{} against computing certain answers over \abbrCtables.
We create a synthetic table with 8 attributes. For each tuple we randomly chose half of its attributes to be variables and the other half to be floating point constants.
We construct random  queries by assembling a scaling number of randomly chosen self-joins, projections, or selections.
We count query execution time using \abbrUADBs{}.
The exact certain tuples of the \abbrCtables{} result are computed by instrumenting the query to calculate a local condition for every result tuple and running the Z3 constraint solver (\url{https://github.com/Z3Prover/z3}) over the resulting boolean expression.
An answer is certain iff its local condition is a tautology.
\iftechreport{
Each local condition's complexity of depends on how tuples are combined by the query.
Joins combine tuple conditions by conjunction, while projections and unions combine matching result tuples by disjunction.
Selection extends the local condition on rows where the selection predicate accesses a variable-valued attribute.
Each selection operator further increases complexity for each conjunction, disjunction or arithmetic operation.
}
Figure~\ref{fig:ctable} shows the average runtime per  result tuple for both c-tables and \abbrUADBs averaged over all randomly generated queries.
The x-axis is the number of operators (i.e., selection, projection or join) in the source query.
Overhead for \abbrCtables increases super-linearly in query complexity from about $27\times$ to over $40\times$.
% For the \abbrCtable, per tuple overhead is high even when the complexity is low, which may be caused by the external constraint solver's per-excution overheads. As the complexity increases, the execution time increases super-linearly from a factor of $\sim$ 27 to a factor $\sim$ 40.

%%%%%%%%%%%%%%%%%%%%%%%%%%%%%%%%%%%%%%%%%%%%%%%%%%%%%%%%%%%
\ifnottechreport{
\begin{figure}
\centering
{
% \small
\footnotesize
\begin{tabular}{rrrrrr}\toprule
Dataset & Rows & Cols & $U_{Attr}$ & $U_{Row}$\\ \midrule
Building Violations & 1.3M & 35 & 0.82\% & 12.8\%\\
Shootings in Buffalo & 2.9K & 21 & 0.24\% & 2.1\%\\
Business Licenses & 63K & 25 & 1.39\% & 14.0\%\\
Chicago Crime & 6.6M & 17 & 0.21\% & 0.9\%\\
Contracts & 94K & 13 & 1.50\% & 19.2\%\\
Food Inspections & 169K & 16 & 0.34\% & 4.6\%\\
Graffiti Removal & 985K & 15 & 0.09\% & 0.8\%\\
Building Permits & 198K & 19 & 0.42\% & 5.3\%\\
Public Library Survy & 9.2K & 99 & 1.19\% & 14.2\%\\
\bottomrule
\end{tabular}
}
\vspace*{-4mm}
\bfcaption{Real World Datasets}
\label{table:realdata}
\trimfigurespacing
\vspace*{1mm}
\end{figure}
}
%%%%%%%%%%%%%%%%%%%%%%%%%%%%%%%%%%%%%%%%%%%%%%%%%%%%%%%%%%%%

%%%%%%%%%%%%%%%%%%%%%%%%%%%%%%%%%%%%%%%%%%%%%%%%%%%%%%%%%%%%
\iftechreport{
\begin{figure*}
{\small
  \begin{tabular}{rrrrrp{10cm}}
\toprule
Dataset & Rows & Cols & $U_{Attr}$ & $U_{Row}$ & URL\\ \midrule
Building Violations & 1.3M & 35 & 0.82\% & 12.8\% & {\url{https://data.cityofchicago.org/Buildings/Building-Violations/22u3-xenr}}\\
Shootings in Buffalo & 2.9K & 21 & 0.24\% & 2.1\% & {\url{http://projects.buffalonews.com/charts/shootings/index.html}}\\
Business Licenses & 63K & 25 & 1.39\% & 14.0\% & {\url{https://data.cityofchicago.org/Community-Economic-Development/Business-Licenses-Current-Active/uupf-x98q}}\\
Chicago Crime & 6.6M & 17 & 0.21\% & 0.9\% & {\url{https://data.cityofchicago.org/Public-Safety/Crimes-2001-to-present/ijzp-q8t2}}\\
Contracts & 94K & 13 & 1.50\% & 19.2\% & {\url{https://data.cityofchicago.org/Administration-Finance/Contracts/rsxa-ify5}}\\
Food Inspections & 169K & 16 & 0.34\% & 4.6\% & {\url{https://data.cityofchicago.org/Health-Human-Services/Food-Inspections/4ijn-s7e5}}\\
Graffiti Removal & 985K & 15 & 0.09\% & 0.8\% & {\url{https://data.cityofchicago.org/Service-Requests/311-Service-Requests-Graffiti-Removal/hec5-y4x5}}\\
Building Permits & 198K & 19 & 0.42\% & 5.3\% & {\url{https://https://www.kaggle.com/aparnashastry/building-permit-applications-data/data}}\\
Public Library Survy & 9.2K & 99 & 1.19\% & 14.2\% & {\url{https://www.imls.gov/research-evaluation/data-collection/public-libraries-survey/explore-pls-data/pls-data}}\\
\bottomrule
\end{tabular}
}
\vspace*{-2mm}
\bfcaption{Real World Datasets}
\label{table:realdata}
\end{figure*}
}

\subsection{Real world datasets}
% The false negative rate for the synthetic dataset is quite high, because errors (uncertain attributes) are independent of each other resulting a high uncertain row ratio which is not realistic.
We use multiple real world datasets \ifnottechreport{(\Cref{sec:datasets})}
% \cite{buffalonews2017shootings} \cite{chicagocrime} \cite{plsdata} \cite{plsdata} \cite{graffiti} \cite{fcontracts} \cite{businessl} \cite{bviol} \cite{foodins}
from a wide variety of domains to evaluate how our approach performs for real world data.
% We introduce a worst case scenario involving test cases that creates false negatives.
We use SparkML to impute missing values in the datasets, treating alternative imputations as a source of uncertainty.
The resulting dataset, represented as an \abbrXDB, was converted to a \abbrUADB using $\doTULX$ (\ref{sec:create-labelings}), which marks all tuples with at least one uncertain attribute as uncertain.
Figure~\ref{table:realdata} shows basic statistics for the cleaned datasets\ifnottechreport{:}\iftechreport{ and URLs for the original datasets: } the \#rows, \#attributes, the percentage of attribute values that are uncertain  ($U_{attr}$), and the percentage of rows marked as uncertain by our c-complete labeling scheme ($U_{row}$). % uncertainty) for the cleaned datasets.

% techniques that introduce attribute-level uncertainty. Since currently we only support tuple-level uncertainty, we model attribute-level uncertainty as tuple uncertainty by creating tuples which are correlated (exclusive) across possible worlds. Our labeling schemes mark such tuples as uncertain ($\sim$2\% of tuples were marked as uncertain).
% The data size varies from 2K to over 6M rows and 5 to 99 columns with nulls and other data quality problems as the source of uncertainty.

\begin{figure}
%  \vspace*{-3mm}
{
%\small
\footnotesize
  \begin{tabular}{rrrrrr}\toprule
& $Q1$ & $Q2$ & $Q3$ & $Q4$ & $Q5$\\ \midrule
Overhead & 2.28\% & 1.81\% & 1.32\% & 2.88\% & 3.51\%\\
Error Rate & 0.55\% & 0.37\% & 0\% & 0.92\% & 0.29\%\\
\bottomrule
\end{tabular}
}
\vspace*{-2mm}
\bfcaption{Real Query Results}
\label{table:realquery}
\trimfigurespacing
\end{figure}

\begin{figure*}
  \begin{minipage}{1.0\linewidth}
\centering
\begin{subfigure}[b]{.31\textwidth}
  \centering
  \includegraphics[width=\textwidth, trim=0.5cm 1.8cm 0cm 4cm, clip]{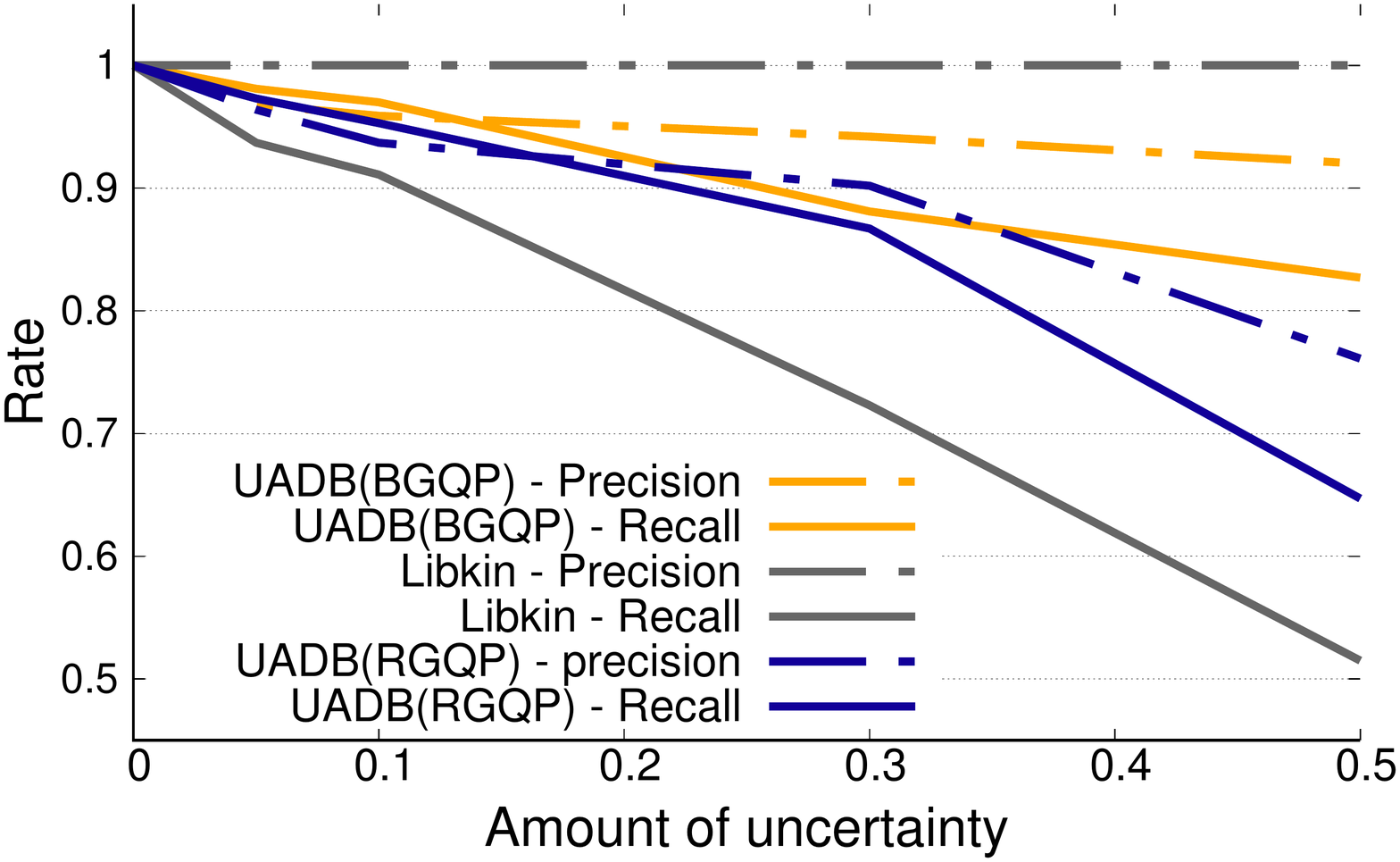}
  \bfcaption{Income Survey}
  \label{fig:u1}
\end{subfigure}
\begin{subfigure}[b]{.31\textwidth}
  \centering
  \includegraphics[width=\textwidth, trim=0.5cm 1.8cm 0cm 4cm, clip]{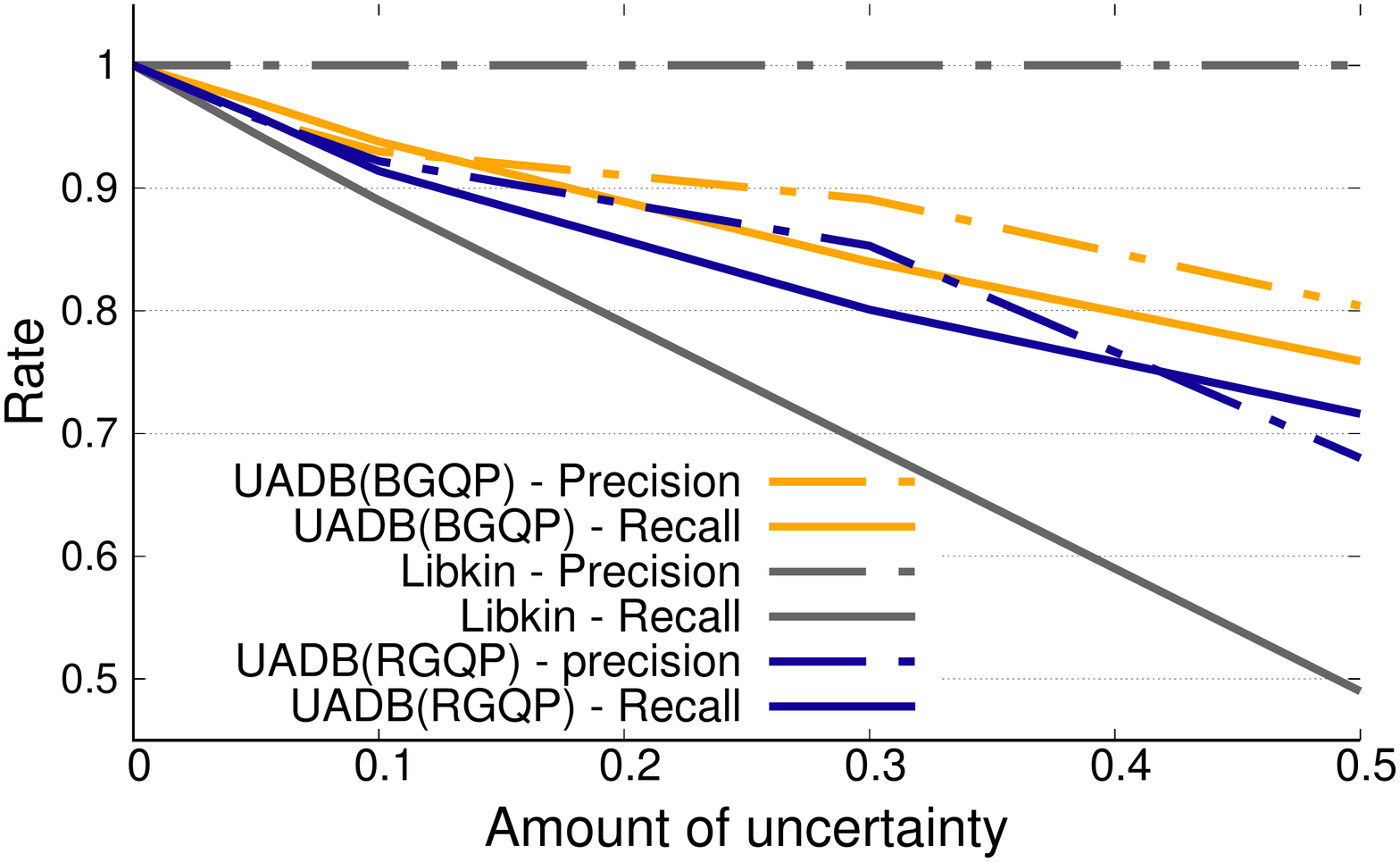}
  \bfcaption{Buffalo News}
  \label{fig:u29}
\end{subfigure}
\begin{subfigure}[b]{.31\textwidth}
  \centering
  \includegraphics[width=\textwidth, trim=0.5cm 1.8cm 0cm 4cm, clip]{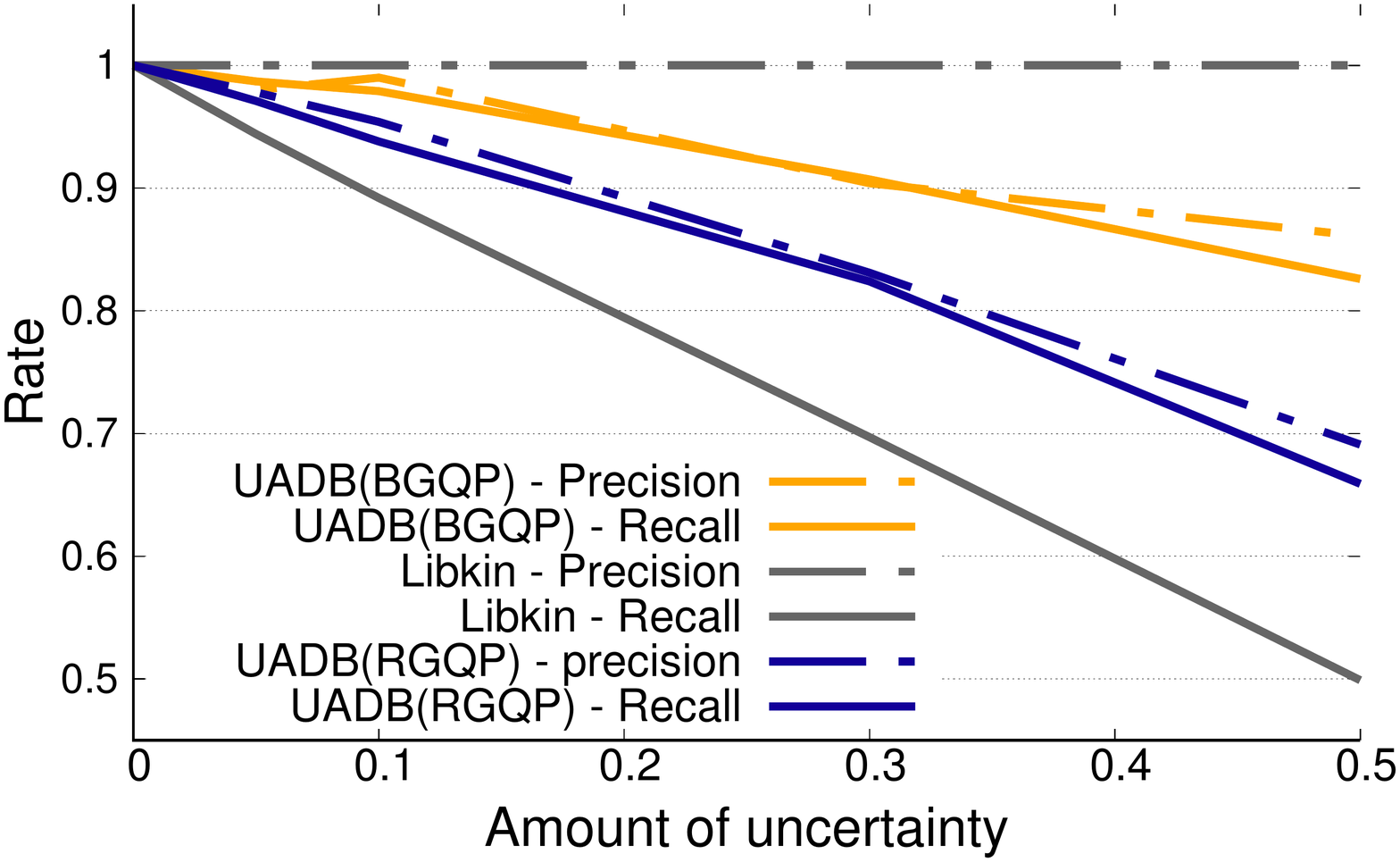}
  \bfcaption{Business License}
  \label{fig:u3}
\end{subfigure}
\vspace*{-4mm}
\bfcaption{Utility -  varying the amount of uncertainty}
\label{fig:utility}
\end{minipage}
\trimfigurespacing
\end{figure*}

\begin{figure}
  \resizebox{1\linewidth}{!}{
  \begin{minipage}{1.25\linewidth}
    {\footnotesize
\renewcommand{\arraystretch}{0.9}
  \begin{tabular}{r|rrrrrr}\toprule
& & \textbf{UADB} & \textbf{MB-02} & \textbf{MB-05} & \textbf{MB-10} & \textbf{MB-20} \\ \midrule
\multirow{2}{*}{$Q_{P1}$} & time (ms) & 3.1 & 4.0 (4.1) & 22.7 (22.3) & 308.5 (305.6) & 4.8k (4.7k) \\
					& error & 0\% & 0\% (0\%) & 0\% (0\%) & 0\% (0\%) & 0\% (0\%)\\
\midrule
\multirow{2}{*}{$Q_{P2}$} & time (ms) & 4.4 & 6.8 (6.8) & 28.4 (28.5) & 374.5 (367.0) & 8.8k (7.0k) \\
					& error & 1.6\% & 0\% (0\%) & 0\%( 0\%) & 0\% (0.5\%) & 0.5\% (1.1\%) \\
% \midrule
% \multirow{2}{*}{Q2*} &  time (ms) & - & 6.8 & 28.5 & 367.0 & 7,007.8 \\
% 						 & error & - & 0\% & 0\% & 0.5\% & 1.1\% \\
\midrule
%\multirow{2}{*}{$Q_{P3}$} & time (ms) & 7.6 & 54.0 (20.3) &  17,065.0 (10,864.1) & 289,760.9 (118,574.9) & $\sim$3 (1)h \\
    \multirow{2}{*}{$Q_{P3}$} & time (ms) & 7.6 & 54.0 (20.3) &  17.0k (10.8k) & 289.7k (118.6k) & 3.5m (1.1m) \\
					& error & 3.0\% & 0\% (0.1\%) & 0.1\% (0.1\%) & 0.2\% (0.3\%) & 0.6\% (1.1\%)\\
% \midrule
% \multirow{2}{*}{Q3*} & time (ms) & - & 20.3 & 10,864.1 & 118,574.9 & $\sim$1 hour \\
% 					& error & - & 0.1\% & 0.1\% & 0.3\% & 1.1\% \\

\bottomrule
\end{tabular}
}
\end{minipage}
}
\vspace*{-4mm}
\bfcaption{Probabilistic database}
\label{table:probdatabase}
\trimfigurespacing
\end{figure}

\mypar{Incompleteness}
% We introduce a worst case scenario involving test cases that creates false negatives. We use Mimir~\cite{Yang:2015:LOA:2824032.2824055} to clean the datasets using data cleaning techniques that introduce attribute-level uncertainty. Since currently we only support tuple-level uncertainty, we model attribute-level uncertainty as tuple uncertainty by creating tuples which are correlated (exclusive) across possible worlds. Our labeling schemes mark such tuples as uncertain ($\sim$2\% of tuples were marked as uncertain).
To measure the false negative rate (fraction of answers that are misclassified as uncertain) of our approach, we use queries that project on a randomly chosen set of attributes.
The rationale for this is that based on Theorem~\ref{theo:x-db-c-completeness}, projecting an uncertain tuple onto a subset of its certain attributes (no x-key) causes the tuple to be a certain answer.
This is the primary situation in which \abbrUADBs mis-classify results, so this experiment represents a worst case scenario for \abbrUADBs.
% A projection over certain attributes of those tuples produces a misclassified certain result.
We evaluate queries which project on a randomly chosen set of attributes and measure the false negative rate (\textit{FNR}). % We increment number of attributes by two on each round until projection on all attributes which would result a false negative rate of 0.
Figure~\ref{fig:incomp_viol} to~ \ref{fig:incomp_pls} show the distribution of the FNR (min, 25-percentile, median, 75-percentile, max) for queries with a fixed number of projection attributes. % The attributes were chosen from all attributes randomly.
As expected, the FNR decreases as the number of projection attributes grows, but % However, in contrast to the result over the synthetic dataset, the false negative rate
is low in general (less than 20\% in the worst case for the worst case dataset). For most datasets, the median FNR is below 5\% when at least half of the attributes are involved in the projection.
Note that selection and join do not produce any ``new'' false negative results (see proof of Theorem~\ref{theo:x-db-c-completeness}). % \iftechreport{)}\ifnottechreport{ in~\cite{FH18})}.
This shows that % while in the worst case the false positive rate can be quite high,
for real world datasets with correlated errors, the FNR is typically low. % as we project on more attributes. When half of the attributes are projected, the mean false negative rate lies around 0.2 percent.

\iftechreport{\subsection{Real Queries}}
\ifnottechreport{\mypar{Real Queries}}
We next evaluate the effectiveness of our approach on five queries over the real world datasets\ifnottechreport{ (we present the SQL code and descriptions of these queries in~\cite{FH18}).}
\iftechreport{(the SQL code and descriptions of these queries are shown below).  }
Most of our real world datasets are from open data portals that associate analyses (e.g., visualizations) with datasets.
Test queries are reverse engineered from these analyses.
We measure the performance overhead and false negative rate of \abbrUADBs. Performance overhead is measured as the slowdown relative to deterministic query processing. % Result incompleteness is measured as error(false negative) rate comparing with the ground truth.
As Figure~\ref{table:realquery} shows, our approach introduces a slight (less than $4\%$) overhead for these queries.
The worst case ($4\%$) is Q5, which involves a join operator.
All other queries, which contain only selections and projections have under $3\%$ overhead.
In each case, we saw a $1\%$ false negative rate or lower.
Notably, Q3 returns no misclassified results due to its small result size.
\BG{Can we do Libkin for these?}

\mypar{Probabilistic databases}
\reva{
We next compare the performance and accuracy of \abbrUADBs against MayBMS.
For this experiment, we use a BI-DB (an x-DB with probabilities), varying the number of alternatives for each block and use three queries $Q_{P1}$, $Q_{P2}$ and $Q_{P3}$ of varying complexity described in~\cite{FH18}.
%Note that the number of possible worlds grows polynomially in this value.
For MayBMS, we treat tuples with probability $p\geq1$ as certain. MayBMS may report prob. $>1$ due to rounding/approximation errors. % \footnote{MayBMS may report prob. $>1$ due to rounding/approximation errors.}.
Figure~\ref{table:probdatabase} shows both runtime and error rate for both systems, with 2, 5, 10, or 20 alternatives. For MayBMS we show the result for exact probability computation and for approximation using the scheme from~\cite{OH10} with an error bound of 0.3 (shown in parentheses).
% default in $Q_{P1}$, $Q_{P2}$ and $Q_{P3}$. For $Q_{P2}*$ and $Q_{P3}*$ we run $Q_{P2}$ and $Q_{P3}$ on MayBMS using approximate probability computation with a threshold of 0.3 as the approximation error.
Note that query processing in a \abbrUADB is independent of the number of possible worlds. Only a single alternative is used for each block.
We observe that MayBMS's results include both false positives and false negatives.
Because results are computed by summing floating point numbers, even MayBMS' exact probability computations
exhibits a small amount of rounding error that is more noticeable for larger number of alternatives (e.g., MB-20).
% have a small amount of rounding error.
% This floating point problem is exacerbated as the number of alternatives grows, as seen in the 10- and 20-alternative trials.
Although approximating probabilities can improve performance especially for complex queries, as the number of possible alternatives increases, MayBMS is still orders of magnitude slower than \abbrUADBs. $Q_{P3}$ includes a self-join which further slows MayBMS down due to the increase in possible worlds and expression complexity.
}

%%%%%%%%%%%%%%%%%%%%%%%%%%%%%%%%%%%%%%%%
\begin{figure}[t]
  \centering
  \includegraphics[width=0.95\linewidth, trim=0.5cm 6cm 0cm 5cm]{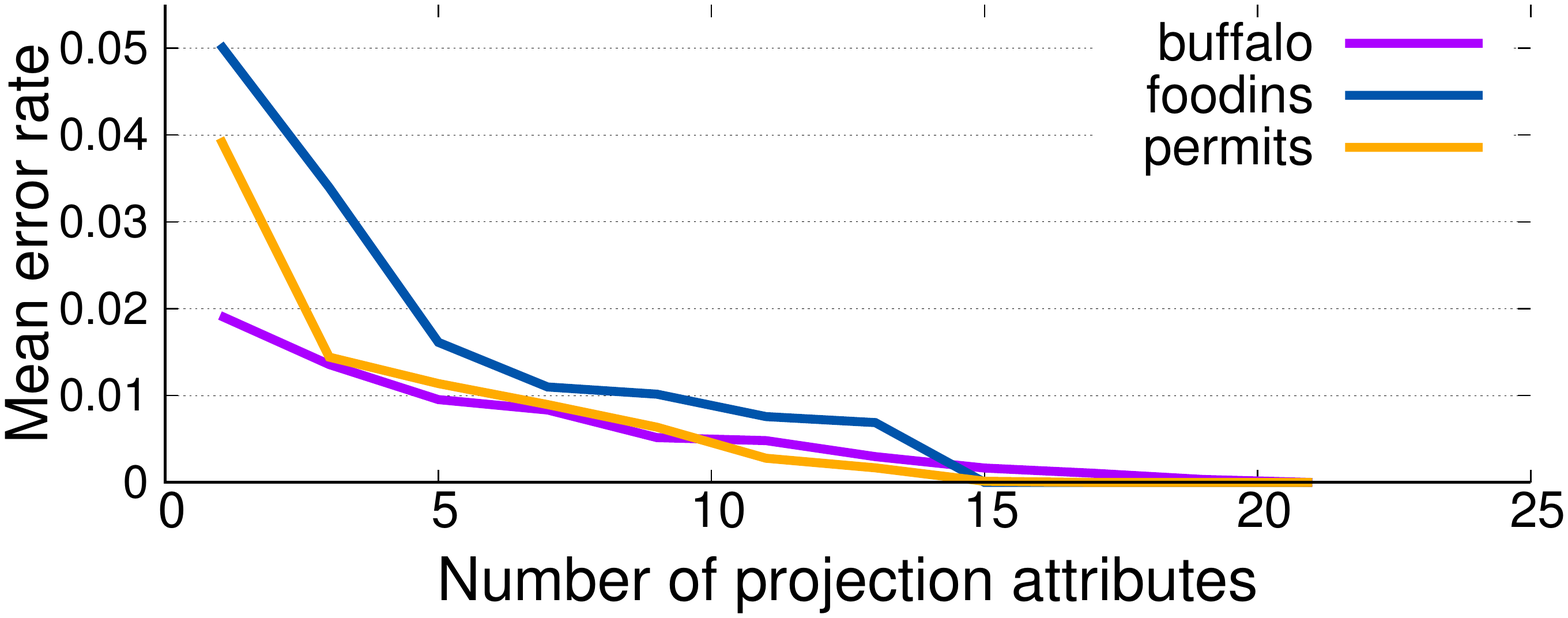}
  \caption{\revm{Bag semantics - mislabelings}}
  \label{fig:bag-semantics---misc}
  \trimfigurespacing
\end{figure}
%%%%%%%%%%%%%%%%%%%%%%%%%%%%%%%%%%%%%%%%
%%%%%%%%%%%%%%%%%%%%%%%%%%%%%%%%%%%%%%%%
\begin{figure}[t]
  \centering
  \includegraphics[width=1\linewidth,trim={0.5cm 5cm 0cm 3.5cm}]{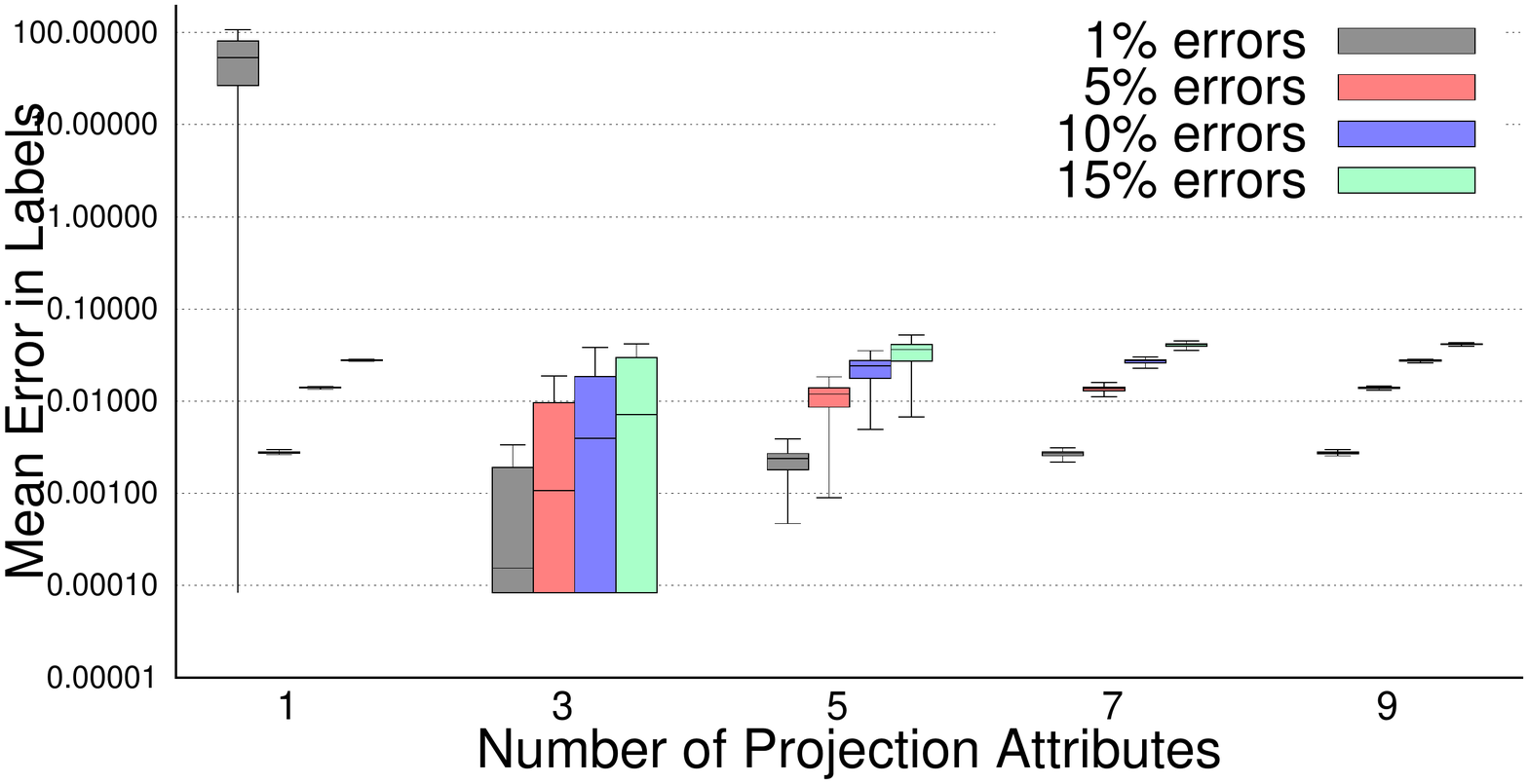}
  \caption{\revm{Access control semiring - mislabelings}}
  \label{fig:access-control-misclassified}
  \trimfigurespacing
\end{figure}
%%%%%%%%%%%%%%%%%%%%%%%%%%%%%%%%%%%%%%%%
%%%%%%%%%%%%%%%%%%%%%%%%%%%%%%%%%%%%%%%%
\mypar{Beyond Set Semantics} \revm{
  In this experiment we evaluate the FNR of our approach using bag semantics (semiring $\semN$) and the access control semiring $\semAC$~\cite{Green:2007:PS:1265530.1265535}.
  For the bag semantics experiment we evaluate projections under bag semantics over some of the real world datasets from Figure~\ref{table:realdata}. The results for this experiment are shown in
Figure~\ref{fig:bag-semantics---misc}. Observe that the FNR is similar to the set semantics case.
The access control semiring  annotates each tuple with an access control level (one of \textbf{0} - ``nobody can access the data', \textbf{T} is ``top secret'', \textbf{S} is ``secret'', \textbf{C} is ``confidential'', and \textbf{P} is ``public'') to determine what clearance-level is necessary to view the tuple. Addition (multiplication) is max (min) according to the following order over the elements $\mathbf{0} < \mathbf{T} < \mathbf{S} < \mathbf{C} < \mathbf{P}$.
For this experiment, we emulate a scenario where private information in a dataset is heuristically detected and secured with an $\semAC$ annotation.
%We start with five real world datasets and
Using 5 real world datasets from Figure~\ref{table:realdata}, we randomly assigned access control labels to each tuple in the dataset and then created multiple labelings with 1\%, 2\%, 5\%,
and 10\% of misclassified tuples. We evaluated random projection queries over these datasets and measured the amount of misclassified query results weighted by the distance between the certain annotation and the labeling, e.g., the distance of $\mathbf{C}$ and $\mathbf{T}$ is $\frac{2}{5} = 0.4$. In Figure~\ref{fig:access-control-misclassified}, we vary the number of projection attributes and show the distribution of the amount of misclassified query results over 9 randomly selected projection queries for 5 datasets. The FNR increases when the input error rate is increased, but is quite low in most cases.
}

%%%%%%%%%%%%%%%%%%%%%%%%%%%%%%%%%%%%%%%%%%%%%%%%%%%%%%%%%%%%
\iftechreport{
\subsection{Query Descriptions}
  %%%%%%%%%%%%%%%%%%%%%%%%%%%%%%%%%%%%%%%%
\mypar{Q1}
This query is expressed over the Chicago crime dataset.
The query returns all crime ids and case numbers for all thefts, domestic batteries, and criminal damages. Here, attribute \texttt{IUCR} (Illinois Uniform Crime Reporting code) is a system for specifying crime types.

\begin{lstlisting}
SELECT id, case_number,
       CASE IUCR
         WHEN 0820 then 'Theft'
         WHEN 0486 then 'Domestic Battery'
         WHEN 1320 then 'Criminal Damage'
       END AS crime_type
FROM Q
WHERE IUCR=0820 OR IUCR=0486 OR IUCR=1320
\end{lstlisting}

%%%%%%%%%%%%%%%%%%%%%%%%%%%%%%%%%%%%%%%%
\mypar{Q2}
Find all crime ids, case numbers, longitudes and latitudes of crimes within a retangular area containing Chicago WaterTower.

\begin{lstlisting}
SELECT id, case_number, Longitude, Latitude
FROM crime
WHERE Longitude BETWEEN -87.674 AND -87.619
      AND Latitude BETWEEN 41.892 AND 41.903
\end{lstlisting}

%%%%%%%%%%%%%%%%%%%%%%%%%%%%%%%%%%%%%%%%
\mypar{Q3}
This is a query over the graffiti dataset. Q3 returns all street addresses and zip codes for graffiti removal requests that are currently open.

\begin{lstlisting}
SELECT Street_Address, ZIP_Code, status
FROM graffiti
WHERE status='Open'
\end{lstlisting}

%%%%%%%%%%%%%%%%%%%%%%%%%%%%%%%%%%%%%%%%
\mypar{Q4}
Find all dates, addresses and zip codes of food inspections of restaurants that passed, but were identified as ``high risk''. \\

\begin{lstlisting}
SELECT Inspection_Date, address, zip
FROM foodinspections
WHERE results = 'Pass w/ Conditions'
      AND risk = 'Risk 1 (High)'
\end{lstlisting}

%%%%%%%%%%%%%%%%%%%%%%%%%%%%%%%%%%%%%%%%
\mypar{Q5}
For each crime id, case numbers, and IUCRs of crimes, find all status, service request numbers and community areas from graffiti removal requests where both take place in district 8 and the graffiti removal request's location is within 100 coordinate units of the crime's location.\\

\begin{lstlisting}
SELECT c.ID,
       c.Case_Number,
       c.IUCR,
       g.status,
       g.Service_Request_Number,
       g.Community_Area
FROM
    (SELECT * FROM graffiti
     WHERE police_district = 8) g,
    (SELECT * FROM crime
     WHERE district = '008') c
WHERE c.X_Coordinate < g.X_Coordinate + 100
  AND c.X_Coordinate > g.X_Coordinate - 100
  AND c.Y_Coordinate < g.Y Coordinate + 100
  AND c.Y_Coordinate > g.Y_Coordinate - 100
\end{lstlisting}

\mypar{MayBMS-$Q_{P1}$}
Find probability for a randomly chosen tuple.\\
%conf('A', 0.3)
\begin{lstlisting}
SELECT conf()
FROM buffalo
WHERE index=1;
\end{lstlisting}

\mypar{MayBMS-$Q_{P2}$}
Find probability for shooting in each district for a random range of incidence.\\
%'A', 0.3
\begin{lstlisting}
SELECT *
FROM
	(SELECT "District_shooting",
			index,
			conf()
	 FROM bp20
	 GROUP BY "District_shooting",index) x
WHERE index<2000
  AND index>650
  AND "District_shooting"='BD';
\end{lstlisting}

\mypar{MayBMS-$Q_{P3}$}
Find probabilities for all incidences that happened in the same district with same type of shooting for a random incident.
% aconf(0.1,0.1)
\begin{lstlisting}
SELECT xind, yind,p
FROM
	(SELECT y.index AS yind,
			x.index AS xind,
			x."District_shooting" AS xds,
			y."District_shooting" AS yds,
			x."Type_shooting" AS xts,
			y."Type_shooting" AS yts,
			conf() AS p
	 FROM bp20 y, bp20 x
	 GROUP BY y.index,
	 		  x.index,
			  y."District_shooting",
			  x."District_shooting",
			  x."Type_shooting",
			  y."Type_shooting") z
WHERE xds=yds
  AND xts=yts
  AND xind=692;
\end{lstlisting}

%%% Local Variables:
%%% mode: latex
%%% TeX-master: "techreport"
%%% End:

}
%%%%%%%%%%%%%%%%%%%%%%%%%%%%%%%%%%%%%%%%%%%%%%%%%%%%%%%%%%%%

\subsection{Utility of Query Answers}\label{sec:exp-utility}
\reva{We claim that \abbrBGQP and, thus also \abbrUADBs, have better utility than certain answers, as additional, useful possible answers are included in the result.
  The next experiment supports this claim quantitatively by contrasting the under-approximation of Libkin with \abbrUADBs{} evaluating two methods for extracting a best-guess world.}
To start, we create an incomplete database for which we have the ground truth (i.e., a ``correct'' possible world).
This world (denoted as $D_{ground}$) is created by processing a source dataset to remove all rows with nulls.
We next use $D_{ground}$ to create an incomplete database $\pdb$ by replacing a random set of attribute values with nulls, varying the fraction of attributes replaced from 0\% (deterministic input), to 50\%.
\reva{Then, we derive a best guess world $D_{clean}$ from $\pdb$ by either using a standard missing value imputation algorithm (we refer to this method as \textbf{BGQP}) or randomly pick a replacement value (random-guess query processing or \textbf{RGQP}).}
We evaluate queries over $\pdb$ and $D_{clean}$ using Libkin and \abbrUADBs{} respectively, and compare the result with the ground truth $D_{ground}$.
\Cref{fig:utility} shows both precision (fraction of results in $D_{ground}$) and recall (fraction of $D_{ground}$ in the results) as we vary the level of uncertainty.
Libkin's method always under-approximates, guaranteeing 100\% precision.
However, recall is much lower than for \abbrUADBs{} and drops rapidly when the amount of uncertainty is increased.
In contrast, the precision and recall achieved by \abbrUADBs{} remains between 80-90\% \reva{for BGQP}, even when half of all attribute values are uncertain.
This confirms our conjecture that certain answers are often more dissimilar to the actual answers than answers obtained over a \termBGW.
\reva{Compared with BGQP, RGQP  produces less accurate and complete results. However, its precision is ~70\% or higher and its recall is still much higher than Libkin.}

\BG{OLD: From the Synthetic data experiments as shown in figure~\ref{fig:runtime_uncert} and figure~\ref{fig:runtime_data} we know that certain answer approaches can have performance close to deterministic query processing. However possible answers are also useful in some context. We now compare the utility of certain answers with \abbrUADBs by comparing precision and recall of the query results. In the experiment we want to simulate a use-case workflow for the \abbrUADBs while keep the ground to truth for the measurement. We use three real world datasets including Income Survey, Buffalo News and Business License. We first introduce various amount of errors to the original full datasets to created a dirty instance of them. Then we repair those dirty datasets using data cleaning tools to create a cleaned instance for each dirty datasets. We run selection queries over each cleaned datasets to measure precision in terms of the percentage of result that is similar to ground truth result and recall in terms of the percentage of ground truth result that is correctly returned by querying over cleaned data. Figure~\ref{fig:utility} shows the precision and recall versus amount of uncertainty introduced to the dirty datasets. We can see that uncertain answers always have 100\% precision since certain answers will not include any possible results. This means certain answers will have bad performance in terms of recall as lots of results will be missing when comparing with ground truth. \abbrUADB is based on one possible world which preserves not only certain answers but also possible answers through queries. So although \abbrUADB has less precision comparing to certain answers, our approach achieved a good precision and recall balance. Also, depend on the data cleaning method, we maintained close to 80\% precision and recall when uncertainty level dropped down to 50\%.
}

% In this section we compare our approach with other related approaches discussed in previous sections. These approaches are potentially efficient as well. Our goal is to test the advantages and disadvantages of our approach against them.

%\mypar{Certain answers}

%%% Local Variables:
%%% mode: latex
%%% TeX-master: "uadb"
%%% End:

%%%%%%%%%%%%%%%%%%%%%%%%%%%%%%%%%%%%%%%%%%%%%%%%%%%%%%%%%%%%%%%%%%%%%%%%%%%%%%%%
\section{Conclusions and Future Work}\label{sec:conclusion}
We propose \abbrUADBs as a novel and efficient way to represent uncertainty as bounds on certain answers. Being based on $\semK$-relations, our approach applies
to the incomplete version of any data model that can be encoded as $\semK$-relations including set and bag semantics.
% to set and bag semantics versions of incomplete databases as well as incomplete versions of any extension of the relational model that can be encoded as K-relations.
% Even though computing the certain answers of a query is hard in general,
%We achieve efficient query evaluation over \abbrUADBs  by sacrificing completeness (we may mark certain tuples as uncertain), but not soundness (we never mark uncertain tuples as certain).
\abbrUADBs are backward compatible with many uncertain data models such as \termTIs{}, \abbrXDBs{} and \abbrCtables{}.
In future work, we plan to extend our approach with attribute level annotations to encode certainty at finer granularity and to support larger classes of queries, e.g., queries involving negation and aggregation.
Furthermore, it would be interesting to  study uncertain versions of semirings beyond sets and bags in more depth and explore new use cases such as inconsistent query answering and querying the result of data exchange.
% This will be particularly important for improving the output generated for set difference and aggregation.
%\BG{What other future work?}

% We developed corresponding rules to propagate such uncertainty taints through expression and queries. We implemented a query rewriting based approach where the rewritten query can automatically propagate uncertainty taint alone with the query execution to give annotated result. We proved the c-soundness of RA-relations and RAA-relations. Then we discussed some non-monotone involved query examples that can not be completely defined using our annotation and provided proper approaches to solve these problems. We aiming for a paper publication this fall based on our current achievements and potential future work ideas.

% Our findings can be extended to several interesting topics, we may develop a model that can provide more details on negation involved queries by propagating more information (i.e. completeness of the query). In terms of implementation this can provide user an option to have a more detailed result by sacrificing performance. For a longer vision, we can also try to generalize the model to more metadata types and build a framework toward attribute level provenance. It is also a interesting topic about how to properly transform a full probabilistic database in different representations (i.e. m-tables).\\

%%% Local Variables:
%%% mode: latex
%%% TeX-master: "uadb"
%%% End:

%%%%%%%%%%%%%%%%%%%%%%%%%%%%%%%%%%%%%%%%
\bibliographystyle{ACM-Reference-Format}
\bibliography{uadb.bib,oliver.bib}

%%%%%%%%%%%%%%%%%%%%%%%%%%%%%%%%%%%%%%%%
\appendix
% -*- root: uadb.tex -*-

\newcommand{\proofSeparator}{
\vspace{-3mm}
\textcolor[gray]{0.7}{\hrule}
\vspace{-1mm}
}

\section{Proofs}
\label{sec:proofs}

%%%%%%%%%%%%%%%%%%%%%%%%%%%%%%%%%%%%%%%%
\begin{proof}[Proof of Lemma~\ref{lem:poss-world-is-homomorphism}]
   Proven by substitution of definitions. % Recall that for a mapping $\aHom$ to be a semiring homomorphism it has to commute with the operations of semirings.
 % For example, $\aHom(0) = 0$.
%
% \resizebox{1\linewidth}{!}{
  % \begin{minipage}{1.05\linewidth}
  \begin{align*}
    \pwWorld{i}(\pwZero) &= \pwZero[i] = \zeroK
    &\pwWorld{i}(\pwOne) &= \pwOne[i] = \oneK
  \end{align*}\\[-8mm]
  \begin{align*}
    \pwWorld{i}(\pwe{k_1} \pwAdd \pwe{k_2}) &= (\pwe{k_1} \pwAdd \pwe{k_2})[i] = \pwe{k_1}[i] \addK \pwe{k_2}[i] \\ &= \pwWorld{i}(\pwe{k_1}) \addK \pwWorld{i}(\pwe{k_2})\\
    \pwWorld{i}(\pwe{k_1} \pwMult \pwe{k_2}) &= (\pwe{k_1} \pwMult \pwe{k_2})[i] = \pwe{k_1}[i] \multK \pwe{k_2}[i] \\ &= \pwWorld{i}(\pwe{k_1}) \multK \pwWorld{i}(\pwe{k_2}) \qedhere
  \end{align*}
% \end{minipage}
% }

\end{proof}
%%%%%%%%%%%%%%%%%%%%%%%%%%%%%%%%%%%%%%%%

\proofSeparator

%%%%%%%%%%%%%%%%%%%%%%%%%%%%%%%%%%%%%%%%
\begin{proof}[Proof of Theorem~\ref{theo:UA-preserve-approx}]
  We first prove that the possible world $\db = \pwWorld{i}(\pdb)$ for some $i$ encoded by $\uadb$ is preserved by queries. We have to show that for any query $\query$ we have $h_{\dOfName}(\query(\uadb)) = \pwWorld{i}(\query(\pdb))$. Since a \abbrUADB is the direct product of two semirings, $h_{\dOfName}$ is a homomorphism. Also by construction we have $h_{\dOfName}(\uadb) = \db$. Using these facts and Lemma~\ref{lem:poss-world-is-homomorphism} we get:
  \begin{align*}
    h_{\dOfName}(\query(\uadb))
    = &\query(h_{\dOfName}(\uadb))
    = \query(\db)\\
    = &\query(\pwWorld{i}(\pdb))
    = \pwWorld{i}(\query(\pdb))
  \end{align*}
%  We  show that $h_{\dOfName}$ is a homomorphism.
  For the same argument as above, $h_{\cOfName}$ is a homomorphism, so $\query(h_{\cOfName}(\uadb)) = h_{\cOfName}(\query(\uadb))$. Since according to Theorem~\ref{theo:UA-are-c-sound}  queries over labelings preserve the under\hyp{}approximation of certain annotations this implies the theorem.
\end{proof}
%%%%%%%%%%%%%%%%%%%%%%%%%%%%%%%%%%%%%%%%

\proofSeparator
%%%%%%%%%%%%%%%%%%%%%%%%%%%%%%%%%%%%%%%%
\begin{proof}[Proof of Lemma~\ref{lem:order-factors-through-operations}]
%   \detailedproof{
%   By the definition of $\ordK$, if $k \ordK k'$ then there exists $k''$ such that $k \addK k'' = k'$.  Hence:
%   \begin{align*}
%     k_1 \addK k_2 \ordK (k_1 \addK k_2) \addK ({k_1}' \addK {k_2}')
%     = k_3 \addK k_4
%   \end{align*}
%   A similar result holds for $\multK$ via distributivity.
% }{
  \proofpara{$\addK$}
    Based on the definition of $\ordK$, if $k \ordK k'$ then there exists $k''$ such that $k \addK k'' = k'$. Thus, $k_3 = k_1 \addK {k_1}'$ and $k_4 = k_2 \addK {k_2}'$ for some ${k_1}'$ and ${k_2}'$. Also, $(k_1 \addK k_2) \ordK (k_1 \addK k_2) \addK k''$ for any $k''$ and we get:
    \begin{align*}
      k_1 \addK k_2 \ordK (k_1 \addK k_2) \addK ({k_1}' \addK {k_2}')
      = k_3 \addK k_4
    \end{align*}
  \proofpara{$\multK$}
  The proof for multiplication $\multK$ is similar.

\noindent\resizebox{1\linewidth}{!}{
    \begin{minipage}{1.1\linewidth}
    \begin{align*}
      &(k_1 \multK k_2)\\
      \ordK &(k_1 \multK k_2) \addK (k_1 \multK {k_2}') \addK ({k_1}' \multK k_2) \addK ({k_1}' \multK {k_2}')\\
      = &(k_1 \addK {k_1}') \multK (k_2 \addK {k_2}') = k_3 \multK k_4 \qedhere
    \end{align*}
  \end{minipage}
  }
\end{proof}
%%%%%%%%%%%%%%%%%%%%%%%%%%%%%%%%%%%%%%%%

\proofSeparator

%%%%%%%%%%%%%%%%%%%%%%%%%%%%%%%%%%%%%%%%
\begin{proof}[Proof of Lemma~\ref{lem:super-all}]
  Recall that $\pwAdd$ and $\pwMult$ are defined element-wise and that $\pwCertain(\pwe{k}) = \GlbK(\pwe{k})$. Furthermore, $k_1 \ordK k_2$ iff $\exists k': k_1 \addK k' = k_2$. Consider an arbitrary $\pwe{k_1}, \pwe{k_2} \in \pwkDom$.  Let $k_{glb_1} = \GlbK(\pwe{k_1})$ and $k_{glb_2} = \GlbK(\pwe{k_2})$. Based on the definition of $\GlbK$ this implies that for any $i$, $k_{glb_1} \ordK \pwe{k_1}[i]$ which in turn implies that $\pwe{k_1}[i] = k_{glb_1} \addK k'$ for some $k'$. Analog, we can find a $k''$ such that $\pwe{k_2}[i] = k_{glb_2} \addK k''$.

  % Let $i$ and $j$ denote the position were $\pwe{k_1}$ and $k_2$ are minimal respectively, i.e., $\minK(\pwe{k_1}) = \pwe{k_1}[i]$ and $\minK(\pwe{k_2}) = \pwe{k_2}[j]$. If $\pwe{k_1}$ or $k_2$ is minimal at multiple positions, we arbitrarily choose one of these.

  \proofpara{Superadditivity}
Let $k_{glb} = \GlbK(\pwe{k_1} \pwAdd \pwe{k_2})$. We are going to prove that $k_{glb_1} \addK k_{glb_2}$ is a lower bound for $(\pwe{k_1} \pwAdd \pwe{k_2})$, i.e., that $\forall i \in \pwDom: k_{glb_1} \addK k_{glb_2} \ordK (\pwe{k_1} \pwAdd \pwe{k_2})[i]$. Since, $k_{glb}$ is the greatest lower bound this implies that $k_{glb_1} \addK k_{glb_2} \ordK k_{glb}$. Consider an arbitrary $i \in \pwDom$. Based on the discussion above we have:

\noindent\resizebox{1\linewidth}{!}{
  \begin{minipage}{1.05\linewidth}
\begin{align*}
&(\pwe{k_1} \pwAdd \pwe{k_2})[i] = \pwe{k_1}[i] \addK \pwe{k_2}[i] = k_{glb_1} \addK k' \addK k_{glb_2} \addK k''\\
= &(k_{glb_1}  \addK k_{glb_2}) \addK k' \addK k'' \geqK k_{glb_1} \addK k_{glb_2}
\end{align*}
\end{minipage}
}
% $$
% (\pwe{k_1} \pwAdd \pwe{k_2})[i] = \pwe{k_1}[i] \addK \pwe{k_2}[i] = k_{glb_1} \addK k' \addK k_{glb_2} \addK k''$$
% $$  = (k_{glb_1}  \addK k_{glb_2}) \addK k' \addK k'' \geqK &k_{glb_1} \addK k_{glb_2}$$
    % &\pwCertain(k_1 \pwAdd k_2)\\
    % = & \minK(k_1 \pwAdd k_2)\\
    % = & k_1[l] \addK k_2[l]\\
    % = & k_1[i] \addK {k_1}' \addK k_2[j] \addK \addK {k_2}'\\
    % = & (k_1[i] \addK k_2[j]) \addK ({k_1}' \addK {k_2}')\\
    % \geqK &(k_1[i] \addK k_2[j])\\
    % = & \minK(k_1) \addK \minK(k_2)\\
    % = &\pwCertain(k_1) \addK \pwCertain(k_2)
Thus, $k_{glb_1} \addK k_{glb_2}$ is a lower bound and since $k_{glb_1} = \pwCertain(\pwe{k_1})$ and $k_{glb_2} = \pwCertain(\pwe{k_2})$  it follows that $\pwCertain$ is superadditive:
\begin{align*}
\pwCertain(\pwe{k_1}) \addK \pwCertain(\pwe{k_2}) \ordK \pwCertain(\pwe{k_1} \pwAdd \pwe{k_2})
\end{align*}

% Let $l$ denote a position such that $\minK(k_1 \pwAdd k_2) = (k_1 \pwAdd k_2)[l]$. We know that $k_1[i] \ordK k_1[l]$ and $k_2[j] \ordK k_2[l]$. Thus, we can write $k_1[l]$ as $k_1[i] \addK {k_1}'$  and $k_2[l]$ as $k_2[j] \addK {k_2}'$ for some ${k_1}', {k_2}' \in \pwkDom$. We get:
%   \begin{align*}
%     &\pwCertain(k_1 \pwAdd k_2)\\
%     = & \minK(k_1 \pwAdd k_2)\\
%     = & k_1[l] \addK k_2[l]\\
%     = & k_1[i] \addK {k_1}' \addK k_2[j] \addK \addK {k_2}'\\
%     = & (k_1[i] \addK k_2[j]) \addK ({k_1}' \addK {k_2}')\\
%     \geqK &(k_1[i] \addK k_2[j])\\
%     = & \minK(k_1) \addK \minK(k_2)\\
%     = &\pwCertain(k_1) \addK \pwCertain(k_2)
%   \end{align*}

\proofpara{Supermultiplicativity}
We use an analogous argument to prove supermultiplicativity. Let $k_{glb} = \pwCertain(\pwe{k_1} \multK \pwe{k_2})$. We will prove that $k_{glb_1} \multK k_{glb_2}$ is a lower bound for $(\pwe{k_1} \pwMult \pwe{k_2})$ which implies supermultiplicativity. Consider $i \in \pwDom$:

\noindent\resizebox{1\linewidth}{!}{
  \begin{minipage}{1.15\linewidth}
\begin{align*}
&(\pwe{k_1} \pwMult \pwe{k_2})[i]
= (k_{glb_1} \addK k') \multK (k_{glb_2} \addK k'')\\
= &(k_{glb_1} \multK k_{glb_2}) \addK (k_{glb_1} \multK k'') \addK (k' \multK k_{glb_2})  \addK (k' \multK k'')\\
\geqK &(k_{glb_1} \multK k_{glb_2}) \qedhere
\end{align*}
\end{minipage}
}
\end{proof}
%%%%%%%%%%%%%%%%%%%%%%%%%%%%%%%%%%%%%%%%

\proofSeparator

\ifbool{ShowDetailedProofs}{
%%%%%%%%%%%%%%%%%%%%%%%%%%%%%%%%%%%%%%%%
\begin{proof}[Proof of Lemma~\ref{lem:k-rel-preserves-soundness-over-c-correct}]
  Consider an $\raPlus$ query $\query$ and $\pdb$ a $\pwK$-database.
  To prove preservation of c-soundness, we have to show that the result of $\query(\TUL)$ is a c-sound labeling for $\query(\pdb)$, i.e., that for any tuple $\tup$ we have $\query(\TUL)(\tup) \ordK \pwCertain(\query(\pdb), \tup)$.
  Recall that $\raPlus$ queries over $\pwK$-relations and queries over $\semK$-labelings  are defined using the semiring addition and multiplication operations. Hence, the claim \[\query(\TUL)(\tup) \ordK \pwCertain(\query(\pdb), \tup)\] follows immediately from the superadditivity and supermultiplicativity of $\pwCertain$ (Lemma~\ref{lem:super-all}) and the fact that $\TUL$ is a c-correct labeling. % which concludes the proof.
  % The annotation $\pwe{k}$ of a result tuple $\tup$ is computed as an expression combining the annotations of input tuples using addition and multiplication.
  % Wlog assume that $\pwe{k_1}, \ldots, \pwe{k_n}$ are the input annotations accessed by this expression.
  % Since $\pwCertain$ is superadditive and supermultiplicative, we know that evaluating this expression over the result of applying $\pwCertain$ to $\pwe{k_1}, \ldots, \pwe{k_n}$ yields a result that underestimates $\pwCertain(k)$.
  % The annotation of a tuple in a possible world $i$ of the result of $\query$ over $\pdb$ ($\query(\pwWorld{i}(\pdb)(\tup)$) is an expression consisting of addition and multiplication operations applied to annotations .
\end{proof}
%%%%%%%%%%%%%%%%%%%%%%%%%%%%%%%%%%%%%%%%
}

%%%%%%%%%%%%%%%%%%%%%%%%%%%%%%%%%%%%%%%%
\begin{proof}[Proof of Theorem~\ref{theo:UA-are-c-sound}]
% \detailedproof{
%   Since $\TUL$ is a c-sound labeling  for any tuple $\tup$ we have $\TUL(\tup) \ordK \pwCertain(\pdb, \tup)$.
%   The proof requires showing that for any tuple $\tup$ we have $\query(\TUL)(\tup) \ordK \pwCertain(\query(\pdb), \tup)$.
%   This result is obtained by first showing that $\query(\TUL')(\tup) \ordK \pwCertain(\query(\pdb), \tup)$ for some hypothetical c-correct labeling $\TUL'$ (follows from Lemma~\ref{lem:super-all} and~\cite{Green:2007:PS:1265530.1265535}), and then chaining to the theorem result by showing that $\query(\TUL)(\tup) \ordK \query(\TUL')(\tup)$ (follows from Lemma~\ref{lem:order-factors-through-operations}).
% }{
    Since $\TUL$ is a c-sound labeling,  for any tuple $\tup$ we have $\TUL(\tup) \ordK \pwCertain(\pdb, \tup)$. We have to prove that for any % tuple
    $\tup$ we have $\query(\TUL)(\tup) \ordK \pwCertain(\query(\pdb), \tup)$.
%
  % To prove this we will prove the following fact first: $\forall k_1, k_2 \in \doubleDom{\semK}: \cOf{k_1} \addK \cOf{k_2} = \cOf{k_1 \uaAdd{\semK} k_2}$ and $\forall k_1, k_2 \in \doubleDom{\semK}: \cOf{k_1} \multK \cOf{k_2} = \cOf{k_1 \uaMult{\semK} k_2}$. Consider $k_1 = (u_1, c_1)$ and $k_2 = (u_2, c_2)$.
%
  % \begin{align*}
  %   &\cOf{k_1} \addK \cOf{k_2}\\
  %   = &c_1 \addK c_2 \\
  %   = &\cOf{(u_1 + u_2, c_1 + c_2)}\\
  %   = &\cOf{ k_1 \uaAdd{\semK} k_2 }\\[3mm]
  %   %%%%%%%%%%%%%%%%%%%%
  %   &\cOf{k_1} \multK \cOf{k_2} \\
  %   = &c_1 \multK c_2\\
  %   = &\cOf{(u_1  u_2 \addK u_1  c_2 \addK c_1  u_2, c_1  c_2)}\\
  %   = &\cOf{ k_1 \uaMult{\semK} k_2 }
  % \end{align*}
%
  % Furthermore, based on the definition of natural order for any semiring $\semK$ the following holds for any $k_1, k_2, k_3, k_4 \in \kDom$ such that $k_1 \ordK k_3$ and $k_2 \ordK k_4$:
  % \begin{align*}
  %   (k_1 \addK k_2) \ordK (k_3 \addK k_4)\\
  %   (k_1 \multK k_2) \ordK (k_3 \multK k_4)
  % \end{align*}
  % To see why this is the case observe that we can write $k_3$ as $k_3 = k_1 \addK k'$  and $k_4 = k_2 \addK k''$ just like in the proof of Lemma~\ref{lem:super-all}. Then $k_3 \addK k_4 = (k_1 \addK k_2) \addK (k' \addK k'')$ and again by based on the definition of natural order this implies $(k_1 \addK k_2) \ordK (k_3 \addK k_4)$. The argument for multiplication is analog.
%
  For that we show that for any $k_1, k_2 \in \semK$ and $\pwe{k_3}, \pwe{k_4} \in \pwkDom$ such that $k_1 \ordK \pwCertain(\pwe{k_3})$ and $k_2 \ordK \pwCertain(\pwe{k_4})$, we have $(k_1 \addK k_2) \ordK \pwCertain(\pwe{k_3} \pwAdd \pwe{k_4})$ and $k_1 \multK k_2 \ordK \pwCertain(\pwe{k_3} \pwMult \pwe{k_4})$.

% \noindent\resizebox{1\linewidth}{!}{
%    \begin{minipage}{1.1\linewidth}
   \begin{align*}
     k_1 \addK k_2
    \ordK &\pwCertain(\pwe{k_3}) \addK \pwCertain(\pwe{k_4}) \tag{by  Lemma~\ref{lem:order-factors-through-operations}}\\
    \ordK &\pwCertain(\pwe{k_3} \pwAdd \pwe{k_4}) \tag{by Lemma~\ref{lem:super-all}}\\[1mm]
    k_1 \multK k_2
    \ordK &\pwCertain(\pwe{k_3}) \multK \pwCertain(\pwe{k_4}) \tag{by Lemma~\ref{lem:order-factors-through-operations}}\\
    \ordK &\pwCertain(\pwe{k_3} \pwMult \pwe{k_4})  \tag{by Lemma~\ref{lem:super-all}}
  \end{align*}
% \end{minipage}
% }

  Since by assumption the input labeling is c-sound, we have $\TUL(\tup) \ordK \pwCertain(\pdb, \tup)$ for any tuple $\tup$. Thus, based on the property we have just proven and the fact the $\semK$-relational query semantics is defined based on the operations of semirings only, this implies that for any tuple $\tup$: $\query(\TUL)(\tup) \ordK \pwCertain(\query(\pdb), \tup)$.
  Thus,  $\query(\TUL)$ is a c-sound labeling for $\query(\pdb)$.
\end{proof}
%%%%%%%%%%%%%%%%%%%%%%%%%%%%%%%%%%%%%%%%

\proofSeparator

%\iftechreport{
%%%%%%%%%%%%%%%%%%%%%%%%%%%%%%%%%%%%%%%%
\begin{proof}[Proof of Theorem~\ref{theo:h-homo-UA-c-table}]
  Let $\TUL = \doTULC(\pdb)$. 	A tuple $\tup$ is labeled as certain iff $\phi_{\pdb}(\tup)$ is in CNF and $\models \phi_{\pdb}(t)$, which means the expression $\phi_{\pdb}$ is a tautology. By definition of \abbrCtables, a tuple $\tup$ exists in a possible world if $\phi_{\pdb}(\tup)$ evaluates to true in that possible world. Thus, % $\models \phi_{\pdb}(t)$ means $\phi_{\pdb}(\tup)$ evaluates to true in every possible world, so
$\tup$ must exist in all possible worlds if $\phi_{\pdb}(\tup)$ is a tautology and $\TUL$ is c-sound.
% all tuples marked as certain in $\TULC(\pdb, \db)$ must exist in all possible worlds. So $\TULC(\pdb, \db)$ is c-sound.
\end{proof}
%%%%%%%%%%%%%%%%%%%%%%%%%%%%%%%%%%%%%%%%
%}

\proofSeparator

%%%%%%%%%%%%%%%%%%%%%%%%%%%%%%%%%%%%%%%%
\begin{proof}[Proof of Theorem~\ref{theo:h-homo-UA-TUL-correct}]
  Trivially holds, since a tuple is certain iff it is not optional and has only one alternative. Even though multiple x-tuples may share an alternative, the independence of x-tuples guarantees that this does not lead to additional certain tuples.  % $\prob(\tup) = 1$.
\end{proof}
%%%%%%%%%%%%%%%%%%%%%%%%%%%%%%%%%%%%%%%%

%%%%%%%%%%%%%%%%%%%%%%%%%%%%%%%%%%%%%%%%
% \begin{proof}[Proof of Theorem]
% \end{proof}
%%%%%%%%%%%%%%%%%%%%%%%%%%%%%%%%%%%%%%%%

%%%%%%%%%%%%%%%%%%%%%%%%%%%%%%%%%%%%%%%%
% \begin{proof}[Proof of Theorem]
% \end{proof}
%%%%%%%%%%%%%%%%%%%%%%%%%%%%%%%%%%%%%%%%

%%% Local Variables:
%%% mode: latex
%%% TeX-master: "uadb"
%%% End:

\section{Datasets}\label{sec:datasets}

\textbf{Building Violations:} Building violations issued by Chicago's Department of Buildings from 2006 to the present.  \footnote{\url{https://data.cityofchicago.org/Buildings/Building-Violations/22u3-xenr}}
\textbf{Shootings in Buffalo:}  Shootings in Buffalo during the year 2016.  \footnote{\url{http://projects.buffalonews.com/charts/shootings/index.html}}
\textbf{Business Licenses:}  Current and active business licenses issued by the Department of Business Affairs and Consumer Protection.  \footnote{\url{https://data.cityofchicago.org/Community-Economic-Development/Business-Licenses-Current-Active/uupf-x98q}}
\textbf{Chicago Crime:}  Reported incidents of crime occurring from 2001 to present.  \footnote{\url{https://data.cityofchicago.org/Public-Safety/Crimes-2001-to-present/ijzp-q8t2}}
\textbf{Contracts:}  Contracts and modifications awarded by the City of Chicago since 1993.  \footnote{\url{https://data.cityofchicago.org/Administration-Finance/Contracts/rsxa-ify5}}
\textbf{Food Inspections:}  Inspections of restaurants and other food establishments in Chicago from January 1, 2010 to the present.  \footnote{\url{https://data.cityofchicago.org/Health-Human-Services/Food-Inspections/4ijn-s7e5}}
\textbf{Graffiti Removal:}  All graffiti removal requests, open and closed, since January 1, 2011.  \footnote{\url{https://data.cityofchicago.org/Service-Requests/311-Service-Requests-Graffiti-Removal/hec5-y4x5}}
\textbf{Building Permits:}  All types of structural permits in San Francisco from Jan 1, 2013-Feb 25th 2018.  \footnote{\url{https://www.kaggle.com/aparnashastry/building-permit-applications-data/data}}
\textbf{Public Library Survey:}  Over 25 years worth of research publications about the Public Libraries Survey.  \footnote{\url{https://www.imls.gov/research-evaluation/data-collection/public-libraries-survey/explore-pls-data/pls-data}}
\textbf{NHANES:} Family level information on income sources, monthly income, and family cash assets.  \footnote{\url{https://wwwn.cdc.gov/Nchs/Nhanes/2013-2014/INQ_H.htm}}

% \begin{figure*}
% {\tiny
%   \begin{tabular}{rp{7cm}p{8cm}}
% \toprule
% Dataset & Description & URL\\ \midrule
% Building Violations &Building violations issued by Chicago's Department of Buildings from 2006 to the present. & {\url{https://data.cityofchicago.org/Buildings/Building-Violations/22u3-xenr}}\\
% Shootings in Buffalo & Shootings in Buffalo during the year 2016. & {\url{http://projects.buffalonews.com/charts/shootings/index.html}}\\
% Business Licenses & Current and active business licenses issued by the Department of Business Affairs and Consumer Protection. & {\url{https://data.cityofchicago.org/Community-Economic-Development/Business-Licenses-Current-Active/uupf-x98q}}\\
% Chicago Crime & Reported incidents of crime occurring from 2001 to present. & {\url{https://data.cityofchicago.org/Public-Safety/Crimes-2001-to-present/ijzp-q8t2}}\\
% Contracts & Contracts and modifications awarded by the City of Chicago since 1993. & {\url{https://data.cityofchicago.org/Administration-Finance/Contracts/rsxa-ify5}}\\
% Food Inspections & Inspections of restaurants and other food establishments in Chicago from January 1, 2010 to the present. & {\url{https://data.cityofchicago.org/Health-Human-Services/Food-Inspections/4ijn-s7e5}}\\
% Graffiti Removal & All graffiti removal requests, open and closed, since January 1, 2011. & {\url{https://data.cityofchicago.org/Service-Requests/311-Service-Requests-Graffiti-Removal/hec5-y4x5}}\\
% Building Permits & All types of structural permits in San Francisco from Jan 1, 2013-Feb 25th 2018. & {\url{https://www.kaggle.com/aparnashastry/building-permit-applications-data/data}}\\
% Public Library Survey & Over 25 years worth of research publications about the Public Libraries Survey. & {\url{https://www.imls.gov/research-evaluation/data-collection/public-libraries-survey/explore-pls-data/pls-data}}\\
% NHANES& Family level information on income sources, monthly income, and family cash assets. & {\url{https://wwwn.cdc.gov/Nchs/Nhanes/2013-2014/INQ_H.htm}}\\
% \bottomrule
% \end{tabular}
% }
% \bfcaption{Real World Datasets}
% \label{table:realdata}
% \end{figure*}

%%% Local Variables:
%%% mode: latex
%%% TeX-master: "uadb"
%%% End:

%%%%%%%%%%%%%%%%%%%%%%%%%%%%%%%%%%%%%%%%
% APPENDIX
%%%%%%%%%%%%%%%%%%%%%%%%%%%%%%%%%%%%%%%%
% \clearpage
% \appendix
% \normalsize
% \input{glossary.tex}
% \input{addproofappendix.tex}

\end{document}